\documentclass[runningheads]{llncs}
\usepackage[T1]{fontenc}

\usepackage{wrapfig}
\usepackage{booktabs}   
\usepackage{subcaption} 

\usepackage{etoolbox}
\usepackage{float}
\usepackage[section]{placeins}
\usepackage{color}
\usepackage{colortbl}
\usepackage{cancel}
\usepackage{dashbox}
\usepackage[normalem]{ulem}
\usepackage{amsmath}
\usepackage{mathtools}
\usepackage{amssymb}
\usepackage{ textcomp }
\usepackage{makecell}
\usepackage{boldline}
\usepackage{tikz}
\newcommand{\stkout}[1]{\ifmmode\text{\sout{\ensuremath{#1}}}\else\sout{#1}\fi}
\definecolor{Red}{RGB}{255,0,0}
\definecolor{DeepGreen}{RGB}{5,102,8}
\definecolor{LGray}{gray}{0.9}
\definecolor{LCyan}{rgb}{0.88,1,1}

\usepackage{wrapfig}
\usepackage[frozencache,cachedir=.]{minted}
\usepackage{bbding}
\usepackage{listings} 
\usepackage{semantic}
\usepackage{tikz,pgf}
\usepackage{stmaryrd}
\usepackage{xcolor,colortbl}
\usepackage{ebproof}
\usepackage{relsize}

\usepackage{tikz}
\usepackage{xfrac} 
\usepackage{authorcomments}
\newcommand{\name}{\textsf{Hegel}}

\newcommand{\T}[1]{\mbox{$\xhookrightarrow{#1}$}}
\newcommand{\automata}{\textit{Liquid Tree Automata}\,}
\newcommand{\automaton}{\textit{Liquid Tree Automaton}\,}
\newcommand{\calculus}{\mbox{$\lambda_{\S{lta}}$}}

\usepackage{stmaryrd}

\newcommand{\eip}{\mbox{$\mathit{e}$}}

\definecolor{mygreen}{rgb}{0,0.6,0}
\definecolor{mygray}{rgb}{0.5,0.5,0.5}
\definecolor{mymauve}{rgb}{0.58,0,0.82}
\definecolor{amber}{rgb}{1.0, 0.75, 0.0}
\definecolor{bargreen}{rgb}{0.0, 0.5, 0.0}
\renewcommand{\S}[1]{\mbox{$\mathsf{#1}$}}

\lstnewenvironment{codetable}
  {\lstset{language=ML,
    basicstyle=\footnotesize,
    }%
  }
  {}
\usepackage{caption}

\AtBeginEnvironment{minted}{\let\itshape\relax}

\tikzoption{base font}{\def\tikz@base@textfont{#1}}
\tikzset{
  base font=\sffamily,
}
\usepackage{amsmath}
\usepackage{supertabular}
\usepackage{ifthen}
\usepackage{alltt}
\usepackage{color}

\usepackage{proof}
\usepackage{multirow}
\usepackage{multicol}
\usepackage{semantic}
\usepackage{cancel}
\usepackage{algpseudocode}
\usepackage[noend,noline]{algorithm2e}
\usepackage{pgfplots, pgfplotstable}

\usepackage[normalem]{ulem}

\begin{document}


\author{Ashish Mishra\inst{1} \and Suresh Jagannathan\inst{2}}
\authorrunning{A. Mishra and S. Jagannathan}
%
\institute{IIT Hyderbad, India  \email{mishraashish@cse.iith.ac.in}\\
\and
Purdue University, USA \email{suresh@cs.purdue.edu}}

%


\renewcommand{\ADD}[1]{#1}


\title{Liquid Tree Automata}
%
%
%
%
%
\maketitle              

\begin{abstract}
Component-based synthesis (CBS) aims to generate loop-free programs
from a set of libraries whose methods are annotated with
specifications and whose output must satisfy a set of logical
constraints, expressed as a query. The effectiveness of a CBS
algorithm critically depends on the severity of the constraints
imposed by the query. The more exact these constraints are, the
sparser the space of feasible solutions.  This maxim also applies when
we enrich the expressivity of the specifications affixed to library
methods.  In both cases, search must now contend with constraints that
may only hold over a small number of the possible execution paths that
can be enumerated by a CBS procedure.

\quad\ In this paper, we address this
challenge by equipping CBS search with the ability to reason about
\emph{logical similarities} among the paths it explores. Our setting
considers library methods equipped with refinement-type specifications
that enrich ordinary base types with a set of rich logical qualifiers
to constrain the set of values accepted by that type.

\quad\ For efficient representation and enumeration of this space, we
introduce a novel tree automata variant called \automata{} (LTA) whose
construction is driven by the typing rules of a refinement type
system. This allows us to leverage subtyping constraints over the
refinement types associated with enumerated terms to enable reasoning
about similarity among candidate solutions as search proceeds, using
this notion of similarity to eagerly merge LTA states.  By doing so,
we avoid exploration of semantically similar paths, leading to a
significantly improved search procedure.  We present an implementation
of this idea in a tool called \name\, and provide a comprehensive
evaluation that demonstrates \name's ability to synthesize solutions
to complex CBS queries that go well-beyond the capabilities of the
existing state-of-the-art.



\end{abstract}

\section{Introduction}
\label{sec:introduction}

{\scriptsize
\begin{figure*}[h!]
\begin{minted}[fontsize = \footnotesize, mathescape=true, escapeinside=&&]{ocaml}
&\textcolor{blue}{(a) A part of the library}&
(*take : (x : nat) $\rightarrow$ [a] $\rightarrow$ {v : [a] $\mid$ len (v) $\leq$ x $\vee$ len (v) = 0}*)
val take : int &$\rightarrow$& [a] &$\rightarrow$& [a]

(*splitAt : (x : nat) $\rightarrow$ (xs : [a]) $\rightarrow$ {v : (f : [a], s : [a]) 
$\mid$  len (f) $\leq$ x $\wedge$ (len (s) $\leq$ len (xs) - x)}*) 
splitAt : int &$\rightarrow$& [a] &$\rightarrow$& ([a],[a])

(*decr : (x : nat) $\rightarrow$  {v:int $\mid$ x = x - 1}*)
val decr : int &$\rightarrow$& int

(*fst : (x : ([a], [a])) $\rightarrow$  {v : [a] $\mid$ v = fst (x)}*)
val fst : ([a], [a]) &$\rightarrow$& [a]

(*snd : (x : ([a], [a])) $\rightarrow$  {v : [a] $\mid$ v = snd (x)}*)
val snd : ([a], [a]) &$\rightarrow$& [a]

(*drop : (x : nat) $\rightarrow$  (xs : [a]) $\rightarrow$ {v : [a] $\mid$ len (v) $\leq$ len (xs) - x}}*)
val drop : int -> [a] &$\rightarrow$& [a]

&\textcolor{blue}{(b) A functional query type}&
(* goal : (x:nat) $\rightarrow$  (y : nat) $\rightarrow$  (z : [a]) $\rightarrow$  { v : (f : [a], s : [a]) 
          $\mid$ len (f) $\leq$ x $\wedge$ (len (s) $\leq$ len (z) - y} *)
goal : int &$\rightarrow$& int &$\rightarrow$& [a] &$\rightarrow$& ([a], [a])

&\textcolor{blue}{(c) A few correct solutions.}&
&\textcolor{blue}{uncommented (black): unrefined, commented (green): refined.}&
(* fun x y z $\rightarrow$  ((take x (fst (splitAt y z))), snd (splitAt y z) ) *)
&fun& x y z &$\rightarrow$& splitAt y (drop x z)
&fun& x y z &$\rightarrow$& splitAt x (take y z)
&$\ldots$&
\end{minted}
\caption{Motivating Synthesis Problem}
\label{fig:motivation}
\vspace{-.3in}
\end{figure*}
}
Component-based synthesis (CBS) aims to generate loop-free programs
from a library of components, typically defined as methods provided by
an API. At the heart of any CBS implementation is a search problem
over a hypothesis space of programs that “glue” components together
using basic control primitives, such as conditionals and function
applications.  If the attributes defining the behavior of components
are not overly constrained, or when queries are reasonably general,
the search for a feasible solution can be tractable. However, as
specifications become more precise, the set of feasible programs
becomes a small fraction of the overall search space, making synthesis
significantly more challenging.  Intuitively, we can define CBS search
as a reachability analysis over a graph that relates candidate methods
based on their type or other similar defining attributes. For example,
a node in this graph associated with a method that has a particular
result type can be connected to any node corresponding to a method
that accepts an argument of this type. Such connections can be used by
the synthesizer to produce a candidate solution that connect inputs to
outputs through sequences of component applications.  Prior
work~\cite{sypet,tygus,ecta} has considered the construction of such
graphs using simple type-based specifications. In this paper, we
propose to allow richer query specifications in the form of refinement
types~\cite{JV21} that both decorate library methods and serve as the
basis for synthesis queries.

Recent advances in automated theorem proving and formal verification
have made it increasingly common to have libraries be equipped with
such rich specifications~\cite{vocal,fstar}, motivating synthesis
techniques that can effectively exploit and tame such
information. While expressive specifications using refinement types
have been used previously to guide synthesis
\cite{cobalt-tech,synquid}, these approaches did not address the
unique challenges that arise in a component-based synthesis setting,
where na\"ive enumerative synthesis techniques are ineffective.


The synthesis query in Fig.~\ref{fig:motivation} illustrates this
challenge. The synthesizer is given two inputs, the first is a library
of components, each specified by a type signature enriched with
refinement pre/postconditions, as comments.
Fig.~\ref{fig:motivation}(a) shows a representative fragment of such
a library; in practice, libraries may contain hundreds of components
spanning standard data structures and operations.  The second input is
a synthesis query (called \S{goal}), also expressed as a refinement
type, specifying both the desired input–output base types and semantic
constraints on the result (Fig.~\ref{fig:motivation}(b)).  The
synthesis task is to construct a composition of library components
that satisfies the query’s refinement specification.  Fig.~\ref{fig:motivation}(c)
shows solutions for the simple (unrefined) goal in uncommented black, and for
the refined query, commented in green.
\begin{wrapfigure}{r}{.20\textwidth}
\includegraphics[scale=.45]{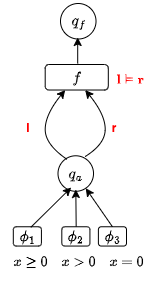}
\caption{An Example LTA} 
\label{fig:compare-single}
\vspace*{-.25in}
\end{wrapfigure}
A natural approach to handling such refined queries is to synthesize
programs using only the unrefined base types of the query and library
components~\cite{sypet,tygus,ecta} , and to subsequently apply a
verification check to discard candidates that violate refinement
constraints.  Although simple, this strategy is unlikely to be
effective at scale; not only is the solution space now very sparse,
thus making simple path exploration inefficient, but the (base) type
specifications of library methods are too weak to meaningfully guide
the synthesizer toward a correct solution.

To illustrate, we evaluated this synthesis problem (the commented
goal, plus a mid-sized component library of around 300 functions) on
two state-of-the-art systems that support such refined specifications.
Synquid~\cite{synquid} performs refinement type-driven synthesis by
enumerating well typed programs from the library, while
Hoogle+~\cite{hoogleplus} encodes refinements using input-output
examples, and does post-facto filtering. Both tools failed to
synthesize a solution for this query (e.g, the commented green
definition in Fig.~\ref{fig:motivation}(c)) within a timeout of
three minutes. The central challenge here is the ability to compactly
represent the space of well typed programs and to search this space
efficiently. Existing component based synthesis techniques struggle on
both fronts. In contrast, because our synthesis procedure, \name{},
constructs a compact representation of the well-typed search space to
enable efficient exploration, it was able to synthesize the solution
in 15 seconds.

\subsection{Solution: \automata}
Several data structures could in principle be used to represent large spaces of candidate programs, including
Version Space Algebras~\cite{vsa},
e-graphs~\cite{egg-synthesis} and Finite Tree Automata
(FTA)~\cite{tata}. In particular, FTA have been shown to be
effective in representing the space of untyped programs, satisfying a set of input-output
examples~\cite{Dillig23}, as well as simply-typed programs~\cite{tata,ecta}.
However, these representations are insufficient for synthesis under refined library and query specifications, where correctness depends on enforcing logical implication and semantic relationships between subterms.~\footnote{A detailed elaboration of these points can be found in the Appendix.} \\  
\begin{wrapfigure}{r}{.20\textwidth}
\vspace*{-.25in}
\includegraphics[scale=.45]{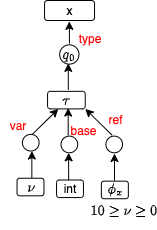}
\caption{A simple LTA for binding (\S{x} : \{ $\nu$ : \S{int} $\mid$ $10 \geq \nu \geq 0$ \})} 
\label{fig:baseqta}
\vspace*{-.25in}
\end{wrapfigure}
To address these limitations, we introduce a new data structure,
\automata{} (LTA), that allows us to capture such logical constraints.
An LTA supports a richer alphabet than other FTA variants by
incorporating logical qualifiers from decidable first-order theory
fragments. While it also allows constraints on its transitions similar
to other constrained tree automata~\cite{tata,ecta,cata}, it
additionally supports semantic constraints (e.g., logical entailment),
rather than being limited to syntactic equality or dis-equality
constraints~\cite{tata,ecta}.

For instance, Fig.~\ref{fig:compare-single} presents an LTA that
captures the space of terms represented by the following sentence
\[\{ f (t_1 , t_2)\ |\ t_1, t_2 \in \{ \phi_1, \phi_2, \phi_3 \} \wedge
\ t_1 \implies t_2 \}\] where both sub-trees are constrained using a
\textit{logical entailment} constraint (defined later) on the
transition (\textcolor{red}{{\sf l} $\vDash ${\sf r}}).  States are
depicted as circles and transitions as rectangles. Each transition may
have zero or more incoming states, and each incoming edge is annotated
with a position label, shown in red in the figure.  Here,
\textcolor{red}{l} and \textcolor{red}{r} are variables that capture a
specific position in the automaton and the constraint restricts which
choice of $t_1$ and $t_2$ are acceptable.  \NEW{The automaton accepts
  terms $f (\phi_2, \phi_1)$ and $f (\phi_3, \phi_1)$ since, in both
  cases, the constraint $\phi_i \implies \phi_j$ holds; however, it 
  rejects other syntactically valid terms like $f (\phi_1, \phi_2)$
  and $f (\phi_1, \phi_3)$ where the constraint does not hold.}

\paragraph{Representing Refinement Types using LTA.}
The above characteristics make LTAs effective at providing a compact
embedding for rich type structures. Fig.~\ref{fig:baseqta}
illustrates this, showing a transition in a trivial LTA representing a
variable \S{x} with refinement type \{ $\nu$ : \S{int} $\mid$ $10 \geq
\nu \geq 0$ \}.  The automaton rooted at state $q_0$ corresponds to
the refinement type \{ $\nu$ : \S{int} $\mid$ $10 \geq \nu \geq 0$ \},
while the overall LTA represents a binding \S{x} : \{ $\nu$ : \S{int}
$\mid$ $10 \geq \nu \geq 0$ \}.

\paragraph{Typing Semantics as LTA Transition Constraints.}
For component-based synthesis, an LTA must accept only
well-refinement-typed programs. LTA construction therefore mirrors
typing derivations, ensuring that ill-typed programs are excluded by
construction via LTA transition constraints.

Fig.~\ref{fig:semqta} shows a portion of the LTA for the example
library, embedding the semantics for function application. Transition
\S{app} corresponds to a function application expression in  a
standard refinement-typed calculus. Incoming state $q_f$ models the
set of all single-argument library functions and their types. We have
grayed out functions other than \S{decr} for clarity.  The
\textcolor{red}{type} edge for function \S{decr}, has two children,
one for its input \textit{argument} type and another for its
\textit{return} type, with location labels \textcolor{red}{in} and
\textcolor{red}{out} respectively.  State $q_a$ models the set of all
scalar arguments (showing only \S{x}).  The argument's type is
represented in the usual way for a base refinement type as shown in
Fig.~\ref{fig:baseqta}.  The application typing semantics requires
that the type of an actual argument be a subtype of the corresponding
formal argument type. This requirement is captured by constraints
\textcircled{a} and \textcircled{b}. Constraint \textcircled{b} enforces equality
of the underlying base types, while constraint \textcircled{a} ensures
logical implication between the corresponding refinement formulas.
\begin{wrapfigure}{r}{.50\textwidth}
\includegraphics[scale=.45]{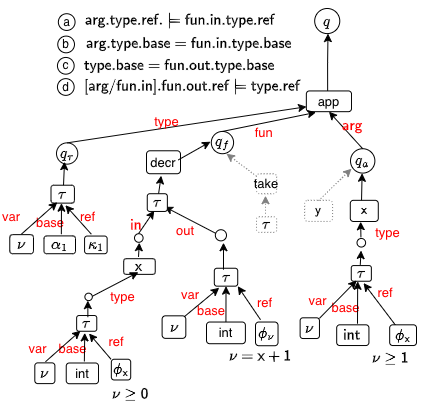}
\caption{An LTA fragment embedding a refinement typing semantics for function application.
} 
\label{fig:semqta}
\vspace*{-.4in}
\end{wrapfigure}
Additionally, a typing semantics often establishes logical entailment
constraints between formulas under a substitution. A good example is
the inferred type for an \S{app} term and the function's return
type. LTAs model this using the constraints \textcircled{c}, that equates
base types, and \textcircled{d}, that captures the logical entailment
(\S{fun.out.ref} $\vDash$ \S{type.ref}) under the usual mapping of
the actual argument to the formal for each possible choice of the
function and the argument, modeled using a compact substitution
relation over variables using a substitution over positions, shown as
\textcolor{red}{[} \S{arg}/\S{fun.in}\textcolor{red}{]}.

\paragraph{Efficient Pruning and Enumeration.}
We exploit LTA structure to develop a CBS algorithm that combines two
complementary reduction strategies.  First, we prune automaton states
that cannot participate in any well-typed program.  Second, we
identify semantic similarity between program fragments using
refinement subtyping, enabling eager pruning of logically similar
paths during search through merging of logically equivalent
subautomata.  Our key insight is that the intersection operation on
tree automata generalizes naturally to a notion of \textit{semantic}
intersection over LTAs, which enables efficient enumeration even when
feasible solutions are sparse.  This allows the synthesizer to prune
large regions of the search space that differ syntactically but are
equivalent or subsumed semantically.

We implement this approach in a tool called \name.  Our evaluation
shows that \name{} can synthesize programs for complex, highly
constrained CBS queries that are beyond the reach of existing systems,
while maintaining soundness and completeness.  This
  paper makes the following contributions:
\begin{itemize}

\item We address scalability and expressivity limitations in existing
  CBS frameworks, especially in the presence of queries with rich
  refinement-type specifications.
 
\item We introduce \emph{Liquid Tree Automata} (LTA), a tree automata
  variant that allows direct embedding of refinement-type
  specifications.

\item We develop a novel synthesis procedure over LTAs, incorporating
  pruning and similarity-based reductions, and prove the procedure
  sound and complete.

\item We implement our approach in \name{}, a CBS tool for OCaml
  programs; our evaluation results demonstrate significant scalability
  improvements over existing systems.

\end{itemize}

\section{Liquid Tree Automata}
\label{sec:qta}  

\begin{wrapfigure}{r}{0.3\textwidth}
  \vspace*{-.2in}
\small \begin{tabular}{l l l} 
& $\epsilon, p, p_i$ & $\in$ \emph{Position}  \\
$\psi_a$  & $ \in \textit{Atoms}$ ::=& \S{true} $\mid$ \S{false} \\

& $\mid$ $p$ = $p$  & (\textit{Syntactic}) \\
& $\mid$ $p$ $\vDash$ $p$ & (\textit{Semantic}) \\ 
$\theta$ & $\in$ \emph{Subs} := & [$\overline{p/p}$]  \\
$\psi$ & $\in \Psi$ =::&  \\
& $\psi_a$ $\mid$ $\neg$ $\psi$ & \\
& $\mid$ $\psi$ $\wedge$ $\psi$ $\mid$ $\psi \lor \psi$ & \\
& $\mid$ $\theta.\psi$ & \\
$\sigma$  & $\in Schema$ ::=& $\star$ = $\star$ $\mid$ $\star$ $\vDash$ $\star$\\ 
\end{tabular}%
\vspace*{-.1in}
\caption{LTA Constraints}
\label{fig:constraints}
\vspace{-.25in}
\end{wrapfigure}
The target language of our synthesizer ($\calculus$) is a standard
A-normalized~\cite{flanagan} call-by-value refinement-typed
$\lambda$-calculus~\cite{liquidextended} with constructors, constants
and variables, conditional expressions, and function abstraction and
application. To simplify the presentation, we assume all variables
have a single unique binding-site.  $\lambda_{\S{lta}}$ types include
standard base types like \S{int, bool} etc., along with algebraic
types like \S{lists} and \S{trees} over these base types and type
variables $\alpha_1, \alpha_2$, etc. Refinement types $\tau$, include
\emph{base refinements} and \emph{arrow refinements}. A base
refinement \{ $\nu : \S{t} \mid \phi$ \} qualifies a term of base
type \S{t} with a refinement qualifier $\phi \in \Phi$. An arrow
refinement refines a function type, where the argument \S{x} can occur
free in the qualifier. The set of qualifiers ($\Phi$) consist of
first-order predicate logic formulae over base-typed variables along
with method predicates ($Q$), which are user-defined, uninterpreted
function symbols such as {\sf len} and {\sf ord} over lists used in
our motivating example. A type context $\Gamma$
records term variables and library functions $g$ with their types. It
also records a set of propositions relevant to a specific context.

\begin{definition}[Positions in a term]
\label{def:pos-term}
A position $p$ in a term $t$ is of the form $i.j.k....n$, a sequence
of positive integers describing a path from the root of $t$ to a
sub-term. This describes what symbols are present at each position,
relative to the root.
\end{definition}
For easier comprehension, we give human-readable
labels to each number in a position. For example, consider
Fig.~\ref{fig:semqta} - the sequences used in the
constraints like \S{arg.type.base = fun.type.in.base} are
positions. \S{type.ref} is a synonym for position \S{1.3},
\S{fun.in.type.base} for \S{2.1.1.2}, etc.

\paragraph{\bf Constraints in \automata}
An LTA constraint $\psi$ is a predicate on terms in
\calculus. It is defined inductively over positions and Boolean
connectives, as shown in Fig.~\ref{fig:constraints}.
A valid \textit{atomic constraint} $\psi_a$ includes Boolean constants
like \S{true} and \S{false}, as well as \textit{syntactic equality}
between positions, given by $p$ = $p$ and
\textit{semantic entailment}  over positions, given by
$p \vDash p$. 
A \textit{constraint} $\psi$ is either a $\psi_a$ or a
constraint generated using Boolean connectives over other constraints.
{In several cases, these constraints require variable substitution at one position with another; this is captured using a list of substitutions \emph{Subs} ($\theta$), and a constraint $\theta.\psi$}.
Consequently, we can also classify \textit{atomic constraints} into
\textit{kinds} using two constraint schemas,
$\star$ = $\star$ and $\star$ $\vDash$ $\star$, respectively.

\begin{definition}[Liquid Tree Automata]
 A \textit{Liquid Tree Automata}, $\mathcal{A}$ defined over a finite ranked alphabet $\mathcal{F}$ derived from \calculus, is a tuple ($Q$, $\mathcal{F}$, $Q_f$, $\Delta$), where:
 \begin{itemize}
     \item $Q$ is a finite set of states.
     \item $Q_f \subseteq Q$ is a set of final states.
     \item $\Delta \subseteq Q^n \times \mathcal{F} \times \Psi \mapsto Q$, is a set of constrained \emph{transitions}. Each transition rule is of the form $f (q_1, q_2,...q_n) \T{\psi} q$,
  where $f \in \mathcal{F}$ with arity $n$, and a set of
  states $q_1, q_2, \ldots q_n \in Q$ and $\psi \in \Psi$ is a valid constraint. Here $q$ is the \textit{target state}.
 \end{itemize}
\end{definition}


\paragraph{\bf Language of a \automaton $\denotation{\mathcal{A}}$}
The language accepted by a LTA $\mathcal{A}$, is the set of all terms
in \calculus with some successful run of $\mathcal{A}$. This is, in fact, the set of all well-typed \calculus terms, constructed using the
methods found in a library. 
\begin{wrapfigure}{l}{.45\textwidth}
  \vspace*{-.2in}
  \small
 \centering 
\begin{tabular}{l  l l} 
$\denotationNode{q}$ &::=  $\bigcup_{i}$ ($\denotationEdge{\delta_i}$) \\ &  where $\delta_i$ = \\
& ($f ({q_i}_1, {q_i}_2$,\ldots,${q_i}_n)$ $\T{\psi}$ $q$) \\
&  \\
$\denotationEdge{\delta}$  &::=  \{ $f$ 

$\overline{t_i}$ $\mid$  $t_i$ $\in \denotationNode{q_i}$,$f$ 

$\overline{t_i} \vDash \psi$, \\ 
& $i \in [1\ldots n]$ \}, where \\
&$\delta$ = ($f ({q}_1, {q}_2$,\ldots,${q}_n)$ $\T{\psi} q$) \\
$\denotationEdge{\delta_{\bot}}$ &  := $\varnothing$ \\
& \\
$\denotation{\mathcal{A}}$  ::= & $\bigcup_{i}$ \{$\denotationNode{q_i}$ $\mid$ $q_i \in Q_f$\}\\
\end{tabular}%
\caption{LTA denotations.}
\label{fig:denotation}
\vspace*{-.3in}
\end{wrapfigure}
Formally, we define the language accepted
by $\mathcal{A}$ using its denotation $\denotation{\mathcal{A}}$ (see
Fig.~\ref{fig:denotation}). The denotation of a state $q$,
$\denotationNode{q}$ is the set union of the denotations of each of
the transitions $\delta_i$, $\denotationEdge{\delta_i}$, for which $q$
is a target state. The denotation of a transition $\delta$
builds a set of all terms, using the symbol at the current transition
($f$) and terms in the denotation of states incoming in $\delta$, 
filtering all terms that do not satisfy the transition
constraint $\psi$. Symbols $f \in \mathcal{F}$ include all \calculus terms including variables (\S{x}), constants (\S{c}),  conditional expressions ( \S{if} b \ then \ e \ else \ e), function abstractions and applications, and 
types ($\tau$). 

Intuitively, the satisfaction of a constraint by a
term $t \vDash \psi$ maps syntactic equality constraints to equality
of symbols and semantic entailment between qualifiers to logical
entailment of FOL formulas.\NEW{We also have a special bottom transition $\delta_{\bot}$, whose denotation is an empty set}. Since an LTA can have multiple final
states given by the set $Q_f$, the language it accepts is the union of
the denotation for all its final states.

\section{Synthesis using Liquid Tree Automata}
\label{sec:synthesis}
We formalize synthesis using LTAs by defining a consistency relation
between a typing environment  $\Gamma$ and a LTA $\mathcal{A}$.

\NEW{\begin{definition}[Consistency between LTAs and Type Environments]
A type environment $\Gamma$ is \textit{consistent} with an LTA $\mathcal{A} = $ ($Q$, $\mathcal{F}$, $Q_f$, $\Delta$) iff\\ $\forall e$, $\Gamma (e)$ = $\tau \iff 
\exists \mathcal{A}'$ such that $\mathcal{A}'$ is a sub-automaton~\footnote{A sub-automaton is defined using a subset definition over states and transitions.} of $\mathcal{A}$ and $e \in \denotation{\mathcal{A}'}$.
\end{definition}}

\begin{definition}[Synthesis Problem with LTAs] Given a type environment $\Gamma$ that relates library functions $f_i = \lambda (\overline{x_{i,j}}). e_{f_i}$ with their refinement types
$f_i : \overline{(x_{i,j} : \tau_{i,j})} \rightarrow \{ \nu :  t_i \mid \phi_{i} \} \in \Gamma$, and a synthesis query  $\varphi = \overline{(y_i : \tau_i)} \rightarrow \{\nu : t \mid \phi \}$, 
a solution to a CBS problem is a LTA $\mathcal{A}$, such that forall all $e \in \denotation{\mathcal{A}}$, $\Gamma \vdash e : \varphi$ and $\Gamma$ is consistent with $\mathcal{A}$.
\end{definition}

\subsection{Synthesis Algorithm}
The main synthesis algorithm, {\sc LTASynthesize} is shown in
Algorithm~\ref{alg:enumerate}.  It takes as input an alphabet
$\mathcal{F}$, which includes symbols from \calculus and an annotated
library of functions, a synthesis query specification $\varphi$, and a
bound on the size of the terms ${k}$ to synthesize.

The algorithm works in two phases: (1) an \textit{exploration} phase
adds states and transitions, expanding the automata. The resulting LTA
is then pruned/reduced by (2) a reduction phase.  The algorithm also
keeps track of \textit{similar} but not yet reduced transitions
through an equivalence set $\mathcal{E}$.
The output of the algorithm is a pair consisting of (i) an
LTA, $\mathcal{A}_{\S{min}}$ for the synthesis query based on
$\mathcal{F}$ and the typing semantics of \calculus,
and (ii) a set of solution terms in \calculus, possibly using $\mathcal{F}$ that satisfies the query specification. 
The algorithm returns a failure value ($\bot$) if it cannot find a solution within the given max-depth $k$.

The algorithm begins (line~1) by initializing the similarity set
$\mathcal{E}$ to $\varnothing$ and constructing an initial LTA
$\mathcal{A}_0$ using the well-formedness procedure {\sc WF}.
Given the library $\mathcal{F}$ and an empty automaton
$\mathcal{A}_{\bot}$, {\sc WF} deterministically constructs an LTA
that represents all well-formed terms of size one consistent with the
typing rules in Fig.~\ref{fig:typing}, including the query
specification $\varphi$.  In particular, initialization introduces
states and transitions corresponding to query arguments, library
functions, base refinement types, and a distinguished final transition
encoding the top-level constraint of $\varphi$.

At line~2, the algorithm checks whether the language of
$\mathcal{A}_0$ is non-empty using procedure {\sc NEmpty}
(lines~14-15), which identifies final states whose denotation is
non-empty.  If such states exist, the algorithm extracts the set of
solution terms using the LTA denotation $\denotation{\cdot}$
(Fig.~\ref{fig:denotation}) and returns
$(\mathcal{A}_0, \denotation{\mathcal{A}_0})$ (line~3).
Otherwise, the algorithm invokes procedure {\sc Enumerate}
(lines~5-13), which performs iterative exploration and reduction of
the search space.  If {\sc Enumerate} is invoked at depth $k$, it
returns $\bot$, indicating synthesis failure.\\
When the depth of $\mathcal{A}$ is less than the bound $k$,
{\sc Enumerate} first enters an exploration phase by extending
$\mathcal{A}$ with new transitions using procedure {\sc Transition}
(line~6).  This step applies the typing rules in
Fig.~\ref{fig:typing} to construct larger well-typed terms.
\begin{wrapfigure}{r}{.5\textwidth}
  \vspace*{-.1in}
  \centering
  \hspace*{.17in}
  \begin{algorithm*}[H]
    \small
\SetAlgoNoLine
\SetNlSty{texttt}{(}{)}
\SetKwFunction{LTAsynthesize}{\textsc{LTASynthesize}}
\Indm\LTAsynthesize{$\langle \mathcal{F}, \varphi = \overline{(\mathsf{x_i} : \tau_i)} \rightarrow \{ \S{v} : \S{t} | \phi\}, k \rangle$}\\
\Indp
     \tcp{\textcolor{cyan}{Initialize}}
    \nl$\mathcal{A}_{0}$ $\leftarrow$ \textsc{WF} ($\mathcal{F}$, $\mathcal{A}_{\bot}$); $\mathcal{E} \leftarrow \varnothing$ \\
    \tcp{\textcolor{cyan}{Check solution in Initial $\mathcal{A}_0$}}
    \nl\If { $\hat{Q} = $ {\sc NEmpty} ($ \mathcal{A}_0$)}{  
             \nl {\bf return} ($\mathcal{A}_0$, $\bigcup_{q \in \hat{Q}} \denotation{\mathcal{A}_q}$)
    } 
    \tcp{\textcolor{cyan}{Iteratively explore-reduce-check}}       
    \nl {\bf return} \textsc{Enumerate} ($\mathcal{A}_0$, $\varphi$, $k$)\\[1mm]
\SetKwFunction{enumerate}{\textsc{Enumerate}}

\Indm\enumerate{$\langle \mathcal{F}, \mathcal{A}, \varphi = \overline{(\mathsf{x_i} : \tau_i)} \rightarrow \{ \S{v} : \S{t} | \phi\}, k \rangle$}\\
\Indp
        \nl \eIf{depth ($\mathcal{A}$) < k}    
        {  
           \nl $\mathcal{A}$ $\leftarrow$ \textsc{Transition} ($\mathcal{F}$, $\mathcal{A}$);\\ 
           \nl $\mathcal{A}$ $\leftarrow$ \textsc{Prune} ($\mathcal{A}$);\\ 
           \nl $\mathcal{E}$ $\leftarrow$ \textsc{Similarity} ($\mathcal{A}$, $\mathcal{E}$); \\ 
           \nl ($\mathcal{A}_{\S{min}}$, $\mathcal{E}$) $\leftarrow$ \textsc{Minimize} ($\mathcal{A}$, $\mathcal{E}$); \\ 
           \nl\If { $\hat{Q} = $ {\sc NEmpty} ($\mathcal{A}_{\S{min}}$)}{  
             \nl {\bf return} ($\mathcal{A}_{\S{min}}$, $\bigcup_{q \in \hat{Q}} \denotation{\mathcal{A}_q}$)
             }   
           \nl \textsc{Enumerate} ($\mathcal{A}_{\S{min}}$, $\varphi$, $k$)\\
        }{ 
                \nl {\bf return} $\bot$ \\[1mm]
         }
\SetKwFunction{nonemptyqta}{\textsc{NEmpty}}

\Indm\nonemptyqta{$\langle \mathcal{A} \rangle$}\\
\Indp
    \nl $\hat{Q}$ $ \leftarrow \{ q_f  \mid q_f \in Q_f, \denotationNode{q_f} \neq \varnothing$ \} \\
    \nl {\bf return} $\hat{Q}$
    
    \caption{Synthesis Algorithm.}
\label{alg:enumerate}
  \end{algorithm*}
  \vspace*{-.4in}
\end{wrapfigure}

The algorithm then enters a reduction phase.  It first applies
{\sc Prune} (line~7) to eliminate unreachable or ill-typed portions of
the automaton.  Next, similarity relations between transitions are
identified using {\sc Similarity} (line~8), producing an updated
equivalence set $\mathcal{E}$.  Finally, {\sc Minimize} (line~9)
applies the minimization rules ({\sc M-Trans}) and ({\sc M-LTA})
from Fig.~\ref{fig:similarity} to merge semantically redundant
states and transitions, yielding a reduced automaton
$\mathcal{A}_{\S{min}}$.

At line~10, the algorithm again checks whether the language of
$\mathcal{A}_{\S{min}}$ is non-empty.  If so, it returns
$(\mathcal{A}_{\S{min}}, \denotation{\mathcal{A}_{\S{min}}})$
(line~11).  Otherwise, the algorithm recurses on
$\mathcal{A}_{\S{min}}$, continuing the explore--reduce--check loop
until a solution is found or the depth bound is exceeded.

\subsection{LTA Construction}
The transitions $\Delta$ for a LTA $\mathcal{A}$ are constructed
using the \textbf{Well-formedness} and \textbf{Transitions} judgments given in Fig.~\ref{fig:typing}.
The latter judgment holds if, given library $\mathcal{F}$ and  automata $\mathcal{A}$,  a new n-ary transition  can be added to $\mathcal{A}$  corresponding to an n-ary symbol $f \in \mathcal{F}$, with $q_1, q_2, ...,q_n$ being the incoming states in the transition and $q$ being the target state, such that $\psi$ captures the typing constraints for the valid terms in the language of the transition $\denotationEdge{.}$ 
\footnote{For perspicuity,  we have elided several details about polymorphic transition construction and type inference; these mostly concern proper scope management within an LTA. 
The complete set of rules can be found in the Appendix.}
We begin by generalizing the definition of \emph{term at a position} (Definition~\ref{def:pos-term}) to LTAs. Given a state $q$ (or transition $\delta$), and a position $p$, we define \textit{LTA states at a position} (equivalently, LTA transitions at a position) $p$ from $q$ (or $\delta$), denoted as $q \blacktriangleright p$ (or $\delta \blacktriangleright p$):

\begin{definition}[Transitions at a position, $q \blacktriangleright p$]
For each $\delta_i$ = ($f ({q_i}_1, {q_i}_2$,\ldots,${q_i}_n)$ $\T{\psi}$ $q$).
$q \blacktriangleright \epsilon$ = \{$q$\} and $q \blacktriangleright j.p$ = $\bigcup_i (\delta_i \blacktriangleright j.p$), where $\delta \blacktriangleright j.p$ = $q_j \blacktriangleright p$ if $j \leq n$ (arity of $\delta_i$) and $\varnothing$ otherwise.
\end{definition}

\begin{figure*}[htbp]
\begin{flushleft}
\bigskip
{\bf Well-formedness}\quad \fbox{\footnotesize
    $\begin{array}{c} 
    \mathcal{F}, \mathcal{A} \vdash^{\S{wf}} f \in \mathcal{F} (\overline{q_i}) \T{\psi} q      
    \end{array}$
}
\end{flushleft}
\bigskip
\begin{minipage}{0.20\textwidth}
{\footnotesize
\begin{center}
\inference[{\sc wf-prim}]{\S{t} \in \S{T}_{\mathcal{F}} & q_{\S{t}} \notin Q}{\mathcal{F}, \mathcal{A} \vdash^{\S{wf}} \S{t} () \T{} q_{\S{t}}}
\end{center}
}
\end{minipage}
\begin{minipage}{0.20\textwidth}
{\footnotesize
\begin{center}
\inference[{\sc wf-pred}]{\phi \in \Phi_{\mathcal{F}} & q_{\phi} \notin Q}{\mathcal{F}, \mathcal{A} \vdash^{\S{wf}} \phi \T{} q_{\phi}}
\end{center}
}
\end{minipage}
\begin{minipage}{0.20\textwidth}
{\footnotesize
\begin{center}
\inference[{\sc wf-var}]{x \in \S{Vars}_{\mathcal{F}} & q_{x} \notin Q}{\mathcal{F}, \mathcal{A} \vdash^{\S{wf}} x \T{} q_{x}}
\end{center}
}
\end{minipage}
\begin{minipage}{0.4\textwidth}
{\footnotesize
\begin{center}
\inference[{\sc wf-base}]{(\tau \equiv \{ x : t \mid \phi \}) \in \tau_{\mathcal{F}} & q_{\tau} \notin Q} 
{\mathcal{F}, \mathcal{A} \vdash^{\S{wf}} \tau (q_x, q_t, q_{\phi}) \T{} q_{\tau}}
\end{center}
}
\end{minipage}
\bigskip
\begin{minipage}{0.4\textwidth}
{\footnotesize
\begin{center}
\inference[{\sc wf-arrow}]{(\tau_{\rightarrow} \equiv \tau_i \rightarrow \tau_j) \in \tau_{\mathcal{F}} \\
\tau_i, \tau_j \in \tau_{\mathcal{F}}  & q_{\tau_{\rightarrow}} \notin Q} 
{\mathcal{F}, \mathcal{A} \vdash^{\S{wf}} \tau_{\tau{\rightarrow}} (q_{\tau_i}, q_{\tau_j}) \T{} q_{\tau_{\rightarrow}}}
\end{center}
}
\end{minipage}
\begin{minipage}{0.4\textwidth}
{\footnotesize
\begin{center}
\inference[{\sc Q-goal}]{\varphi = \overline{(\mathsf{x_i} : \tau_i)} \rightarrow \tau & q_{\S{term}_k}, q_{\tau} \in Q \\ q_{\S{goal}} \in Q_f \\
\psi = \textsc{SubType} (q_{\S{term}_k}\blacktriangleright{\S{type}}, q_{\S{goal}}\blacktriangleright{\S{type}})
}
{\mathcal{F}, \mathcal{A}  \vdash^{\S{wf}} \S{goal} (q_{\tau}, q_{\S{term}_k}) \T{\psi} q_{\S{goal}}}
\end{center}
}
\end{minipage}

\begin{flushleft}
{\bf Transitions}\quad \fbox{\footnotesize
    $\begin{array}{c} \mathcal{F}, \mathcal{A} \vdash  f \in \mathcal{F} (\overline{q_i}) \T{\psi} q \\ 
    \end{array}$
}
\end{flushleft}
\bigskip

\begin{minipage}{0.6\textwidth}
{\footnotesize
\begin{center}
\inference[{\sc e-app}]{q_f, q_a \in Q &  \tau (\overline{q_i}) \T{} q_{\tau} \in \Delta\\
\theta = [q_{\S{a}}\blacktriangleright{\epsilon}/q_{f}\blacktriangleright{\S{in}}] \\
\psi = \textsc{SubType} (q_{\S{a}}\blacktriangleright{\S{type}}, q_{f}\blacktriangleright{\S{in}}) 
\wedge \\
\ \ \ \ 
\theta. \textcolor{black}{(}\textsc{SubType} (q_{f}\blacktriangleright{\S{out}}, q_{\S{app}}\blacktriangleright{\S{type}}) \textcolor{black}{)}}
{ \mathcal{F}, \mathcal{A} \vdash \S{app} \ (q_{\tau}, q_f, q_a) \T{\psi} q_{\S{app}}}
\end{center}
}
\end{minipage}
\begin{minipage}{0.25\textwidth}
{\footnotesize
\begin{center}
\inference[{\sc e-var}]{ x : \tau \in \mathcal{F} & q_x \notin Q}
{\mathcal{F}, \mathcal{A} \vdash x (q_{\tau}) \T{} q_x}
\end{center}
}
\end{minipage}

\bigskip

\begin{minipage}{0.4\textwidth}
{\footnotesize
\begin{center}
\inference[{\sc e-const}]{ \vdash c : \tau \in \mathcal{F} \\ q_c \notin Q}
{\mathcal{F}, \mathcal{A} \vdash c (q_{\tau}) \T{} q_c}
\end{center}
}
\end{minipage}
\begin{minipage}{0.4\textwidth}
{\footnotesize
\begin{center}
\inference[{\sc e-if}]{q_b, q_t, q_f \in Q & \tau (\overline{q_i}) \T{} q_{\tau} \in \Delta \\
\psi =  ((q_b \blacktriangleright {\S{ref}}) \wedge \textsc{SubType} (q_{t}\blacktriangleright {\S{type}}, q_{\S{if}}\blacktriangleright{\S{type}})) \wedge \\
\   \   \   \  (\neg (q_b \blacktriangleright{\S{ref}}) \wedge \textsc{SubType} (q_{f}\blacktriangleright{\S{type}}, q_{\S{if}}\blacktriangleright{\S{type}}))}{ \mathcal{F}, \mathcal{A} \vdash \S{if} \ (q_{\tau}, q_b, q_t, q_f) \T{\psi} q_{\S{if}}}
\end{center}
}
\end{minipage}

\bigskip

{\footnotesize
\begin{equation}
\label{eq:sub}
\textsc{SubType} \textcolor{red}{(} \delta_i, \delta_j \textcolor{red}{)} = 
        \begin{cases}
            \textcolor{red}{(} \tau_i (\overline{q_i}) \T{\psi_i} q_{\tau_i}, 
            \tau_j (\overline{q_j}) \T{\psi_j} q_{\tau_j}\textcolor{red}{)} & \S{i.type.t} = \S{j.type.t}  \\& \wedge \ \S{i.type.ref} \vDash \S{j.type.ref} \\
            \textcolor{red}{(}\tau_{\rightarrow_i} (\overline{q_i}) \T{\psi_i} q_{\tau_{\rightarrow_i}}, 
            \tau_{\rightarrow_j} (\overline{q_j}) \T{\psi_j} q_{\tau_{\rightarrow_j}}\textcolor{red}{)} &  \textsc{SubType} \textcolor{red}{(} \delta_j\blacktriangleright{\S{in}}, \delta_i\blacktriangleright{\S{in}}\textcolor{red}{)}   \\ & \wedge \  \textsc{SubType} \textcolor{red}{(} \delta_i\blacktriangleright{\S{out}}, \delta_j\blacktriangleright{\S{out}} \textcolor{red}{)} \\
            \textcolor{red}{(} \_, \_ \textcolor{red}{)} & \S{true}
        \end{cases}
\end{equation}
}
\caption{Selected rules for constructing transitions $\Delta$,  basis for {\sc WF} and {\sc Transition}.}
\label{fig:typing}
\end{figure*}

\subsubsection{Well-formedness Rules}



Given an alphabet $\mathcal{F}$ (library and query), a current LTA $\mathcal{A}$, and the well-formedness typing semantics of \calculus, we construct LTAs using judgments of the form
$\mathcal{F}, \mathcal{A} \vdash^{\S{wf}} f \in \mathcal{F}(\overline{q_i}) \T{\psi} q$,
which determine when a transition may be added to $\mathcal{A}$.
These rules characterize when leaf and composite transitions may be added for base types, predicates, variables, refinement types, arrow types, and polymorphic types, and directly mirror the well-formedness typing rules of \calculus.

To illustrate the behavior of these rules, Fig.~\ref{fig:wf-example} shows the application of the well-formedness and transition rules from Fig.~\ref{fig:typing} over a small example library of terms given below:
\begin{tabbing}
\noindent $\mathcal{F}$ =  \{\=\,\S{ f : (n : int)} $\rightarrow$ \S{(l : \{ v : [a]}\ |\  \S{len (v) > 0} \}) $\rightarrow$ \{ \S{v : [a]}\ |\ \S{len (v) > 0 }\}, \\
\>\S{xs : \{ v : [int]\ |\ len (v) > 0\} ; ys : [bool] \} }\}
\end{tabbing}
Assume that our query $\varphi$ is given by the refinement type:
\begin{tabbing}
\hspace*{.2in}(\S{xs} :  \{ v : [int] | len (v) > 0\}) 
$\rightarrow$ \{$\nu$ : [int] | len ($\nu$) > len (xs)\}
\end{tabbing}
\noindent and the value of $k$, the term size is 2.

Rule {\sc wf-prim} (the corresponding LTA construction is shown using the box labeled with \textcircled{3} in Fig.~\ref{fig:wf-example}) introduces a state $q_{\S{t}}$ and a nullary transition labeled $\S{t}$ for each primitive type $\S{t}$, like \S{char}, \S{int}, etc., in the language.
Rule {\sc wf-pred} similarly adds leaf transitions for
refinement formula $\phi$ in some annotation in $\mathcal{F}$ or in the
query. 
Rule {\sc wf-base} (shown with \textcircled{2}) introduces a state $q_{\tau}$ for each well-formed base refinement type $\tau = \{ x : t \mid \phi \}$ and adds a ternary transition with incoming states $q_x$, $q_t$, and $q_{\phi}$ corresponding to the bound variable, base type, and refinement formula, respectively. These states are provided by earlier applications of {\sc wf-var}, {\sc wf-prim}, and {\sc wf-pred}.

Rule {\sc wf-arrow}(\textcircled{1}) generates similar
transitions for each arrow refinement type, with a symbol
$\tau_{\rightarrow}$ and two incoming states for the argument-type and
the result-type for the arrow.
Rule {\sc Q-goal} introduces a distinguished final state $q_{\S{goal}}$ and a \S{goal} transition connecting sub-automata rooted at $q_{\S{term}_k}$, representing candidate terms of size $k$. The associated constraint $\psi$ enforces that the synthesized term’s return type is a subtype of the query’s annotated return type, as generated by {\sc SubType}.

Our extended \calculus and typing semantics also include \textit{type} and \textit{refinement} abstractions, allowing parametric polymorphism in the refinement types setting~\cite{relref,liquidextended}. Consequently, the above rules also extend naturally to these abstractions and their corresponding type and refinement applications.



\subsubsection{Expression Transition Rules}
Transition judgments are similar in structure to the well-formedness judgments, but simulate the refinement type judgments for the expressions and types in the \calculus. These rules specify how to add transitions
corresponding to \calculus expressions and types. Each
n-ary expression transition has n+1 incoming
states, with the state at the position zero capturing the sub-automata
for the possible types of the expression.\\
\begin{wrapfigure}{r}{.50\textwidth}
\vspace*{-.4in}
\includegraphics[scale=.60]{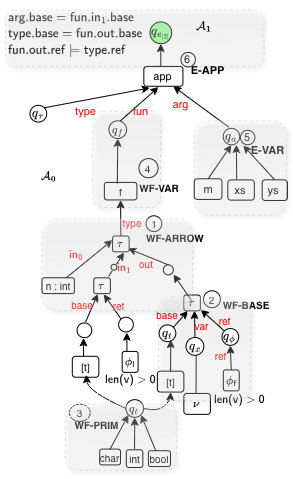}
\caption{Application of {\sc Well-formedness} and {\sc Transition} rules to construct initial LTA $\mathcal{A}_0$  and $\mathcal{A}_1$ for the example library.} 
\label{fig:wf-example}
\vspace*{-.4in}
\end{wrapfigure}
Rules {\sc E-var}(construction shown in Fig.~\ref{fig:wf-example}, \textcircled{5}) and {\sc E-const} introduce transitions for variables and constants together with their types. Rule {\sc E-var} (\textcircled{1}) applies to both scalar variables and variables bound to functions; accordingly, the associated type $\tau$ is represented either as a base refinement (via {\sc wf-base}) or as an arrow refinement (via {\sc wf-arrow}).




 

Rule {\sc E-app}(Figure~\ref{fig:wf-example}, \textcircled{6}) introduces transitions corresponding to function application expressions.
It assumes the states $q_f$ and $q_a$ are already present for
the function $f$ and argument $a$. 
It first constructs a transition for the application’s result type $\tau$ using the well-formedness judgments, with $q_{\tau}$ as the target state.

The inferred transition uses {\sf app} as its symbol and has three incoming states, corresponding to the result type ($q_{\tau}$), the function ($q_f$), and the argument ($q_a$).
The key aspect of this rule is the set of constraints $\psi$ associated with the transition. These constraints relate the three subautomata rooted at $q_{\tau}$, $q_f$, and $q_a$, and encode the standard refinement-typing semantics of function application~\cite{liquidextended}.
These constraints first apply the standard substitution of the possible formal arguments in the function with the actual arguments. This is done via creating a substitution $\theta$ between positions given formally as  [$q_{\S{a}}\blacktriangleright \epsilon /
q_{f}\blacktriangleright{\S{in}}$].
Second, the rule uses an auxiliary function {\sc SubType} (Equation~\ref{eq:sub}), which given two transitions, returns constraints sufficient to capture the subtype relation between their types. 
For example, $\psi$ in
{\sc E-app} captures two main relations. The first is a subtyping constraint
between a function's formal argument and the actual expression passed as
the argument ({\sc SubType} ($q_{\S{a}}\blacktriangleright{\S{type}},
q_{f}\blacktriangleright{\S{in}}$)). The second subtype relation is between
the function's result type and the type of the \S{app} expression, under the mapping $\theta$ between the formal and the actual arguments. This is given by
$\theta$. ({\sc SubType}($q_{f}\blacktriangleright{\S{out}},q_{\S{app}}\blacktriangleright{\S{type}}$)).

Note that
the {\sc E-App} rule is sufficiently general to allow synthesis to
support \textit{partial} (i.e., higher-order) function applications.
Rule {\sc E-if} introduces transitions for conditional expressions using subautomata for the Boolean condition (rooted at $q_b$) and the two branches (rooted at $q_t$ and $q_f$).
The associated constraints encode the standard refinement typing semantics of conditionals: the true branch is constrained under the assumption that the condition holds, while the false branch is constrained under its negation.


\section{LTA Reductions}
\label{sec:reductions}

\subsection{\textsc{Prune}}

The LTA formulation in Section~\ref{sec:qta} accepts only
well-typed terms from the \calculus. However, we can make the
synthesis procedure more efficient by eagerly reducing portions of the
automata (i.e. sub-automata) which are irrelevant to the construction
of any solution.  
\begin{figure*}[h!p]
\begin{flushleft}

{\bf Pruning}\quad\fbox{\footnotesize
     $\begin{array}{c} 
        \mathcal{A} \vdash \Delta \rightsquigarrow \Delta' \ \ \ \  \mid \ \ \ \ \mathcal{A} \vdash  (\delta_i, \delta_j) \rightsquigarrow^{\psi_a} \delta' 
      \end{array}$
           
} 

\end{flushleft}
\bigskip
{\footnotesize
\begin{minipage}{0.30\textwidth}
\begin{center}
\inference[{\sc p-trans}]{ \delta \equiv f (\ldots) \T{\psi_j \wedge \ldots} q & 
\psi_j \equiv p_1 (=\mid \vDash) p_2 \\
\delta_i \in \delta\blacktriangleright p_1 & \delta_j \in \delta\blacktriangleright p_2 \\
\Delta_{r} = \{ \delta_{i,r} \mid \mathcal{A} \vdash  (\delta_i, \delta_j)  \rightsquigarrow^{\psi_j} \delta_{i,r} \} 
} 
{\mathcal{A} \vdash \Delta \rightsquigarrow \Delta[\delta\blacktriangleright p_1 / \Delta_r]} 
\end{center}
\end{minipage}
%
\begin{minipage}{0.40\textwidth}
\begin{center}
\inference{\delta \equiv f (\ldots) \T{\psi_j \wedge \ldots} q & 
\psi_j \equiv p_1 = p_2 \\
\delta_{1} \in \delta\blacktriangleright p_1 & \delta_{2} \in \delta\blacktriangleright p_2 \\
\delta_{\S{r}} = {\sqcap}_{\S{Syntax}} (\delta_{1}, \delta_{2})
} 
 { \mathcal{A} \vdash  (\delta_{1}, \delta_2) \rightsquigarrow^{\psi_j} \delta_{\S{r}}}[{\footnotesize{\sc p-syn-eq}}]   
\end{center} 
\end{minipage}
\bigskip
\begin{minipage}{1.0\textwidth}
\begin{center}
\inference{\delta \equiv f (\ldots) \T{\psi_j \wedge \ldots} q & 
\psi_j \equiv \theta. p_1 \vDash p_2 \\
\delta_{1} \in \delta\blacktriangleright p_1 & \delta_{2} \in \delta\blacktriangleright p_2 &
\delta_{\S{r}} = {{\sqcap}^{\psi_j}}_{\S{Semantics}}(\delta_{1}, \delta_{2})
} 
 { \mathcal{A} \vdash   (\delta_{1}, \delta_2)  \rightsquigarrow^{\psi_j} \delta_{\S{r}}}[{\footnotesize{\sc p-sem-ent}}]  
\end{center} 
\end{minipage}
}
\bigskip

\begin{flushleft}
{\bf Similarity}\quad \fbox{\footnotesize
  $\begin{array}{c} 
\mathcal{A} \vdash^{\S{\bf sim}} \delta_i \lesssim \delta_j \ \ \ \  \mid \ \ \ \ \mathcal{A} \vdash \mathcal{E} \rightsquigarrow \mathcal{E}'
\end{array}$
}
\end{flushleft}
\bigskip
\begin{minipage}{0.4\textwidth}
{\footnotesize
 \inference[\footnotesize {\sc s-trans}]{ \psi_{<:} = \textsc{SubType} \textcolor{red}{(} \delta_i\blacktriangleright \S{type}, \delta_j\blacktriangleright \S{type} \textcolor{red}{)} \\  \delta_{\bot} \notin {{\sqcap}^{\psi_{<:}}}_{\S{Semantics}} (\delta_i\blacktriangleright \S{type} , \delta_j\blacktriangleright \S{type}) }
			  { \mathcal{A} \vdash^{\S{{\bf sim}}} \delta_i \lesssim \delta_j} 
}
\end{minipage}
\begin{minipage}{0.40\textwidth}
{\footnotesize
\inference{ (\delta_i, \delta_j) \notin \mathcal{E} \\
\mathcal{A} \vdash^{\mathbf{sim}} \delta_i \lesssim \delta_j } {\mathcal{A} \vdash \mathcal{E} \rightsquigarrow \mathcal{E} \cup \{(\delta_i, \delta_j)\}}[{\sc s-eq}]
}
\end{minipage}

\bigskip

\begin{flushleft}

{\bf Minimization}\quad\fbox{\footnotesize
     $\begin{array}{c} 
        (\mathcal{A}, \mathcal{E})  \vdash \Delta \rightsquigarrow \Delta' \ \ \ \  \mid \ \ \ \ \vdash (\mathcal{A}, \mathcal{E})   \rightsquigarrow (\mathcal{A}', \mathcal{E}') 
      \end{array}$
           
} 

\end{flushleft}
\bigskip
{\footnotesize
\begin{minipage}{0.40\textwidth}
\begin{center}
\inference[{\sc m-trans}]{\delta_i, \delta_j \in \Delta & (\delta_i , \delta_j) \in \mathcal{E} \\
\delta_i \equiv f (q_1, q_2, \ldots q_j \ldots q_n) \T{\psi_i} q_i  \\
\delta_j \equiv f' (q_1', q_2', \ldots q_m') \T{\psi_j} q_j  \\
\Delta' = \bigcup_k \textcolor{red}{\{} \delta_k[q_j / q_i] \\ 
 \   \    \ \mid \delta_k = \hat{f} (\hat{q_1}, {\bf q_j}, \ldots \hat{q_m} ) \hookrightarrow \hat{q_k}\textcolor{red}{\}} } 
{(\mathcal{A}, \mathcal{E}) \vdash \Delta \rightsquigarrow (\Delta \cup \Delta') \setminus 
\{\delta_j\}} 
\end{center}
\end{minipage}
\begin{minipage}{0.40\textwidth}
\begin{center}
\inference{ {\footnotesize \mathcal{A} \equiv (Q, \mathcal{F}, Q_f, \Delta}) \\ {\footnotesize  (\mathcal{A}, \mathcal{E}) \vdash \Delta \rightsquigarrow^{\star} \Delta'}} 
 { \vdash (\mathcal{A}, \mathcal{E}) \rightsquigarrow ((Q, \mathcal{F}, Q_f, \Delta'), \varnothing)}[{\footnotesize{\sc m-lta}}]   
\end{center} 
\end{minipage}
}
\bigskip

{\footnotesize
\begin{equation}
\label{eq:sem}
{{\sqcap}^{\psi}}_{\S{Semantics}}(\delta_i, \delta_j) = 
        \begin{cases}
             \delta_i & \sigma (\psi_j) = \theta. \star \vDash \star, \S{Symbol} (\delta_i) = \phi_i, \S{Symbol} (\delta_j) = \phi_j,  \\
                       & \S{Given \ the \ automaton} \ \mathcal{A} , \ s.t. \ \mathcal{R} (\Gamma, \mathcal{A}) \\
                       & \denotation{\Gamma} \wedge \denotation{\theta}. \denotation{\phi_i} \implies \denotation{\phi_j} \\
            \delta_{\bot} & \S{otherwise}
        \end{cases}
\end{equation}
}

\caption{Selective rules for similarity inference and LTA Minimization.}
\label{fig:similarity}
\end{figure*}



The inference rules for pruning irrelevant code, which underlie the
{\sc Prune} routine in Algorithm~\ref{alg:enumerate}, are shown in
Fig.~\ref{fig:similarity}. These rules are expressed using two
judgment forms.
The judgment $\mathcal{A} \vdash \Delta \rightsquigarrow \Delta'$
defines how a set of transitions $\Delta$ is reduced to a pruned set
$\Delta'$ within an LTA $\mathcal{A}$.
This reduction relies on a second judgment form,
$\mathcal{A} \vdash (\delta_i, \delta_j) \rightsquigarrow^{\psi_a} \delta'$,
which specifies how a pair of transitions $\delta_i$, $\delta_j$ are reduced under
an atomic constraint $\psi_a$ using syntactic or semantic intersection operation, yielding a possibly simplified transition
$\delta'$.

The {\sc p-trans} rule applies these atomic reductions to transition
sets.
Given a transition $\delta$ with constraint $\psi$, assumed to be a
conjunction of atomic constraints $\psi_j$ relating transitions at
positions $p_1$ and $p_2$, the rule updates the set of transitions at
position $p_1$ (denoted $\delta \blacktriangleright p_1$).
Each such transition is replaced by its reduced form, producing an
updated transition set $\Delta_r$ computed using the atomic reduction
rules described below.

Rule
{\sc p-syn-eq} handles syntactic equality constraints over positions
$p_1$ and $p_2$. It performs a syntactic intersection~\cite{ecta,tata} over the two
set of transitions at these position.  Syntactic intersection
($\sqcap_{\S{Syntax}}$) is a standard tree intersection
operation. Intuitively, it compares the two transitions for syntactic
equality of transition symbols while recursively intersecting each incoming state in the transition.
The rule updates each transition in $\delta\blacktriangleright p_1$, with its syntactic intersection with some transition in $\delta \blacktriangleright p_2$.

Rule {\sc p-sym-ent} handles semantic entailment constraints, where
$\psi_j$ has the form $\theta .\, p_i \vDash p_j$.
In this case, syntactic intersection is insufficient, since the
formulas associated with positions $p_1$ and $p_2$ cannot be compared
by syntactic equality alone.
We therefore introduce a semantic intersection operation,
denoted ${\sqcap^{\psi}}_{\S{Semantics}}$ shown in Equation~\ref{eq:sem},
which applies to transitions carrying refinement qualifiers.

Given two such transitions, the operation checks whether the refinement
formula at position $p_1$ (extracted using an auxiliary function \S{Symbol} over the transition) logically entails the formula at position
$p_2$ under the substitution $\theta$ and the interpretation of the environment $\Gamma$, shown as ($\denotation{\Gamma} \wedge \denotation{\theta}. \denotation{\phi_i} \implies \denotation{\phi_j}$).
If the entailment holds, the intersection yields the transition at the
more specific position $p_1$, otherwise, it returns a 
bottom transition $\delta_{\bot}$.
All bottom transitions are subsequently eliminated by normalization.
Intuitively, this operation compares transitions modeling
refinement formulas in two types, and keeps the sub-type transition~\footnote{
  A detailed example can be found in the accompanying Appendix.}.
\NNEW{The intersection operations form the basis of {\sc
  Prune} procedure in {\sc LTASynthesize}.}

\subsection{\textsc{Similarity} and \textsc{Minimize}}
\label{subsec:similarity}

The similarity inference rules are defined by the {\sc Similarity}
judgments in Fig.~\ref{fig:similarity}. The {\sc S-trans} rule
characterizes when two transitions are considered \emph{similar},
based on the relationship between their associated type sub-automata.

Similarity is determined by the constraint
$\psi_{<:} = \textsc{SubType}(\delta_i\blacktriangleright{\S{type}},
\delta_j\blacktriangleright{\S{type}})$, which checks whether the types
associated with the two transitions are related by the standard
subtyping relation induced by the typing semantics of \calculus.
Rule {\sc S-eq} enumerates pairs of transitions from $\mathcal{A}$ that
are not yet recorded in the similarity set, and adds to
$\mathcal{E}$ whenever the similarity constraint holds.

\emph{Similarity Reduction}. The similarity relation gives us a principled minimization strategy for
LTAs, in which transitions that are similar up to subtyping are merged,
retaining only the most specific representative.
The minimization rules in Fig.~\ref{fig:similarity} formalize this
process.

The first judgment form,
$(\mathcal{A}, \mathcal{E}) \vdash \Delta \rightsquigarrow \Delta'$,
defines how the transition set of an LTA is reduced using the similarity
relation $\mathcal{E}$.
As specified by Rule {\sc M-trans}, when two similar transitions are
identified, the transition corresponding to the supertype is removed,
while the transition corresponding to the subtype is retained.
All incoming edges that previously targeted the removed transition are
redirected to the retained one, preserving reachability within the
automaton.
The second judgment, formalized by Rule {\sc M-lta}, lifts transition
minimization to the level of entire LTAs.
It computes a minimized transition set $\Delta'$ by applying 
a transitive closure of the transition update rules.
The alphabet $\mathcal{F}$ of the automaton remains unchanged during
this process.
After minimization, the similarity set $\mathcal{E}$ is reset, since
all similar transitions have either been merged or eliminated.

Note that equivalence of terms (and of their corresponding sub-automata)
is strictly stronger than similarity.
As a result, similarity-based reduction also subsumes equivalence-based
pruning, eliminating both semantically equivalent and strictly
less-specific sub-automata.
This unified treatment enables aggressive reduction of redundant terms,
leading to efficient enumeration over a substantially reduced search
space.

\section{Handling Cycles}
\label{sec:cyclic}

Although our LTA definition is general and allows cycles, our
treatment thus far has deliberately excluded cycles for ease of
elucidation. However, the underlying refinement-typed language
$\lambda_{\S{lta}}$ often contain terms whose specifications require
both general parametric polymorphism (types depending on other
(non-refined types)), as well as abstract
refinements~\cite{abstractref}, i.e., refinement types parameterized
with refinement predicates.  Precisely capturing these specifications
will require cyclic edges in the LTA. For instance, cycles allow us to
model the space where a type variable $\alpha$ can be instantiated
with any type (both base and refined), and a refinement variable
$\phi$ to any well-formed refinement formula. This is shown
graphically in Fig.~\ref{fig:cycles}. \\
\begin{wrapfigure}{r}{.50\textwidth}
\includegraphics[scale=.55]{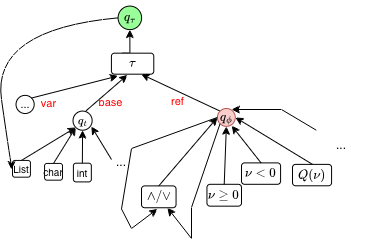}
\caption{Cycles in LTA for polymorphic types and abstract refinements. States in cycles are colored.}
 \label{fig:cycles}
\vspace*{-.3in}
\end{wrapfigure}

The state $q_t$ captures all
valid base-refinement types, where a polymorphic list constructor
\S{List} can be instantiated with any possible type (the cyclic edge
back to $q_t$). The refinement formula in $\tau$ in turn can be
instantiated with all valid refinement predicates captured by another
state $q_{\phi}$, with standard predicates like \{ $\nu \geq 0, \nu <
0$, etc.,\}, Method predicates ($Q(\nu)$), as well as conjunction and
disjunction of other predicates (cyclic edges from $q_{\phi}$).
 
Such cycles effectively represent an unbounded sequence of refinements
and types characterized by position constraints. Thus, any transition
constraint referring to a state/transition that is part of a cycle
would result in a logical formula that duplicates and accumulates
predicates for every potential iteration, making the formula unwieldy
for the solvers.  To maintain tractability, we impose a structural
restriction, conceptually similar to earlier works~\cite{ecta}, on the
LTA: the position constraints associated with any transition are
prohibited from referring to a state that participates in a cycle.  We
formally define this using a \textit{Dependency Graph} for an LTA, and
when are LTA constraints \textit{Well-formed}:
The acyclic constraint restriction enables us to prevent the infinite unrolling of logical obligations, ensuring that the resulting SMT queries are well-behaved, while still supporting the polymorphic types and abstract refinements in specifications.
Additionally, the above structural restriction on constraints allows us to directly extend our semantic intersection and pruning strategies to cyclic forms~\footnote{Additional details about the Cyclic LTA can be found in an accompanying Appendix}.

\section{Soundness and Completeness}
\label{sec:proofs}
For a given upper bound {\it k} on the size of programs being
synthesized, the {\sc LTASynthesize} algorithm is both sound and
complete assuming the
validity of each library function against their
specifications.\footnote{Proofs can be found can be found in the accompanying Appendix.}


\begin{theorem}[Soundness]
Given a type environment $\Gamma$ that relates library functions $f_i = \lambda (\overline{x_{i,j}}). e_{f_i}$ with their refinement types
$f_i : \overline{(x_{i,j} : \tau_{i,j})} \rightarrow \{ \nu :  t_i \mid \phi_{i} \} \in \Gamma$, and a synthesis query  $\varphi = \overline{(y_i : \tau_i)} \rightarrow \{\nu : t \mid \phi \}$, 
if {\sc LTASynthesize} ($\Gamma, \varphi, \S{k}$) =($\mathcal{A}_{\S{min}}$, \S{Terms} = $\mathrm{\{ }$e $\mid$ e $\in$ $\denotation{\mathcal{A}_{\S{min}}}$ $\mathrm{\}}$), then $\forall e \in \S{Terms}, \Gamma \vdash e: \varphi$, where $\Gamma$ is consistent with $\mathcal{A}_{\S{min}}$. 
\end{theorem}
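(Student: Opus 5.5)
The proof is by induction on the run of {\sc LTASynthesize}, using an invariant stated over LTAs rather than over individual terms. Let $\mathcal{I}(\mathcal{A})$ be the following property: for every state $q$ of $\mathcal{A}$ that stands for an expression (a state $q_x$, $q_c$, $q_{\S{app}}$, $q_{\S{if}}$, $q_{\S{term}_k}$), and every term $e \in \denotationNode{q}$ whose type child (position $\S{type}$) is $\tau$, we have $\Gamma \vdash e : \tau$; and $\Gamma$ is consistent with $\mathcal{A}$. Soundness then follows from three facts:
\begin{itemize}
\item $\mathcal{I}(\mathcal{A}_0)$ holds.
\item $\mathcal{I}$ is preserved by each of {\sc Transition}, {\sc Prune} and {\sc Minimize}.
\item Every term in the denotation of a final state $q_{\S{goal}}$ has type $\varphi$.
\end{itemize}
The returned \S{Terms} are exactly $\bigcup_{q\in\hat{Q}}\denotationNode{q}$ for final states, from line~3 or line~11. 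So the third fact, together with the invariant at the returning iteration, gives the theorem. Consistency is carried along in the invariant.

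\textbf{Base case and {\sc Transition}.} For $\mathcal{A}_0$ the terms are variables, constants and library functions. By {\sc wf-base}, {\sc wf-arrow}, {\sc e-var} and {\sc e-const}, each is attached exactly to its declared type in $\Gamma$, so the variable and constant typing rules apply directly.

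For {\sc Transition} we argue rule by rule, matching the transition constraint $\psi$ to the premises of the \calculus{} typing rule. The first step is a lemma on {\sc SubType} (Eq.~\ref{eq:sub}): if a term $t$ satisfies $\textsc{SubType}(\delta_i,\delta_j)$ in the sense of $t \vDash \psi$, then $\Gamma \vdash \tau_i <: \tau_j$ for the types read off $t$ at the two positions. This is proved by induction on the structure of the type. The base case unfolds to equality of base types plus $\denotation{\Gamma}\wedge\phi_i \implies \phi_j$, which is exactly the base subtyping rule. The arrow case is contravariant/covariant, exactly as the definition is written.

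With the lemma in hand, {\sc e-app} gives the premises of application typing:
\begin{itemize}
\item the argument is a subtype of the formal;
\item the result type, under $\theta$, is a subtype of $q_{\S{app}}\blacktriangleright\S{type}$.
\end{itemize}
This needs a substitution lemma showing that position-substitution $\theta$ denotes term-substitution $[a/x]$. From these premises subsumption yields the typing judgment. {\sc e-if} is handled the same way, using path-sensitive entailment.

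\textbf{Reductions.} Here we show that {\sc Prune} and {\sc Minimize} only shrink denotations, or replace a term by one with a more specific type. For {\sc p-syn-eq} and {\sc p-sem-ent}, the intersections either return one of the original transitions or $\delta_\bot$. The new denotation is therefore a subset of the old one, restricted to terms satisfying the atomic constraint, and a subset inherits the invariant.

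{\sc m-trans} is the genuinely non-monotone step: it redirects edges from $q_j$ to $q_i$ when $\delta_i \lesssim \delta_j$ by {\sc s-trans}. It can therefore create new terms, in which a subterm of the subtype replaces one of the supertype. Soundness here needs a \emph{substitution-under-subsumption} lemma. Suppose $C[e_j]$ is well typed, $e_j:\tau_j$, and $e_i:\tau_i$ with $\tau_i<:\tau_j$. Then $C[e_i]$ is well typed. This holds because every constraint on the enclosing transitions tests the subterm only through entailments in which its type sits on the left (argument subtyping, branch subtyping). Strengthening the left side preserves those entailments, by transitivity of subtyping.

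\textbf{Expected obstacle and final step.} I expect the main obstacle to be the redirection lemma when the replaced subterm sits in an application argument. The argument's value also flows into the result refinement through $\theta$, so the left-side-monotonicity argument does not apply directly. I would first handle this using the standard refinement-type substitution lemma: the result type of $C[e_i]$ is still checked against $q_{\S{app}}\blacktriangleright\S{type}$ by the re-evaluated constraint, since redirection keeps $\psi$. If that fails, I would weaken the claim to those terms that still satisfy the retained transition constraints, which is exactly what the denotation in Fig.~\ref{fig:denotation} enforces.

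Finally, {\sc Q-goal} requires $\textsc{SubType}(q_{\S{term}_k}\blacktriangleright\S{type}, q_{\S{goal}}\blacktriangleright\S{type})$. Combined with the invariant for the term and with the query arguments bound in $\Gamma$, subsumption gives $\Gamma\vdash e:\varphi$.
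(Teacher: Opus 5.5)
Your decomposition matches the paper's proof. The paper proves a Denotation Correctness lemma by induction over the construction rules; this is your base case and \textsc{Transition} step, with \textsc{e-app} read off $\psi$ as argument subtyping plus $\theta$-instantiated result subtyping, then \textsc{T-App} and subsumption. It then posits that \textsc{Prune} and \textsc{Minimize} preserve that property, and closes with a case split on line~3 versus line~11 via the \textsc{Q-goal} state. You are more careful than the paper about \textsc{Minimize}. The paper asserts $\llbracket \mathcal{A}_r \rrbracket \subseteq \llbracket \mathcal{A} \rrbracket$ there and leaves both preservation lemmas unproved. You correctly note that the redirected transitions $\delta_k[q_j \mapsto q_i]$ of \textsc{m-trans} can introduce terms absent from the old denotation.

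The step that would fail is your substitution-under-subsumption lemma, and not only in the argument position you flag.
\begin{itemize}
\item In \textsc{e-if}, the condition's refinement $q_b \blacktriangleright \S{ref}$ is assumed for the then-branch and assumed \emph{negated} for the else-branch. Replacing the condition by a term of stronger type therefore weakens $\neg\phi_b$ and can break the else-branch entailment, so ``the subterm's type only sits on the left'' is false there.
\item In the argument position, take $f : (x:\S{int}) \rightarrow \{\nu:\S{int} \mid \nu = x\}$ applied to $y$ at recorded type $\{\nu:\S{int} \mid \nu = y\}$. Replacing $y$ by some $z$ whose stronger type says nothing about $y$ turns the $\theta$-instantiated check into $\nu = z \Rightarrow \nu = y$, which fails.
\end{itemize}

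So drop the lemma and make your fallback the main argument. Your \textsc{Transition} reasoning is purely local: it uses only a transition's symbol, its position-based constraint $\psi$ (re-resolved on whatever subterm now occupies each position), and the invariant on the children. It therefore applies verbatim to redirected transitions, which keep both symbol and constraint. Ill-typed combinations simply fail $\psi$ and are excluded by the denotation.

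The cleanest invariant is therefore syntactic: every expression transition carries the symbol and constraint its \textsc{e-*} or \textsc{Q-goal} rule assigned. \textsc{Transition} and \textsc{m-trans} preserve it by construction. \textsc{Prune} preserves it because it only deletes transitions or replaces them at constrained type/qualifier positions. A single induction on terms then types everything in the final denotation.

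Separately, you say consistency is ``carried along'' without checking it. Under the paper's biconditional definition, \textsc{Prune} can break it, e.g.\ when it removes the whole transition for $g$ in the appendix example. The paper's proof does not establish consistency either, so do not count it among the properties each phase preserves.
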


\begin{theorem}[Completeness]
\label{thm:completeness}
    Given a type environment $\Gamma$ that relates library functions $f_i = \lambda (\overline{x_{i,j}}). e_{f_i}$ with their refinement types
$f_i : \overline{(x_{i,j} : \tau_{i,j})} \rightarrow \{ \nu :  t_i \mid \phi_{i} \} \in \Gamma$, and a synthesis query  $\varphi = \overline{(y_i : \tau_i)} \rightarrow \{\nu : t \mid \phi \}$, 
if {\sc LTASynthesize} ($\Gamma, \varphi, \S{k}$) = $\bot$, then $\nexists$
    a term $e \in \denotation{\mathcal{A}_{\S{complete}}}$ containing fewer than $k+1$ library function calls, such that
    $\Gamma \vdash$ $e$ : $\varphi$ and $\Gamma$ is consistent with $\mathcal{A}_{\S{complete}}$. Where $\mathcal{A}_{\S{complete}}$ is the complete LTA of size $k$, for the given $\Gamma$, generated without any reduction.
\end{theorem}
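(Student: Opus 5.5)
We argue by contraposition: assume there is a term $e \in \denotation{\mathcal{A}_{\S{complete}}}$ with at most $k$ library calls such that $\Gamma \vdash e : \varphi$, and show that {\sc LTASynthesize} ($\Gamma, \varphi, \S{k}$) does not return $\bot$. The plan has two halves. First, the exploration phase alone (ignoring {\sc Prune}, {\sc Similarity} and {\sc Minimize}) is exhaustive up to size $k$. Second, each reduction step preserves the existence of a well-typed inhabitant of every final state whose denotation was non-empty before the step. This preservation is up to replacing a subterm by a more specific (subtype) one, not necessarily $e$ itself.

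For the first half, we proceed by induction on the number $j \le k$ of library calls in a subterm. The claim is that, after the $j$-th call to {\sc Transition}, every well-typed subterm of size at most $j$ built from $\mathcal{F}$ lies in the denotation of some state of the unreduced automaton. The base case is {\sc WF}: rules {\sc wf-prim}, {\sc wf-pred}, {\sc wf-var}, {\sc wf-base}, {\sc wf-arrow}, {\sc e-var} and {\sc e-const} introduce a state for every variable, constant and type. The inductive step inverts the typing derivation of an application or conditional. Its premises, namely $\Gamma \vdash a <: \tau_{in}$ and the result subtyping under the substitution $[a/x]$, are exactly what the constraint $\psi$ generated by {\sc e-app}/{\sc SubType} checks, with $\theta = [q_a\blacktriangleright\epsilon/q_f\blacktriangleright\S{in}]$ realising the substitution. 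The argument for {\sc e-if} is analogous. At depth $k$, rule {\sc Q-goal} places $e$ in $\denotationNode{q_{\S{goal}}}$, because $\Gamma \vdash e:\varphi$ discharges its {\sc SubType} constraint. This is essentially the statement that $\mathcal{A}_{\S{complete}}$ is the depth-$k$ unreduced automaton, together with consistency.

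For the second half, we prove three invariant lemmas.
\begin{itemize}
\item \emph{{\sc Prune} removes no accepted term.} Rules {\sc p-syn-eq} and {\sc p-sem-ent} replace the transitions at $\delta\blacktriangleright p_1$ by their intersection with those at $p_2$. A transition is dropped (becomes $\delta_\bot$) only when the syntactic equality or semantic entailment demanded by the atomic constraint fails. Any term using it would therefore violate $\psi$ and is already absent from $\denotationEdge{\delta}$ by Fig.~\ref{fig:denotation}. Hence denotations are unchanged.
\item \emph{{\sc Similarity} only records pairs.} It changes no denotation.
\item \emph{{\sc M-trans} preserves inhabitation.} When $(\delta_i,\delta_j)\in\mathcal{E}$ with the type of $\delta_i$ a subtype of that of $\delta_j$, removing $\delta_j$ and redirecting its uses to $q_i$ preserves non-emptiness. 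For any accepted term $C[t_j]$ with $t_j$ produced by $\delta_j$, we take $t_i$ produced by $\delta_i$ and show $C[t_i]$ is still accepted. Every constraint on an enclosing transition that mentions this position is of the form $\textsc{SubType}(q\blacktriangleright\S{type},\cdot)$ with the position on the left, i.e.\ the subterm's type must be a subtype of something. Such constraints remain true when the type is strengthened, by transitivity of refinement subtyping (entailment composed with the $\denotation{\Gamma}$ hypotheses). The $\theta$-substituted result constraint also survives, since a more specific actual argument yields a stronger instantiated output refinement. We then lift this to {\sc M-lta} by induction on the length of $\rightsquigarrow^\star$.
\end{itemize}

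Combining the two halves, at the iteration reaching depth $\le k$ at which $e$'s size is reached, $q_{\S{goal}}$ remains inhabited after reduction. {\sc NEmpty} is therefore non-empty, and the algorithm returns a result rather than $\bot$, which is a contradiction.

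The main obstacle is the {\sc M-trans} lemma, specifically the monotonicity of constraints in the merged position. For {\sc e-if} the condition's refinement appears negated, and for arrow types {\sc SubType} is contravariant in the argument. So we must check that a merged transition never occupies a position where a \emph{stronger} type breaks a constraint. The first attempt will be to restrict {\sc S-trans} to the \S{type} positions of expression transitions, which only appear covariantly as the left argument of {\sc SubType} in {\sc e-app}, {\sc e-if} and {\sc Q-goal}. If function-valued transitions are merged, we will appeal to standard subsumption in the refinement type system: any context well-typed with a subterm of type $\tau_j$ remains well-typed with one of type $\tau_i <: \tau_j$.
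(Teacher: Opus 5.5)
Your decomposition is the same as the paper's: completeness of the unreduced automaton, then {\sc Prune} and {\sc Minimize} each preserving a solution up to subtyping. However, the step carrying the weight, your {\sc M-trans} lemma, is false, and with the rules as written the theorem itself fails.

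\textbf{The {\sc M-trans} lemma.} The sentence ``a more specific actual argument yields a stronger instantiated output refinement'' confuses an argument's type with its identity. The $\theta$-instantiated refinement $\phi_{\S{out}}[a/x]$ is a formula about the value $a$. Replacing $a$ by a different term $a'$ with $\tau_{a'} <: \tau_a$ gives a formula about a different value, not a stronger one. In A-normal form every argument is a variable occurring in the result type, and the query refinement may mention query arguments, so this is the generic case. Concretely, take
\[ \S{five} : \{\nu : \S{int} \mid \nu = 5\}, \qquad \S{inc} : (x : \S{int}) \to \{\nu : \S{int} \mid \nu = x + 1\}, \]
\[ \varphi = (y : \{\nu : \S{int} \mid \nu \geq 0\}) \to \{\nu : \S{int} \mid \nu = y + 1\}. \]
Then $\S{inc}\ y$ is a one-call solution, and the algorithm proceeds as follows:
\begin{itemize}
\item {\sc S-trans} records $\S{five} \lesssim y$, since $\nu = 5 \vDash \nu \geq 0$.
\item {\sc M-trans} deletes $y$'s transition and redirects its uses to $\S{five}$, before {\sc NEmpty} is ever checked on an application. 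This is exactly what the appendix example does with $x$ and $y$.
\item Every remaining candidate ($\S{inc}\ \S{five}$, $\S{inc}\ (\S{inc}\ \S{five})$, \ldots) has a constant type, so the algorithm returns $\bot$ for every $k$.
\end{itemize}
Two query arguments of identical type break it the same way, so strictness of the subtyping is not the issue. Your fallback to subsumption does not help either: it makes $C[t_i]$ well typed, but at the context's type instantiated with $t_i$, not with $t_j$. The paper's {\sc Minimize} lemma makes the same leap, since it only supplies a more specific replacement for the deleted transition's own terms. What is missing is a genuine restriction on similarity, for example comparing selfified types $\{\nu : t \mid \phi \wedge \nu = x\}$, or never merging symbols that can occur free in a refinement.

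\textbf{Your fix for {\sc e-if}.} This fails too. The condition is an expression, and its refinement $q_b\blacktriangleright\S{ref}$ is used negated as the else-branch hypothesis. Strengthening it therefore weakens that hypothesis. Moreover, the constraint is evaluated on the actual symbol at that position, not on a supertype.

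\textbf{The combination step.} Even with a correct local lemma, the combination does not go through. Reductions run at every iteration $j<k$, while $q_{\S{goal}}$ is still empty (otherwise the algorithm would already have returned). An invariant about final states that were non-empty before the step is therefore vacuous there. Later {\sc Transition} calls build on the reduced automaton, not on $\mathcal{A}_{\S{complete}}$. You need a simulation invariant over \emph{all} states, not just final ones, together with a proof that {\sc Transition} preserves it.

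\textbf{The {\sc Prune} bullet.} Your argument only shows that the language of the triggering transition $\delta$ is unchanged. But {\sc p-trans} rewrites $\Delta[\delta\blacktriangleright p_1/\Delta_r]$ globally, and a state like $q_f$ is shared by every application transition. Take the appendix's $g$, whose precondition no depth-$0$ argument meets. It is deleted everywhere, including at later depths where a suitable argument exists. ``Denotations are unchanged'' needs per-context copies of shared states or an unsharedness hypothesis.
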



\section{Evaluation}
\label{sec:eval}
Our evaluation considers the following three research questions:
\begin{itemize}
\item[\textbf{RQ1}] \textbf{Effectiveness of \name{}}:
How effectively does \name{}(shown as He) synthesize programs satisfying complex refinement-type specifications, compared to other specification-guided, component-based synthesis tools?

\item[\textbf{RQ2}] \textbf{Scalability with query complexity}:
As synthesis queries increase in size and control-flow complexity, how well does \name{} scale in terms of success rate and synthesis time?

\item[\textbf{RQ3}] \textbf{Impact of LTA reduction strategies}:
What is the impact of pruning and similarity-based reductions on LTA size, search-space reduction, and overall synthesis efficiency?
\end{itemize}

\subsection{RQ1: Effectiveness and Comparison with Other tools}
\paragraph{Benchmarks}
To address RQ1, We evaluate \name{} on 14 benchmark queries drawn from Hoogle+~\cite{hoogleplus} and Hectare~\cite{ecta},
re-implemented in OCaml and classified into first order, higher order,
and polymorphic categories. Each query is refined using three distinct
refinement specifications, yielding 42 total benchmarks~\footnote{
A detailed description of our benchmark set and evaluation results can be found in the accompanying Appendix.}.
We evaluate \name{} against Hoogle+ and Synquid~\cite{synquid}, a deductive synthesis tool based on refinement types. A detailed description of the benchmark set is provided in the supplemental material.

\paragraph{Results}
Fig.~\ref{fig:scatter1} summarizes synthesis time and LTA minimization statistics for select benchmarks (20/42) where at least one other tool succeeded. \name{} successfully synthesizes solutions for all 42 benchmarks, with a maximum runtime of 11 seconds and an average runtime of 7.6 seconds. In contrast, Hoogle+ solves only 6 benchmarks and is approximately 6x slower on those instances. Synquid solves 20 benchmarks and is on average 4.5x slower than \name{} on commonly solved queries. Hectare is omitted as it does not support refinement based specifications. 

22.5\% of the total synthesis time on the RQ1 benchmarks was spent on solving SMT queries. Note that Hegel improves overall SMT query cost by reducing the number of such queries needed for synthesis compared to enumerative techniques, leading to substantial improvement in synthesis times.

Across benchmarks, LTA minimization substantially reduces the automaton size, with an average reduction of 73\%, directly contributing to improved synthesis performance.
All experiments were conducted with a timeout of 3 minutes and a maximum synthesis bound of five library function calls,


\begin{figure}[t!]
\centering 
\includegraphics[width=1.1\textwidth]{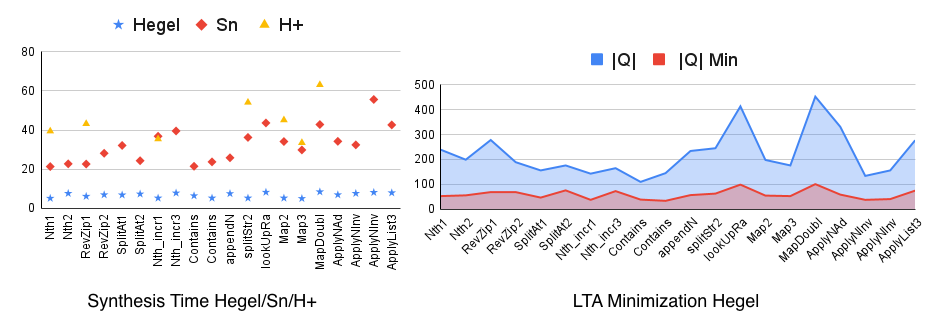}
\caption{Synthesis Results for selected RQ1 Benchmarks where at least one of the baselines tools succeeded. Synthesis time in seconds (Left) and LTA Minimization, showing \# states without LTA reductions (|Q|) vs with LTA reductions (|Q| Min) (Right). }
\label{fig:scatter1}
\end{figure}

\subsection{RQ2, Scaling \name{} to larger and complex queries}
{\it Benchmarks}.
To evaluate scalability, we consider eight synthesis queries adapted from verification benchmarks involving stateful database applications~\cite{database-examples}. These benchmarks require longer call sequences, explicit state threading, and nontrivial control flow. All benchmarks are expressed using monadic state passing to enable compositional synthesis.\\
{\it Results}.
\name{} successfully synthesizes solutions for all RQ2 benchmarks, with synthesis times ranging from 22.7 seconds to slightly over one minute. \textit{The average size of for these solutions is around 19 function calls (maximum being 33) with around 4 control flow branches on average.} 
Synquid solves only the smallest benchmarks, requiring over two minutes in those cases, and times out on all larger instances. Hoogle+ fails to solve any benchmark within the timeout of 6 minutes.
Across these benchmarks, the average number of LTA states constructed is approximately 1200. Pruning and minimization reduce the state space by roughly 70 percent, enabling \name{} to scale to substantially more complex synthesis tasks than existing tools.


\subsection{RQ3: Impact of irrelevant code pruning and similarity reduction}
To evaluate the impact of LTA reductions, we conduct ablation experiments using three restricted variants of \name{}: one disabling pruning, one disabling similarity reduction, and a baseline disabling both. We compare variants in terms of total synthesis time and the number of enumerated program terms, relative to the baseline.
For RQ1 benchmarks, disabling either pruning or similarity reduction increases synthesis time by 2-3x, while disabling both causes failures on nearly half of the benchmarks. For RQ2 benchmarks, pruning alone is insufficient to ensure scalability, with the variant disabling similarity reduction failing on 3 out of 9 queries. Across both benchmark categories, the full system achieves the smallest search space, while restricted variants enumerate 2-4.5x more terms without corresponding gains in solvability.
These results demonstrate that both pruning and similarity based reductions are necessary to achieve efficient and scalable refinement typed component based synthesis.

\section{Related Work}
\label{sec:related}

\noindent 
{\bf Component-based Synthesis.} There is a long line of work on the
use of CBS in the context of domain-specific
languages~\cite{table-synthesis,oracle-guided-synthesis} as well as
general-purpose programming
domains~\cite{sypet,tygus,cdcl-synthesis,rbsyn,frangel,viser,cobalt-tech}.
Unlike us, most prior CBS approaches rely on base types~\cite{tygus,sypet} or limited effect information~\cite{rbsyn}.
Our contributions in this paper extend prior
work by enabling CBS to be applied when specifications and queries are
equipped with logical refinements, substantially increasing the
complexity of the search process. \\
{\bf Using similarity/equality information for search.}  
E-graphs~\cite{Egg} allows efficient reduction of an enumeration space similar to the motivation
underlying our approach. Equality saturation has been applied to
enable efficient abstraction learning~\cite{babble}
inductive synthesis~\cite{egg-synthesis} and program
analysis~\cite{egg-analysis}.
These approaches rely on syntactic equivalence and do not directly support
semantic similarity notions based on refinement subtyping or logical entailment.

User provided logical equivalences have also been proposed~\cite{manual} to accelerate
synthesis, but do not scale to large component libraries.
Our idea of similarity reduction is also related to the notion of
\textit{observational equivalence} found in programming-by-example
synthesis aproaches~\cite{Albarghouthi2013,Miltner2022,Dillig23}.
These techniques compare synthesized programs on a given set of
inputs and prune the search-space in a bottom-up, inductive
synthesis setting. Its inherent unsoundness makes this mechanism
infeasible for specification-guided synthesis.\\
{\bf Tree automata for program synthesis.} 
Tree automata have been used to compactly represent large spaces of programs~\cite{dace,ecta,reactive,Reduction}
in synthesis and verification as discussed in earlier sections.
However, these representations cannot express semantic relationships induced
by refinement typing rules.

Blaze~\cite{blaze} introduces \emph{abstract finite tree automata} (AFTA), whose states like those of LTAs, are defined by a type and an associated predicate. However, the two models differ fundamentally.
First, LTA state annotations are strictly more expressive. AFTA predicates are quantifier-free formulas over an abstract domain~\cite{neilson}, whereas LTAs admit quantified formulas over program variables, enabling a direct encoding of refinement typing semantics and the use of refinement subtyping to guide search and reduction.
Second, unlike AFTA, LTAs support function types, higher-order programs, and let-bindings, leading to substantially different synthesis algorithms.
CTA~\cite{cata} extends tree automata with logical transition guards from decidable theories, analogous to symbolic finite automata(SFA)~\cite{sfa}. While this enables relational acceptance checking, CTA guards (true for SFA in general) cannot relate sub-automata as in LTA or ECTA, limiting their ability to capture typing and subtyping semantics.
LTAs are strictly more general than ECTA since any ECTA can be represented as an LTA that only uses equality constraints. 
This can be done by replacing ternary transitions like $\tau$ (var, base, ref) (see Figure~\ref{fig:baseqta}) with a unary one $\tau$ (base). The operation is similar to a standard type \textit{erasure} operation~\cite{liquidextended}. \\
{\bf Refinement types and conflict-driven learning for synthesis.}
Refinement types have been used previously to guide program synthesis~\cite{cobalt-tech,synquid,ecta},
most notably in deductive synthesis systems such as Synquid~\cite{synquid}.
These approaches focus on deriving programs that satisfy refinement
specifications, but do not address scalability in component based synthesis
with large libraries as compared extensively in our evaluations.
Other synthesis techniques based on conflict driven learning~\cite{cobalt-tech,sypet} avoid re-exploring
failing programs, but do not identify semantic equivalences among non failing
candidates.
Our approach complements refinement based synthesis by introducing a compact
representation and similarity based reduction mechanism that scales to complex
component based queries.




\section{Conclusions}
\label{sec:conc}


This paper presents a new component-based synthesis algorithm and tool, \name{}, for libraries and queries equipped with refinement-type specifications. Such specifications induce sparse solution spaces, rendering naive enumerative synthesis ineffective. We address this challenge using a novel tree automata variant, Liquid Tree Automata (LTA), which compactly represents the space of well-typed programs, supports efficient construction, and enables semantics-driven pruning via refinement-based similarity. Our evaluation shows that \name{} scales to complex refinement-typed queries beyond the reach of existing synthesis techniques.


%
%
%
\bibliographystyle{plain}
\bibliography{paper}
\section{Appendix/Supplemental Material for the Main Paper}
\label{sec:supplementary}

\pagebreak


\subsection{A Detailed Comparison of FTA, VSA, ECTA and LTA with examples.}

Several data structures could in principle be used to represent large spaces of candidate programs, including
Version Space Algebras~\cite{vsa},
e-graphs~\cite{egg-synthesis} and Finite Tree Automata
(FTA)~\cite{tata}. In particular, FTA have been shown to be
effective in representing the space of untyped programs, satisfying a set of input-output
examples~\cite{Dillig23}, as well as simply-typed programs~\cite{ecta,tata}.
However, these representations are insufficient for synthesis under refined library and query specifications, where correctness depends on enforcing logical implication and semantic relationships between subterms. 

\NEW{The reason is that these data structures have limited  capability at best to relate sub-programs in their structure. A typical example which standard FTA and VSA fail to recognize is the set of terms \{ $f (t , t)\, |\,  t \in \{ f_1, f_2, f_3 \}$ \}, where $f_i$ are symbols in the language of FTA. Note that to correctly represent this space, the automata must relate the two subtrees representing the two arguments to $f$.
Figure~\ref{fig:ap-comparison} shows (from left to right) the VSA and FTA for the unconstrained space \{ $f (t_1 , t_2) | t_i \in \{ f_1, f_2, f_3 \}$ \}. A VSA has two kind of nodes, a \textit{union node} (\fbox{$\bigcup$}), representing a union of all its children, and a \textit{join node} (\fbox{$\bowtie$}), representing a function application to all the terms represented by its children. An FTA can also be understood in a similar fashion with \textit{states} (\textcircled{$q$}) as analogous to the VSA \textit{union nodes} and \textit{transitions} (\fbox{.}) similar to the \textit{join nodes}. Note that in the figure both the FTA and VSA fail to restrict both sub-trees for $f$ to be identical and thus accept all possible terms like \{ $f (f_i, f_j) \mid i, j \in \{ 1, 2, 3 \} \}.$ }

\begin{figure}[t]
\includegraphics[scale=.50]{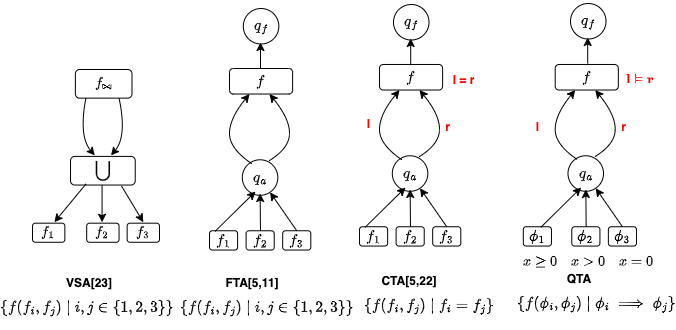}
 \caption{A comparison for representation of space of terms using VSA, FTA, ECTA and LTA.}
 \label{fig:ap-comparison}
\end{figure}
To represent such constrained spaces over trees, \textit{constrained FTA} have been explored~\cite{tata,ecta} allowing syntactic equality and dis-equality constraints between sub-terms. For instance, the CTA (third figure in Figure~\ref{fig:ap-comparison}) shows such a constrained FTA for the above example, where both sub-trees are constrained to be identical using a constraint on the transition (\textcolor{red}{{\sf l = r}}), where \textcolor{red}{l} and \textcolor{red}{r} are variables capturing paths in the automata. Consequently, the FTA will only accept terms of the form $\{f (f_i, f_j) \mid f_i = f_j$\}

Unfortunately, these syntactically constrained automata too fall short when attempting to capture the space of programs with refined logical specifications, of the kind we need for the refined variant of our motivating example. 
For instance, consider a slightly modified example of terms \{ $f (t_1 , t_2) | t_1, t_2 \in \{ \phi_1, \phi_2, \phi_3 \}$ $\wedge \ t_1 \implies t_2$ \} with $\phi_i$ being logical predicates from a decidable logic fragment. The CTA definition is lacking both in allowing such formulas in the structure as well as constraining the space to accept these logically constrained terms. 
The challenge is primarily because, a) extended FTA cannot allow logical predicates as symbols in the automata as is the case here. b) The constraints in CTA allow only syntactic comparison between sub-terms, while the task at hand requires capturing logical or semantic relations. 
Such terms arise naturally in domains like program semantics, deductive reasoning and richer type systems, particularly capturing the sub-typing relations in these richer types.


\subsection{syntax and Semantics for \calculus}
\label{sec:syntax}
\begin{figure}[h]
\small
\centering 
\small \begin{tabular}{l l l l} 
\textsf{x}, $f$ $\in$ \emph{Variables}\\
$\alpha \in$ \emph{Type Variables}\\
$Q \in$ \emph{Refinement Variables}\\
d $\in$ \emph{Constructors}& ::= & () $\mid$ \S{true} $\mid$ \S{false} $\mid$ \S{O} $\mid$ \S{S} $\mid$ \S{Cons} $\mid$ \S{Nil} \\
\emph{c} $\in$ \emph{Constants}& ::= & $\mathbb{B}$ $\mid$ $\mathbb{N}$ $\mid$ $\mathbb{Z}$ $\mid$ \ldots $\mid$ d\ $\overline{c}$ \\
{\sf v} $\in$ \emph{Value} & ::= & \emph{c} $\mid$  $\lambda$ (\textsf{x}:$\tau$). \eip $\mid$ \S{x} $\mid$ $\Lambda$ ($\alpha$ : $\kappa$). \eip $\mid$ $\Lambda$ (\S{Q} : \S{t}). \eip  \\
%
$\eip$ $\in$ \emph{Expression} & ::= & \textsf{v} $\mid$ {\bf let} \ \S{x} = \eip \ {\bf in} \ \eip{}   $\mid$ {\bf if} \S{v} \ {\bf then} \eip\ {\bf else} \eip\ $\mid$ \S{v} \ \S{v} $\mid$ \S{v} \ [t] $\mid$ \S{v} \ [$Q$] \\
{\sf t} $\in$ \emph{Base-Type} & ::= & {\sf int} $\mid$ {\sf bool} $\mid$ \ldots $\mid$ t list $\mid$ t tree {\sf t} \ldots $\mid$ {\sf t} $\rightarrow$ {\sf t} $\mid$ $\alpha$ \\
$\tau$ $\in$ \emph{Type} & ::= & \{$\nu$ : {\sf t} | $\phi$ \} $\mid$ (\textsf{x} : $\tau$) $\rightarrow$ $\tau$    \\
$\sigma \in$ \emph{Type Schema} &  ::= & $\tau$ $\mid$ $\forall \alpha : t$. $\sigma$ $\mid$ $\forall$ (\S{Q} : t). $\sigma$\\

%
%
$\phi \in$ \emph{Qualifiers} & ::= & \textsf{true} $\mid$ \textsf{false} $\mid$ $\kappa$
$Q(\overline{x_i})$ $\mid$ $\neg$ $\phi$ $\mid$ $\phi$ $\wedge$ $\phi$  $\mid$ $\phi \lor \phi$  $\mid$ $\phi$ $\Rightarrow$ $\phi$ $\mid$ $\forall$ (x : t). $\phi$ $\mid$ $\exists$ (x : t). $\phi$ 
\\

$\Gamma$ $\in$ \emph{Type Context} & ::= & $\varnothing$ $\mid$ $\Gamma$, x : $\tau$ $\mid$ $\Gamma$, $\phi$ $\mid$ $\Gamma$, $g$ : $\tau$ $\mid$ $\Gamma, \alpha : \kappa$ $\mid$ $\Gamma$, $Q$ : t\\
\hline
\end{tabular}%
\caption{Extended $\lambda_{\S{LTA}}$ Expressions and Types with parametric polymorphism and abstract refinement variables} 
\label{fig:ap-syntax} 
\end{figure}

\pagebreak

\begin{figure*}[htbp]
\begin{flushleft}
\bigskip
{\bf Expression Typing}\quad \fbox{\small $\Gamma \vdash_{\mathcal{A}}$ $\eip$ : $\sigma$}
\end{flushleft}
\begin{minipage}{0.1\textwidth}
\begin{center}
  \inference[{\sc T-var}]{\mathcal{R} (\mathcal{A}, \Gamma) & \Gamma (\S{x}) = \sigma}
                         {\Gamma \vdash_{\mathcal{A}} \S{x} : \sigma}
\end{center}
\end{minipage}
\begin{minipage}{0.3\textwidth}
\begin{center}
\inference[{\sc T-fun}]{ \mathcal{R} (\mathcal{A}, \Gamma) & \Gamma \cup (\S{x} : \tau_1) \vdash_{\mathcal{A}} \eip : \tau_2}
		            {\Gamma \vdash_{\mathcal{A}} \lambda (\S{x : \tau_1}).\eip\, : \tau_1 \rightarrow \tau_2}
\end{center}  
\end{minipage}\\[5pt]
\bigskip
\begin{minipage}{0.3\textwidth}
\begin{center}
\inference[{\sc T-App}]{\mathcal{R} (\mathcal{A}, \Gamma) \\
\Gamma \vdash_{\mathcal{A}} \eip_1 : (\S{x} : \tau_1 \rightarrow \tau_2) & \Gamma \vdash_{\mathcal{A}} \eip_2 : \tau_1} 
		    {\Gamma \vdash_{\mathcal{A}} \eip_1 \eip_2 : \tau_2[\eip_2 / \S{x}]}
\end{center}  
\end{minipage}
\bigskip
\begin{minipage}{0.4\textwidth}
\begin{center}

\inference[{\sc T-If}]{ 
\Gamma \vdash_{\mathcal{A}} \eip : \{ \nu : \S{bool} \mid \phi \} & \Gamma, \phi \vdash_{\mathcal{A}} \eip_t : \tau \\
\Gamma, \neg \phi \vdash_{\mathcal{A}} \eip_f : \tau & \mathcal{R} (\mathcal{A}, \Gamma)} 
		    {\Gamma \vdash_{\mathcal{A}} \mathsf{ if \ (\eip) \ then \ \eip_t \ else \ \eip_f} : \tau}
\end{center}  
\end{minipage}
\begin{minipage}{0.4\textwidth}
\inference[{\sc T-Subtype}]{\Gamma \vdash_{\mathcal{A}} \eip : \sigma_1 & \Gamma \vdash_{\mathcal{A}} \sigma_1 <: \sigma_2 \\ \mathcal{R} (\mathcal{A}, \Gamma)
}
		    {\Gamma \vdash_{\mathcal{A}} \eip : \sigma_2}
\end{minipage}\\[5pt]
\bigskip
\begin{minipage}{0.4\textwidth}
\inference[{\sc T-Gen-T}]{\mathcal{R} (\mathcal{A}, \Gamma) & \Gamma \vdash_{\mathcal{A}} \eip : \sigma & \alpha \notin FV (\Gamma)}
		    {\Gamma \vdash_{\mathcal{A}} \Lambda \alpha. \eip : \forall \alpha. \sigma}
\end{minipage}
\begin{minipage}{0.4\textwidth}
\inference[{\sc T-Let}]{\mathcal{R} (\mathcal{A}, \Gamma) & \Gamma \vdash_{\mathcal{A}} \eip : \tau_1 \\ 
\Gamma \cup (\S{x} : \tau_1) \vdash_{\mathcal{A}} \eip_1 : \tau}
		    {\Gamma \vdash_{\mathcal{A}} {\bf let} \ \S{x} = \eip \ {\bf in} \ \eip{} : \tau}
\end{minipage}\\[5pt]
\bigskip

\begin{minipage}{0.3\textwidth}
\inference[{\sc T-Inst-T}]{ \mathcal{R} (\mathcal{A}, \Gamma) \\ 
\Gamma \vdash_{\mathcal{A}} \eip : \forall \alpha. \sigma &  \Gamma {\vdash^{\S{wf}}}_{\mathcal{A}} \S{t}}
		    {\Gamma \vdash_{\mathcal{A}} \eip \ [t] : \sigma [t/\alpha]}
\end{minipage}
\begin{minipage}{0.4\textwidth}
\inference[{\sc T-Gen-P}]{\mathcal{R} (\mathcal{A}, \Gamma) \\
                    \Gamma \cup (\S{Q} : t) \vdash_{\mathcal{A}} \eip : \sigma }
		    {\Gamma \vdash_{\mathcal{A}} \Lambda \S{Q} : \S{t}. \eip : \forall \S{Q} : \S{t}.\sigma}
\end{minipage}\\[5pt]
\bigskip
\begin{minipage}{0.35\textwidth}
\inference[{\sc T-Inst-P}]{\mathcal{R} (\mathcal{A}, \Gamma) \\
    \Gamma \vdash_{\mathcal{A}} \eip : \forall \S{Q} : \S{t}.\sigma  & \Gamma \vdash_{\mathcal{A}} \phi : \S{t}}
		    {\Gamma \vdash_{\mathcal{A}} \eip \ [\phi] : \sigma[\S{P}/\S{Q}]}
\end{minipage}

\caption{Extended Typing Semantics for $\lambda_{\S{LTA}}$ Expressions}
\label{fig:ap-e-typing-expressions}
\end{figure*}

\begin{figure*}[htbp]
\begin{flushleft}
\bigskip
{\bf Well-Formedness}\quad \fbox{\small $\Gamma {\vdash^{\S{wf}}}_{\mathcal{A}}$ $\tau$}
\end{flushleft}
\bigskip
\begin{minipage}{0.35\textwidth}
\inference[{\sc WF-Base}]{ \mathcal{R} (\mathcal{A}, \Gamma) & 
\Gamma \equiv \overline{x_i : \tau_i} & \forall i. \Gamma {\vdash^{\S{wf}}}_{\mathcal{A}} \tau_i \\
\Gamma \cup \nu : t {\vdash^{\S{wf}}}_{\mathcal{A}} \phi}
{\Gamma {\vdash^{\S{wf}}}_{\mathcal{A}} \{ \nu : t \mid \phi \}}
\end{minipage}
\begin{minipage}{0.4\textwidth}
\inference[{\sc WF-ABS-E}]{\mathcal{R} (\mathcal{A}, \Gamma) \\
\Gamma \cup (x : \tau_1) {\vdash^{\S{wf}}}_{\mathcal{A}} \tau_2}
{\Gamma {\vdash^{\S{wf}}}_{\mathcal{A}} (x : \tau_1) \rightarrow \tau_2}
\end{minipage}\\[5pt]
\begin{minipage}{0.35\textwidth}
\inference[{\sc WF-Prim}]{\mathcal{R} (\mathcal{A}, \Gamma) & \S{t} \in T}{\Gamma {\vdash^{\S{wf}}}_{\mathcal{A}} \S{t}}
\end{minipage}
\begin{minipage}{0.4\textwidth}
\inference[{\sc WF-pred}]{\mathcal{R} (\mathcal{A}, \Gamma) & \phi \in \Phi}
{\Gamma {\vdash^{\S{wf}}}_{\mathcal{A}} \phi}
\end{minipage}\\[5pt]

\begin{minipage}{0.4\textwidth}
\inference[{\sc WF-Var}]{\mathcal{R} (\mathcal{A}, \Gamma) & x \in Vars}
{\Gamma {\vdash^{\S{wf}}}_{\mathcal{A}} \ x}
\end{minipage}\\[5pt]

\begin{minipage}{0.4\textwidth}
\inference[{\sc WF-ABS-$\alpha$}]{\mathcal{R} (\mathcal{A}, \Gamma) & \Gamma \cup \alpha : \kappa {\vdash^{\S{wf}}}_{\mathcal{A}} \tau}
{\Gamma {\vdash^{\S{wf}}}_{\mathcal{A}} \ \forall \alpha. \tau}
\end{minipage}\\[5pt]

\begin{minipage}{0.4\textwidth}
\inference[{\sc WF-ABS-$Q$}]{\mathcal{R} (\mathcal{A}, \Gamma) & \Gamma \cup Q : \S{t} {\vdash^{\S{wf}}}_{\mathcal{A}} \tau}
{\Gamma {\vdash^{\S{wf}}}_{\mathcal{A}} \ \forall Q : \S{t}. \tau}
\end{minipage}\\[5pt]

\begin{flushleft}
\bigskip
{\bf Subtyping}\quad \fbox{\small $\Gamma \vdash_{\mathcal{A}}$ $\sigma_1$ <: $\sigma_2$}
\end{flushleft}
\bigskip

 \inference[{\sc T-Sub-Base}]{ \mathcal{R} (\mathcal{A}, \Gamma) & \Gamma {\vdash^{\S{wf}}}_{\mathcal{A}} \{ \nu : \S{t} \mid \phi_1 \} \\ \Gamma {\vdash^{\S{wf}}}_{\mathcal{A}} \{ \nu : \S{t} 
\mid \phi_2 \} & 
			  \Gamma \vDash \phi_1 => \phi_2 }
				{ 
				\Gamma \vdash_{\mathcal{A}} \{ \nu : \S{t} \mid \phi_1 \} <: \{ \nu : \S{t} \mid \phi_2 
\}
				  } 

\bigskip

 \inference[{\sc T-Sub-Arrow}]{ \mathcal{R} (\mathcal{A}, \Gamma) & \Gamma \vdash_{\mathcal{A}} \tau_{21} <: \tau_{11} &\Gamma \vdash_{\mathcal{A}} \tau_{12} <: \tau_{22}}
			  { \Gamma \vdash_{\mathcal{A}} (\S{x }: \tau_{11}) \rightarrow \tau_{12}  <: (\S{x} : \tau_{21}) 
\rightarrow \tau_{22} } 

\bigskip
\begin{minipage}{0.40\textwidth}
 \inference[{\sc T-Sub-TVar}]{ \mathcal{R} (\mathcal{A}, \Gamma) & \Gamma \vdash_{\mathcal{A}} \sigma_1 <: \sigma_2}
			  { \Gamma \vdash_{\mathcal{A}} \forall \alpha. \sigma_1  <: \forall \alpha. \sigma_2 } 
\end{minipage}

\bigskip

\begin{minipage}{0.40\textwidth}
 \inference[{\sc T-Sub-PVar}]{ \mathcal{R} (\mathcal{A}, \Gamma) & \Gamma, (Q : t) \vdash_{\mathcal{A}} \sigma_1 <: \sigma_2}
			  { \Gamma \vdash_{\mathcal{A}} \forall Q : t. \sigma_1  <: \forall Q : t. \sigma_2 } 
\end{minipage}

\caption{Extended Typing Semantics for well-formedness and subtyping of \calculus terms.}
\label{fig:ap-e-typing}
\end{figure*}

\pagebreak


\subsection{\automata{} Construction: Extended Rules}
\begin{figure*}[htbp]
\begin{flushleft}
\bigskip
{\bf Well-formedness}\quad \fbox{\footnotesize
    $\begin{array}{c} 
    \mathcal{F}, \mathcal{A} \vdash^{\S{wf}} f \in \mathcal{F} (\overline{q_i}) \T{\psi} q      
    \end{array}$
}
\end{flushleft}
\bigskip
\begin{minipage}{0.3\textwidth}
{\footnotesize
\begin{center}
\inference[{\sc wf-prim}]{\S{t} \in \S{T}_{\mathcal{F}} & q_{\S{t}} \notin Q}{\mathcal{F}, \mathcal{A} \vdash^{\S{wf}} \S{t} () \T{} q_{\S{t}}}
\end{center}
}
\end{minipage}
\begin{minipage}{0.3\textwidth}
{\footnotesize
\begin{center}
\inference[{\sc wf-pred}]{\phi \in \Phi_{\mathcal{F}} & q_{\phi} \notin Q}{\mathcal{F}, \mathcal{A} \vdash^{\S{wf}} \phi \T{} q_{\phi}}
\end{center}
}
\end{minipage}
\begin{minipage}{0.3\textwidth}
{\footnotesize
\begin{center}
\inference[{\sc wf-var}]{x \in \S{Vars}_{\mathcal{F}} & q_{x} \notin Q}{\mathcal{F}, \mathcal{A} \vdash^{\S{wf}} x \T{} q_{x}}
\end{center}
}
\end{minipage}
\bigskip
\bigskip
\begin{minipage}{0.4\textwidth}
{\footnotesize
\begin{center}
\inference[{\sc wf-base}]{(\tau \equiv \{ x : t \mid \phi \}) \in \tau_{\mathcal{F}} & q_{\tau} \notin Q} 
{\mathcal{F}, \mathcal{A} \vdash^{\S{wf}} \tau (q_x, q_t, q_{\phi}) \T{} q_{\tau}}
\end{center}
}
\end{minipage}
\begin{minipage}{0.4\textwidth}
{\footnotesize
\begin{center}
\inference[{\sc wf-arrow}]{(\tau_{\rightarrow} \equiv \tau_i \rightarrow \tau_j) \in \tau_{\mathcal{F}} \\
\tau_i, \tau_j \in \tau_{\mathcal{F}}  & q_{\tau_{\rightarrow}} \notin Q} 
{\mathcal{F}, \mathcal{A} \vdash^{\S{wf}} \tau_{\tau{\rightarrow}} (q_{\tau_i}, q_{\tau_j}) \T{} q_{\tau_{\rightarrow}}}
\end{center}
}
\end{minipage}
\begin{minipage}{0.4\textwidth}
{\footnotesize
\begin{center}
\inference[{\sc wf-t-abs}]{q_{\alpha}, q_{\tau} \in Q \\
\psi = q_{\S{tabs}} \blacktriangleright \S{tvar}.\S{type} = q_{\S{tabs}} \blacktriangleright \S{type.base}} 
{\mathcal{F}, \mathcal{A} \vdash^{\S{wf}} \S{tabs}(q_{\alpha}, q_{\tau}) \T{\psi} q_{\S{tabs}}}
\end{center}
}
\end{minipage}
\begin{minipage}{0.4\textwidth}
{\footnotesize
\begin{center}
\inference[{\sc Q-goal}]{\varphi = \overline{(\mathsf{x_i} : \tau_i)} \rightarrow \tau & q_{\S{term}_k}, q_{\tau} \in Q \\ q_{\S{goal}} \in Q_f \\
\psi = \textsc{SubType} (q_{\S{term}_k}\blacktriangleright{\S{type}}, q_{\S{goal}}\blacktriangleright{\S{type}})
}
{\mathcal{F}, \mathcal{A}  \vdash^{\S{wf}} \S{goal} (q_{\tau}, q_{\S{term}_k}) \T{\psi} q_{\S{goal}}}
\end{center}
}
\end{minipage}

\begin{flushleft}
{\bf Transitions}\quad \fbox{\footnotesize
    $\begin{array}{c} \mathcal{F}, \mathcal{A} \vdash  f \in \mathcal{F} (\overline{q_i}) \T{\psi} q \\ 
    \end{array}$
}
\end{flushleft}
\bigskip
\begin{minipage}{0.25\textwidth}
{\footnotesize
\begin{center}
\inference[{\sc e-$\alpha$}]{  \S{Fresh}(\alpha) & q_{\alpha} \notin Q}
{\mathcal{F}, \mathcal{A} \vdash \alpha () \T{} q_{\alpha}}
\end{center}
}
\end{minipage}
\begin{minipage}{0.25\textwidth}
{\footnotesize
\begin{center}
\inference[{\sc e-$\kappa$}]{  \S{Fresh}(\kappa) & q_{\kappa} \notin Q}
{\mathcal{F}, \mathcal{A} \vdash \kappa () \T{} q_{\kappa}}
\end{center}
}
\end{minipage}
\begin{minipage}{0.25\textwidth}
{\footnotesize
\begin{center}
\inference[{\sc e-var}]{ x : \tau \in \mathcal{F} & q_x \notin Q}
{\mathcal{F}, \mathcal{A} \vdash x (q_{\tau}) \T{} q_x}
\end{center}
}
\end{minipage}

\bigskip
\begin{minipage}{0.20\textwidth}
{\footnotesize
\begin{center}
\inference[{\sc e-$\tau$-shape}]{\kappa () \T{} q_{\kappa}, 
\alpha () \T{} q_{\alpha} \in \Delta \\
\nu () \T{} q_{\nu} \in \Delta \\
q_{\tau} \notin Q}
{\mathcal{F}, \mathcal{A} \vdash \tau (q_{\nu}, q_{\alpha}, q_{\kappa}) \T{} q_{\tau}}
\end{center}
}
\end{minipage}
\begin{minipage}{0.4\textwidth}
{\footnotesize
\begin{center}
\inference[{\sc e-app}]{q_f, q_a \in Q & \tau (\overline{q_i}) \T{} q_{\tau} \in \Delta\\
\psi = \textsc{SubType} (q_{f}\blacktriangleright{\S{out}}, q_{\S{app}}\blacktriangleright{\S{type}}) \wedge \\
\ \ \ \  \theta = [q_{\S{a}}\blacktriangleright{\epsilon}/q_{f}\blacktriangleright{\S{in}}] \\
\theta. \textcolor{red}{(}\textsc{SubType} (q_{\S{a}}\blacktriangleright{\S{type}}, q_{f}\blacktriangleright{\S{in}}) \textcolor{red}{)}}
{ \mathcal{F}, \mathcal{A} \vdash \S{app} \ (q_{\tau}, q_f, q_a) \T{\psi} q_{\S{app}}}
\end{center}
}
\end{minipage}
\bigskip
\bigskip
\bigskip
\vspace*{.05in}

\begin{minipage}{0.4\textwidth}
{\footnotesize
\begin{center}
\inference[{\sc e-const}]{ \vdash c : \tau \in \mathcal{F} \\ q_c \notin Q}
{\mathcal{F}, \mathcal{A} \vdash c (q_{\tau}) \T{} q_c}
\end{center}
}
\end{minipage}
\begin{minipage}{0.4\textwidth}
{\footnotesize
\begin{center}
\inference[{\sc e-if}]{q_b, q_t, q_f \in Q & \tau (\overline{q_i}) \T{} q_{\tau} \in \Delta \\
\psi =  ((q_b \blacktriangleright {\S{ref}}) \wedge \textsc{SubType} (q_{t}\blacktriangleright {\S{type}}, q_{\S{if}}\blacktriangleright{\S{type}})) \wedge \\
\   \   \   \  (\neg (q_b \blacktriangleright{\S{ref}}) \wedge \textsc{SubType} (q_{f}\blacktriangleright{\S{type}}, q_{\S{if}}\blacktriangleright{\S{type}}))}{ \mathcal{F}, \mathcal{A} \vdash \S{if} \ (q_{\tau}, q_b, q_t, q_f) \T{\psi} q_{\S{if}}}
\end{center}
}
\end{minipage}

\begin{minipage}{0.4\textwidth}
{\footnotesize
\begin{center}
\inference[{\sc e-infer}]{\delta = f(q_{\tau}, \overline{q_i}) \xhookrightarrow{\psi} q & 
\delta_{\tau} \in \delta \blacktriangleright{\tau} = (q_{\nu}, q_{\alpha}, q_{\kappa}) \T{} q_{\tau} \\
\mathcal{R} (\mathcal{A}, \Gamma) & 
\S{Solve} (\Gamma, \psi, \denotation{q_{\alpha}}, \denotation{q_{\kappa}}) = (M_{\kappa}, M_{\alpha})
}
{\mathcal{F}, \mathcal{A} \vdash (M_{\kappa}, M_{\alpha}). \delta_{\tau}}
\end{center}
}
\end{minipage}
\bigskip
{\footnotesize
\begin{equation}
\label{eq:theta}
\denotation{\theta = [\delta \blacktriangleright p / \delta \blacktriangleright p']} = 
            [ \S{x} / \S{y} \mid \delta_{p} \in \delta\blacktriangleright p, \delta_{p'} \in \delta\blacktriangleright p' ,\S{y} = \S{Symbol} (\delta_{p}), \S{x} = \S{Symbol} (\delta_{p'})]
\end{equation}
}

{\footnotesize
}
\caption{Rules for constructing transitions $\Delta$,  basis for {\sc WF} and {\sc Transition.} with rules {\sc E-$\alpha$},  {\sc E-$\kappa$} and {\sc E-infer}}
\label{fig:ap-typing}
\end{figure*}

\pagebreak

\subsection{LTA Reductions}

\begin{figure*}[htbp]
\begin{flushleft}

{\bf Pruning}\quad\fbox{\footnotesize
     $\begin{array}{c} 
        \mathcal{A} \vdash \Delta \rightsquigarrow \Delta' \ \ \ \  \mid \ \ \ \ \mathcal{A} \vdash  \delta \rightsquigarrow^{\psi_a} \delta' 
      \end{array}$
           
} 

\end{flushleft}
\bigskip
{\footnotesize
\begin{minipage}{0.30\textwidth}
\begin{center}
\inference[{\sc p-trans}]{ \delta \equiv f (\ldots) \T{\psi_j \wedge \ldots} q & 
\psi_j \equiv \theta. p_1 (= \mid \vDash) p_2 \\ 
\Delta_{r} = \{ \delta_r \mid \mathcal{A} \vdash (\delta_{p_1, i} \in \delta\blacktriangleright p_1)  \rightsquigarrow^{\psi_j} \delta_r \} 
} 
{\mathcal{A} \vdash \Delta \rightsquigarrow \Delta[\delta\blacktriangleright p / \Delta_r]} 
\end{center}
\end{minipage}
%
\begin{minipage}{0.40\textwidth}
\begin{center}
\inference{\delta \equiv f (\ldots) \T{\psi_j \wedge \ldots} q & 
\psi_j \equiv p_1 = p_2 \\
\delta_{p_1, i} \in \delta\blacktriangleright p_1 & \delta_{p_2, j} \in \delta\blacktriangleright p_2 \\
\delta_{\S{r}} = {\sqcap}_{\S{Syntax}} (\delta_{p_1, i}, \delta_{p_2, j})
} 
 { \mathcal{A} \vdash  \delta_{p_1, i}  \rightsquigarrow^{\psi_j} \delta_{\S{r}}}[{\footnotesize{\sc p-syn-eq}}]   
\end{center} 
\end{minipage}


\begin{minipage}{0.40\textwidth}
\begin{center}
\inference{\delta \equiv f (\ldots) \T{\psi_j \wedge \ldots} q & 
\psi_j \equiv \theta. p_1 \vDash p_2 \\
\delta_{p_1, i} \in \delta\blacktriangleright p_1 & \delta_{p_2, j} \in \delta\blacktriangleright p_2 \\
\delta_{\S{r}} = {{\sqcap}^{\psi_j}}_{\S{Semantics}}(\delta_{p_1, i}, \delta_{p_2, j})
} 
 { \mathcal{A} \vdash  \delta_{p_1, i}  \rightsquigarrow^{\psi_j} \delta_{\S{r}}}[{\footnotesize{\sc p-sem-ent}}]  
\end{center} 
\end{minipage}
}
\bigskip

{\footnotesize
\begin{equation}
\label{eq:sem}
{{\sqcap}^{\psi}}_{\S{Semantics}}(\delta_i, \delta_j) = 
        \begin{cases}
             \delta_i & \sigma (\psi_j) = \theta. \star \vDash \star, \S{Symbol} (\delta_i) = \phi_i, \S{Symbol} (\delta_j) = \phi_j,  \\
                       & \S{Given \ the \ automaton} \ \mathcal{A} , \ s.t. \ \mathcal{R} (\Gamma, \mathcal{A}) \\
                       & \denotation{\Gamma} \wedge \denotation{\theta}. \denotation{\phi_i} \implies \denotation{\phi_j} \\
            \delta_{\bot} & \S{otherwise}
        \end{cases}
\end{equation}
}

\begin{flushleft}
{\bf Similarity}\quad \fbox{\footnotesize
  $\begin{array}{c} 
\mathcal{A} \vdash^{\S{\bf sim}} \delta_i \lesssim \delta_j \ \ \ \  \mid \ \ \ \ \mathcal{A} \vdash \mathcal{E} \rightsquigarrow \mathcal{E}'
\end{array}$
}
\end{flushleft}
\bigskip
\begin{minipage}{0.4\textwidth}
{\footnotesize
 \inference[\footnotesize {\sc s-trans}]{ \psi_{<:} = \textsc{SubType} \textcolor{red}{(} \delta_i\blacktriangleright \S{type}, \delta_j\blacktriangleright \S{type} \textcolor{red}{)} \\  \delta_{\bot} \notin {{\sqcap}^{\psi_{<:}}}_{\S{Semantics}} (\delta_i\blacktriangleright \S{type} , \delta_j\blacktriangleright \S{type}) }
			  { \mathcal{A} \vdash^{\S{{\bf sim}}} \delta_i \lesssim \delta_j} 
}
\end{minipage}
\begin{minipage}{0.40\textwidth}
{\footnotesize
\inference[{\sc s-eq}]{ (\delta_i, \delta_j) \notin \mathcal{E} \\
\mathcal{A} \vdash^{\mathbf{sim}} \delta_i \lesssim \delta_j } {\mathcal{A} \vdash \mathcal{E} \rightsquigarrow \mathcal{E} \cup \{(\delta_i, \delta_j)\}}
}
\end{minipage}

\bigskip

\begin{flushleft}

{\bf Minimization}\quad\fbox{\footnotesize
     $\begin{array}{c} 
        (\mathcal{A}, \mathcal{E})  \vdash \Delta \rightsquigarrow \Delta' \ \ \ \  \mid \ \ \ \ \vdash (\mathcal{A}, \mathcal{E})   \rightsquigarrow (\mathcal{A}', \mathcal{E}') 
      \end{array}$
           
} 

\end{flushleft}
\bigskip
{\footnotesize
\begin{minipage}{0.40\textwidth}
\begin{center}
\inference[{\sc m-trans}]{\delta_i, \delta_j \in \Delta & (\delta_i , \delta_j) \in \mathcal{E} \\
\delta_i \equiv f (q_1, q_2, \ldots q_j \ldots q_n) \T{\psi_i} q_i  \\
\delta_j \equiv f' (q_1', q_2', \ldots q_m') \T{\psi_j} q_j  \\
\Delta' = \bigcup_k \textcolor{red}{\{} \delta_k[q_j \mapsto q_i] \\ 
\ \ \ \ \ \ \ \ \ \ \ \  \  \   \    \ \mid \delta_k = \hat{f} (\hat{q_1}, {\bf q_j}, \ldots \hat{q_m} ) \hookrightarrow \hat{q_k}\textcolor{red}{\}} } 
{(\mathcal{A}, \mathcal{E}) \vdash \Delta \rightsquigarrow (\Delta \cup \Delta') \setminus 
\{\delta_j\}} 
\end{center}
\end{minipage}
\bigskip
\begin{minipage}{0.40\textwidth}
\begin{center}
\inference{ {\footnotesize \mathcal{A} \equiv (Q, \mathcal{F}, Q_f, \Delta}) & {\footnotesize  (\mathcal{A}, \mathcal{E}) \vdash \Delta \rightsquigarrow^{\star} \Delta'}} 
 { \vdash (\mathcal{A}, \mathcal{E}) \rightsquigarrow ((Q, \mathcal{F}, Q_f, \Delta'), \varnothing)}[{\footnotesize{\sc m-LTA}}]   
\end{center} 
\end{minipage}
}
\caption{Complete set of Similarity inference and LTA Minimization.}
\label{fig:ap-similarity}
\end{figure*}

\subsubsection{Example}
\begin{figure}[htbp]
\includegraphics[scale=.55]{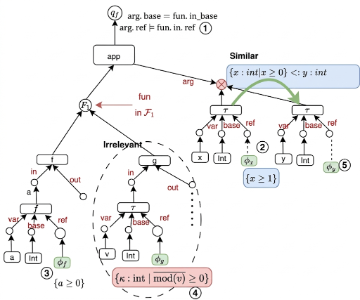}
\caption{A LTA for the example library. The green arrow relates similar sets of transitions.}
 \label{fig:ap-reductions}
\vspace*{-.4in}
\end{figure}


\pagebreak


To illustrate the irrelevant code pruning defined above we use the following example.
\begin{example}
\label{ex:reductions}
Consider a space of valid refinement-typed terms for a typing
environment given as follows: \textsf{[ \{ x : int $\mid$ x $\geq$ 0
    \}, y : int, f : (a : \{$\nu$ : int $\mid$ a $\geq$ 0\})
    $\rightarrow$ \{$\nu$ : int $\mid$ (a == 0) $\implies$ $\nu$
    $\geq$ a $\wedge$ ((a > 0) $\implies$ $\nu$ $\leq$ a\}), g : \{
    $\nu$ : int $\mid$ \S{mod} ($\nu$) $\geq$ 0\} $\rightarrow$
    bool]}.  Figure~\ref{fig:ap-reductions} shows a partial LTA for the
space of terms of sizes up to 1 function call, constructed using the
rules in Figure~\ref{fig:typing} for the given typing environment.
\end{example}
The irrelevant portion of the LTA is shown in the dashed block with label \textcircled{4}
The {\sc p-trans} rule when applied on the
transition {\sf app} (call it $\delta_{\S{app}}$) updates this
transition using constraints on $\delta_{\S{app}}$.  The constraint
has two conjuncts; here, we just consider the semantic entailment constraint
($\psi_j$ = {\sf arg.ref $\vDash$ fun.in.ref}) ((\textcircled{1})),
shown in green.  Rule {\sc
  p-Sym-ent} thus applies and reduces it to $\delta_{\S{app}}
\rightsquigarrow^{\psi_j = \S{arg.ref} \vDash \S{fun.in.ref}}
\delta_{\S{app}}'$.  This rule calculates
$\delta_{\S{app}}\blacktriangleright \S{arg.ref}$, which in the figure
is $\phi_x$ (\textcircled{2}) and $\delta_{\S{app}}\blacktriangleright
\S{fun.in.ref}$, which includes two set of transitions, $\phi_f$
((\textcircled{3})) and $\phi_g$ ((\textcircled{4})).  Finally the
$\sqcap_{\S{Semantics}}$ ($\phi_x$, $\phi_g$) is $\delta_{\bot}$, as
the check, ($\forall \nu. \nu \geq 1 \vDash \S{mod} (\nu) \geq 0$)
does not hold. On the other hand, $\sqcap_{\S{Semantics}}$ ($\phi_x$,
$\phi_f$) is non-$\delta_{\bot}$.  Thus, $\phi_g$((\textcircled{4}))
is reduced to $\delta_{\bot}$, which upon normalization, leads to
reduction of whole transition corresponding to function \S{g}, shown
in red as irrelevant.

To illustrate, the similarity based minimization, consider Figure~\ref{fig:ap-reductions} again, using {\sc
  s-trans} rules between the transitions for \S{x} and \S{y}, shown in
the box with label {\bf Similar}, the $\psi_{<:}$ constraint is shown
in the blue box. Since, the constraint holds (under variable renaming),
these two transitions are marked as similar (shown by the
green arrow) and added to $\mathcal{E}$. The {\sc m-trans} rule finally
uses this similarity information to remove the transition for \S{y}
while keeping \S{x}.

\pagebreak

\subsection{Synthesis Details and Implementation}

\NNEW{In the LTA construction and pruning rules and the {\sc LTASynthesize} algorithm above, for ease of illustration, we have abstracted away several details about how we maintain variable scoping, infer types for transitions, and do efficient term extraction. Below, we discuss some of these along with several other details. We use the following example to illustrate these details. 
}
\NNEW{For illustration, we will consider a library $\mathcal{F}$ = [\S{f} : (l : \{ $\nu$ : [a] | len ($\nu$) > 0 \}) $\rightarrow$ \{ $\nu$ : [a] | len($\nu$) = len(l) \};
xs : \{ $\nu$ : [int] | len ($\nu$) = 1\} ; ys : [char]; 
g : (l : [char]) $\rightarrow$ [char]].
Figure~\ref{fig:ap-terms} shows a portion of minimized LTA for terms of size two with transition \fbox{\S{app}} for this library.}

\begin{wrapfigure}{r}{.50\textwidth}
\includegraphics[scale=.450]{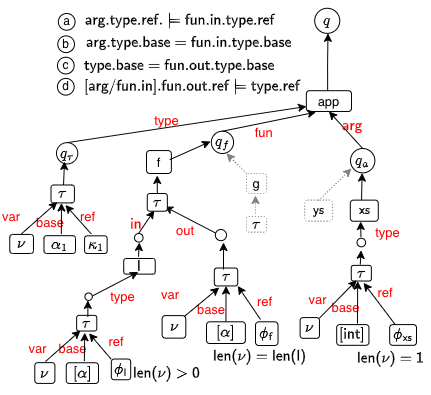}
\caption{ An Example LTA, portions shaded out for elucidation.} 
\label{fig:ap-terms}
\end{wrapfigure}

\subsubsection{Variable Scoping and Typing Environment in LTA}
\NNEW{The details of variable scoping during LTA construction and pruning are important to understand how enumeration works with refinement types in a LTA. 
To simplify scoping decisions and manage the typing environment, we made several design choices.
First, we require terms in our synthesis language \calculus to be in A-normal form, and to have a unique binding variable $\S{t}_i$ for each application and conditional term. 
We also refactor each library function specification, alpha-renaming all argument variables . This allows us to avoid unwanted variable capture across library functions without explicitly keeping track of scope information about bound and free variables.}

\NNEW{Additionally, to build the typing environment, each term binding variable in ANF  is ascribed a set of possible types that can be associated with it. To implement this structurally in LTA, we extend each n-ary expression transition (e.g.\S{app} transition) in the LTA, to (n+1) arity with an additional incoming edge for the \textcolor{red}{type} of the resulting expression (e.g., function application term). See, for example, the arrows with label \textcolor{red}{type} in Figure~\ref{fig:ap-terms} for the \S{app} transition that has an edge ($q_{\tau} \rightarrow$ \fbox{\S{app}}),
$q_{\tau}$ to represent a set of valid types that can be ascribed to the application term (described next). 
Finally, we build a global typing environment mapping each expression type (annotated and inferred) with the binding variable using a typing environment construction function. This function is derived from a relation relating LTAs to typing environments; details are provided in the next section. }

\begin{figure*}[htbp]
\begin{flushleft}
\bigskip
{\bf Typing Environment Relation for $\mathcal{A}$}\quad \fbox{\footnotesize
    $\begin{array}{c} 
    \mathcal{R}_{Q} (q, \Gamma) \\
    \mathcal{R}_{\Delta} (\delta, \Gamma) \\
    \mathcal{R} (\mathcal{A}, \Gamma) 
    \end{array}$
}
\end{flushleft}
\bigskip
\begin{minipage}{0.3\textwidth}
{\footnotesize
\begin{center}
\inference[{\sc $\mathcal{R}_{Q}$-Var}]{\delta \equiv x (q_{\tau}) \T{\psi} q}{\mathcal{R}_{Q} (q, \{(x : \denotationNode{q_{\tau}})\})}
\end{center}
}
\end{minipage}
\begin{minipage}{0.3\textwidth}
{\footnotesize
\begin{center}
\inference[{\sc $\mathcal{R}_{Q}$-Const}]{\delta \equiv c (q_{\tau}) \T{\psi} q}{\mathcal{R}_{Q} (q, (c : \denotationNode{q_{\tau}}))}
\end{center}
}
\end{minipage}
\begin{minipage}{0.3\textwidth}
{\footnotesize
\begin{center}
\inference[{\sc $\mathcal{R}_{Q}$-GEN}]{ \delta \equiv f ({q}_1, {q}_2,\ldots,{q}_n, q_{\tau}) \T{\psi} q \\
\mathcal{R}_{Q} (q, \Gamma)  &  \mathcal{R}_{\Delta} (\delta, \Gamma')}{\mathcal{R}_{Q} (q, (\Gamma \cup \Gamma'))}
\end{center}
}
\end{minipage}
\bigskip
\bigskip    
\begin{minipage}{0.5\textwidth}
{\footnotesize
\begin{center}
\inference[{\sc $\mathcal{R}_{\Delta}$-GEN}]{ \delta \equiv f ({q}_1, {q}_2,\ldots,{q}_n, q_{\tau}) \T{\psi} q \\
\S{Fresh} (v_{\delta}) & i \in [1 \ldots n]. \ \mathcal{R}_{Q} (q_i, \Gamma_i)}{\mathcal{R}_{\Delta} (\delta, (\bigcup_{i \in [1 \ldots n]} \Gamma_i) \cup \{(v_{\delta} : \denotationEdge{f.\S{type}})\})}
\end{center}
}
\end{minipage}
\begin{minipage}{0.3\textwidth}
{\footnotesize
\begin{center}
\inference[{\sc $\mathcal{R}$-EMPTY}]{}{\mathcal{R} (\mathcal{A}, \varnothing)}
\end{center}
}
\end{minipage}
\bigskip
\begin{minipage}{0.4\textwidth}
{\footnotesize
\begin{center}
\inference[{\sc $\mathcal{R}$-FINAL}]{q \in Q_f &  \mathcal{R} (\mathcal{A}, \Gamma) \\
\mathcal{R}_{Q} (q, \Gamma')} {\mathcal{R} (\mathcal{A}, \Gamma \cup \Gamma')}
\end{center}
}
\end{minipage}
\begin{minipage}{0.4\textwidth}
{\footnotesize
\begin{center}
\inference[{\sc $\mathcal{R}$-TERM}]{q \in Q & \mathcal{R} (\mathcal{A}, \Gamma) \\
\nexists (f ({q}_1, {q}_2,\ldots,{q}_n, q_{\tau}) \T{\psi} q) \in \Delta & \mathcal{R}_{Q} (q, \Gamma')} {\mathcal{R} (\mathcal{A}, \Gamma \cup \Gamma')}
\end{center}
}
\end{minipage}

\caption{Inductively defined Relation $\mathcal{R}$, capturing the relation between $\mathcal{A}$ and the Typing Environment $\Gamma$. A little abuse of notation for presentation, denotation for a type node (and edge) $\denotationNode{q_{\tau}}$ is a singleton set, and we use it to also denote the element in the set in Rules {\sc $\mathcal{R}_{Q}$-Var}, {\sc $\mathcal{R}_{Q}$-Const} and {\sc $\mathcal{R}_{\Delta}$-GEN}}
\label{fig:environment}
\end{figure*}

The inference rules in Fig 1 define a relation $\mathcal{R} (\mathcal{A}, \Gamma)$, that holds iff we can construct $\Gamma$ using $\mathcal{A}$. 
The rules are constructive, giving a way to construct $\Gamma$ from $\mathcal{A}$ using two other relations -- a state-environment relation $\mathcal{R}_{Q} (q, \Gamma)$, capturing the construction of an environment from a sub-automaton rooted at a state $q$; and a transition-environment relation $\mathcal{R}_{\Delta} (\delta, \Gamma)$, capturing environment construction from a given transition $\delta$.

{\sc $\mathcal{R}$-EMPTY} is the base rule, which says that $\mathcal{R} (\mathcal{A}, \varnothing)$  holds for any automaton $\mathcal{A}$. Operationally, it means we can construct an empty environment from any automaton $\mathcal{A}$.
Rules {\sc $\mathcal{R}_{Q}$-Var} construct an environment with a singleton pair, mapping a variable $x$ to the denotation of the state $q_{\tau}$, which is a singleton set containing the type for $x$ (see Fig. 7 in the paper for definition). Intuitively, we are adding each type-annotated variable pair ($x, \tau$) to the environment.
{\sc $\mathcal{R}_{Q}$-Const} does a similar construction for type annotated constants, using corresponding transitions.

{\sc $\mathcal{R}_{Q}$-GEN} is a generic state-environment rule applicable to any transition target state $q$. It says, given a constructed environment $\Gamma$ from $q$ (possibly constructed by some other transition) and an environment  $\Gamma'$, constructed from this transition, we can take the union of  $\Gamma'$ and $\Gamma$ to construct the new environment from $q$.  

 {\sc $\mathcal{R}_{\Delta}$-GEN} similarly is a generic transition-environment construction rule. It says, given environments ($\Gamma_i$) constructed from each of the incoming states $q_i$, we can construct a new extended environment from the current transition by taking a union of $\Gamma_i$ and adding a new pair for the type of the current transition (refereed by the position $f$.type), mapped to a fresh variable for the transition $v_f$. 

Finally, {\sc $\mathcal{R}$-TERM} and {\sc $\mathcal{R}$-Final} construct a typing environment for the automata by adding the environments for each terminal states (states with no outgoing transition) and final states.

Collectively, these rules allow us to construct a typing environment corresponding to any given automata $\mathcal{A}$.

\pagebreak

\subsection{Resolving Refinement Predicates for \textcolor{red}{type} Edges in Transitions}
\NNEW{LTA construction and pruning also rely on inferring a set of feasible types for each transition. For instance see the incoming state ($q_{\tau} \rightarrow$ \fbox{\S{app}}) in the example. The possible set of types is shown using a transition (\fbox{$\{ \nu : \alpha_1 \mid \kappa_1\}$}
$\rightarrow$ \textcircled{$q_{\tau}$}), 
constructed using \textbf{Transitions} construction rules {\sc E-$\alpha$}, {\sc E-$\kappa$} and {\sc E-$\tau$-shape}  in Figure~\ref{fig:typing}}.

Inferring this type set precisely requires inferring the base type for type variable $\alpha_1$ and the refinement predicate (here $\kappa_1$). 
We earlier skipped the principal rule {\sc E-infer} for inferring possible values for these variables.
The {\sc E-infer} rules fetches each \textcolor{red}{type} transition for a given transition $\delta$ using, construct a typing environment consistent with the automata $\mathcal{A}$, using $\mathcal{R}$ and uses a auxiliary constraint solving procedure \S{Solve} to solve the constraint over these typing and refinement variables in $\psi$ thus infering corresponding mappings $M_{\alpha}$ and $M_{\kappa}$.
Inferring $M_{\alpha}$ is relatively straightforward using 
syntactic comparison between terms at constrained location, e.g., the given constraint relating function and argument type structure allows us to infer in this case that the type of the application term is [\S{int}].
Inferring $M_{\kappa}$ is more convoluted and needs some elucidation.
We use the transition constraints $\psi$ to generate logical \textit{implication} relations over these refinement variables.

For instance, consider semantic constraint \textcircled{d}\textcircled{d}, $\psi_j$ = [\S{arg}/\S{fun.in}]. \S{fun.out.ref} $\vDash$ \S{type.ref}.
To translate this to constraints over variables, we get the symbols at these positions and constrtaint these symbols according to $\psi_j$. For instance,  $\delta \blacktriangleright$\S{fun.in} = \{ \S{l} \}, $\delta \blacktriangleright$\S{arg} = \{\S{xs, ys} \}, however, \S{ys} we can ignore for our example as it does not satisfy \textcircled{a} and thus will be pruned out. Thus we will have $\theta$ = [\S{xs}/\S{l}]. 
Next $\delta \blacktriangleright$\S{fun.out.ref} = \{ $\phi_{\S{f}}$ \} and $\delta \blacktriangleright$\S{type.ref} = \{ $\kappa_1$\}.

Now relating these sets using $\psi$ we get a constraint:
(i) \S{len} ($\nu$) = \S{len (l)} $\vDash$ $\kappa_1$. 
Similarly \textcircled{b} will give us another constraint (ii) \S{len} ($\nu$) = 1 $\vDash$ \S{len} ($\nu$) > 0.
Now, to solve these constraints; a) we perform these checks in a consistent typing environment $\Gamma$ mapping variables occurring free in these constraint with types. For instance, for our example we have (\S{xs} : \{ $\nu$ : [int] $\mid$ \S{len}($\nu$) = 1\}) $\in \Gamma$.
b) We must lift these $\vDash$ checks in the logical implication checks using our earlier discussed interpretation $\denotation{.}$.
This along with the substitution in \textcircled{d} gives us the following logical constraint. 
$\denotation{\S{xs} : \{ \nu : [int] \mid \S{len}(\nu) = 1 \}}$ 
$\wedge$ ($\forall \nu$. \S{len} ($\nu$) = 1 $\implies$ \S{len} ($\nu$) > 0) 

$\wedge$ [\S{xs}/\S{l}] 

(\S{len} ($\nu$) = \S{len (l)} $\implies$ $\kappa_1$).

Using the $\denotation{}$ semantics, this translates to:
(\S{len}(\S{xs}) = 1) 
$\wedge$ true $\wedge$ [\S{xs}/\S{l}] (\S{len} ($\nu$) = \S{len (xs)} 
$\implies$ $\kappa_1$). 
Thus giving us a possible values of $M_{\kappa}$ as [$\kappa_1 \mapsto$ \S{len} ($\nu$) = 1].

\subsection{Handling Cycles}

\begin{definition}[Dependency Graph and Cyclic States]
Given a Liquid Tree Automaton $\mathcal{A} = (Q, F, Q_f, \Delta)$, we define its \textbf{dependency graph} as a directed graph $G_\Delta = (Q, E)$, where an edge $(q', q) \in E$ exists if there exists a transition rule $f(q_1, \dots, q_n) \xrightarrow{\psi} q \in \Delta$ such that $q' = q_i$ for some $1 \le i \le n$. 

A state $q \in Q$ is \textbf{cyclic}, 
denoted $q \in Q_{cyc}$, if there exists a non-empty path from $q$ to itself in $G_\Delta$ ($q \to^{+} q$). 
Otherwise, the state is acyclic ($q \in Q_{acyc}$).
\end{definition}

\begin{definition}[Well-formed LTA Constraints ]
A constrained transition $t = f(q_1, \dots, q_n) \xrightarrow{\psi} q \in \Delta$ is \textbf{well-formed} if and only if every position referenced within its constraint $\psi$ resolves to an acyclic state. 
Formally, let $pos \in \{1, \dots, n\}$ be a position index referenced in $\psi$ in a transition $\delta$. We require:
$\forall p \in \text{positions}(\psi) \implies \delta  \blacktriangleright p  \in Q_{acyc}$.

An LTA $\mathcal{A}$ satisfies the structural acyclic constraint restriction if all transitions $t \in \Delta$ are well-formed.
\end{definition}


\pagebreak
\subsection{Details of Soundness and Completeness Theorems and Proofs}

For a given upper bound {\it k} on the size of programs being
synthesized, the  {\sc LTASynthesize} algorithm is both sound and
complete.

\paragraph{\bf Soundness}.

\begin{theorem}[Soundness]
Given a type environment $\Gamma$ that relates library functions $f_i = \lambda (\overline{x_{i,j}}). e_{f_i}$ with their refinement types
$f_i : \overline{(x_{i,j} : \tau_{i,j})} \rightarrow \{ \nu :  t_i \mid \phi_{i} \} \in \Gamma$, and a synthesis query  $\varphi = \overline{(y_i : \tau_i)} \rightarrow \{\nu : t \mid \phi \}$, 
if {\sc LTASynthesize} ($\Gamma, \varphi, \S{k}$) =($\mathcal{A}_{\S{min}}$, \S{Terms} = $\mathrm{\{ }$e $\mid$ e $\in$ $\denotation{\mathcal{A}_{\S{min}}}$ $\mathrm{\}}$), then $\forall e \in \S{Terms}, \Gamma \vdash e: \varphi$, where $\Gamma$ is consistent with $\mathcal{A}_{\S{min}}$. 
\end{theorem}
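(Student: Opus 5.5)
The plan is to prove an invariant of the explore–reduce–check loop in {\sc LTASynthesize}, by induction on the number of recursive calls to {\sc Enumerate}. Every procedure that changes the automaton ({\sc WF}, {\sc Transition}, {\sc Prune}, {\sc Similarity}/{\sc Minimize}) is shown to preserve two properties.

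\begin{itemize}
\item[(I1)] \emph{Local soundness.} For every state $q$ and every $e \in \denotationNode{q}$ whose transition carries a \textcolor{red}{type} edge, $\Gamma \vdash e : \tau$ for each $\tau \in \denotationNode{q\blacktriangleright \S{type}}$, where $\mathcal{R}(\mathcal{A},\Gamma)$.
\item[(I2)] \emph{Consistency.} The environment $\Gamma$ obtained from $\mathcal{R}$ is consistent with the current automaton.
\end{itemize}

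Soundness then follows at the two return points (lines~3 and~11). There {\sc NEmpty} only yields final states, and the only transitions into a final state are \S{goal} transitions built by {\sc Q-goal}. Any $e$ accepted there satisfies $\textsc{SubType}(q_{\S{term}_k}\blacktriangleright\S{type}, q_{\S{goal}}\blacktriangleright\S{type})$. By (I1), $\Gamma \vdash e : \tau'$ with $\tau' <: \tau$, so {\sc T-Subtype} gives $\Gamma \vdash e : \tau$. Wrapping with the query binders via {\sc T-fun} gives $\Gamma \vdash e : \varphi$.

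The first step is a \emph{constraint-faithfulness lemma}: for any term $t$ and states $q_i,q_j$, $t \vDash \textsc{SubType}(q_i\blacktriangleright \S{type}, q_j\blacktriangleright\S{type})$ implies $\Gamma \vdash \tau_i <: \tau_j$ for the types selected by $t$. I prove this by induction on the structure of Equation~\ref{eq:sub}.
\begin{itemize}
\item In the base case, the syntactic atom $\S{i.type.t} = \S{j.type.t}$ gives equal base types. The semantic atom $\S{i.type.ref} \vDash \S{j.type.ref}$ is interpreted as $\denotation{\Gamma}\wedge\phi_i \Rightarrow \phi_j$, matching {\sc T-Sub-Base}.
\item In the arrow case, the contravariant/covariant recursive calls match {\sc T-Sub-Arrow}.
\end{itemize}
The substitution $\theta$ is handled with Equation~\ref{eq:theta}: $\denotation{\theta}$ realises exactly the substitution $\tau_2[e_2/x]$ of {\sc T-App}.

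With this lemma, (I1) for {\sc WF} and {\sc Transition} is a rule-by-rule match with Fig.~\ref{fig:ap-e-typing-expressions}:
\begin{itemize}
\item {\sc e-var} and {\sc e-const} correspond to {\sc T-var};
\item {\sc e-app} gives the two subtyping premises, and then {\sc T-App} plus {\sc T-Subtype};
\item {\sc e-if} corresponds to {\sc T-If} with the branch assumptions $\phi$ and $\neg\phi$ pushed into $\Gamma$;
\item {\sc e-infer} needs \S{Solve} to return solutions satisfying $\psi$, which I will assume.
\end{itemize}
The induction for a transition uses (I1) on its incoming states, which holds because the automaton is built bottom-up. For cyclic parts I rely on the well-formedness restriction, so constraints only mention acyclic states and the induction on term size is well-founded.

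For the reductions, the key observation is that {\sc Prune} only \emph{shrinks} denotations.
\begin{itemize}
\item $\sqcap_{\S{Syntax}}$ is standard intersection, so it returns a subset.
\item $\sqcap^{\psi}_{\S{Semantics}}$ returns either $\delta_i$ itself, when the entailment is valid, or $\delta_\bot$, whose denotation is empty.
\end{itemize}
Hence $\denotation{\mathcal{A}'}\subseteq\denotation{\mathcal{A}}$, and (I1) is inherited. I must also check that $\mathcal{R}$ is monotone under deletion, so the environment used for the remaining checks is still a valid context.

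The main obstacle is {\sc M-trans}. It redirects uses of the supertype transition's target $q_j$ to $q_i$, which can add new terms rather than only removing them. I will argue via a substitution lemma: if $\Gamma\vdash e_i:\tau_i$, $\tau_i <: \tau_j$ (which is what {\sc S-trans} guarantees by the faithfulness lemma), and $C[e_j]$ is well typed, then $C[e_i]$ is well typed. This holds by {\sc T-Subtype}, since each position that accepted a term of type $\tau_j$ now receives one of type $\tau_i <: \tau_j$, and the constraints on enclosing transitions are rechecked on the redirected edges.

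The delicate point is that refinements in enclosing types may mention the substituted variable through $\theta$. I would handle this by noting that after ANF and alpha-renaming, the binder names are fresh. So $\theta$ maps formals to the new argument's binding variable, and the entailment constraint on the enclosing \S{app} transition is evaluated afresh. If it fails, the term is simply not in the denotation. Finally, (I2) follows because $\mathcal{R}$ is recomputed from the minimized automaton, so every term and its type used above come from that automaton.
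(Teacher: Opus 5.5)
Your argument has the same skeleton as the paper's: a denotation-correctness invariant proved rule by rule against the typing rules, carried through the reductions, and discharged at lines~3 and~11. Where it differs is the reduction step. The paper argues that both \textsc{Prune} and \textsc{Minimize} return automata whose language is contained in the original one, so correctness is simply inherited. You accept containment for \textsc{Prune}, but you rightly observe that \textsc{M-trans} keeps every consumer $\delta_k$ of $q_j$ and adds a copy $\delta_k[q_j \mapsto q_i]$, so containment is not evident. You instead re-establish local soundness on the redirected transitions by re-evaluating their position-based constraints.

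The paper's route is shorter and reuses the construction-time lemma verbatim, but it rests on a containment that \textsc{M-trans} does not supply. Yours is robust to redirection. It also makes soundness independent of whether $\mathcal{E}$ is a genuine subtyping relation; that only matters for completeness. Your explicit use of the \textsc{Q-goal} constraint with \textsc{T-Subtype} at the return points, and the factored-out faithfulness lemma for \textsc{SubType}, also make precise two steps the paper leaves implicit: ``$q_f \blacktriangleright \S{type} = \varphi$'' and ``by construction'' in the \textsc{E-app} case.

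Two adjustments are needed.
\begin{itemize}
\item As stated, (I1) is too strong. The set $q \blacktriangleright \S{type}$ collects the type states of \emph{all} transitions into $q$, and a type state may carry several candidate types, so a term need not have every one of them. State the invariant for the type component selected by $e$ itself. That is what your faithfulness lemma and the \textsc{E-app} step actually use.
\item Let the re-checking argument alone carry \textsc{M-trans}. The substitution lemma via $\tau_i <: \tau_j$ does not cover terms reaching $q_i$ through transitions other than $\delta_i$. Under dependent \textsc{T-App} the enclosing result type changes anyway. For the same reason your assumption on \S{Solve} is unnecessary: types produced by \textsc{E-infer} sit at the \S{type} position and are filtered by the transition constraint.
\end{itemize}
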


Informally Stated: Programs synthesized by the {\sc LTASynthesize} procedure are correct with
respect to the provided query specification $\Psi$ assuming the
validity of each library function against their
specifications.

We assume that every library has a correct type annotation using the following annotation correctness assumption.
\begin{proposition}[Annotation correctness]
\label{assume:type-correctness}
 $\forall$ $g \in \mathcal{F}$, and the query $\varphi$ = $\overline{(x_i : \tau_i)}$ $\rightarrow$ $\tau$, 
if $g$ has an annotated type $\tau$, then $\mathcal{F}, \overline{ x_i : \tau_i}$ $\vdash$ $g : \tau$.
\end{proposition}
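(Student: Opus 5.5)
The Annotation correctness proposition is mostly an assumption about the library: each component has been verified against its annotation. My plan is to reduce it to two standard facts about the refinement calculus \calculus{} (Figs.~\ref{fig:ap-e-typing-expressions} and~\ref{fig:ap-e-typing}): a \emph{variable lookup} fact and a \emph{weakening} lemma. The key observation is that $\mathcal{F}$ is used as a type context here, containing the binding $g : \tau$ for every library symbol $g$. So the typing judgment $\mathcal{F} \vdash g : \tau$ should hold directly by rule {\sc T-var}: $\Gamma(g) = \tau$ with $\Gamma = \mathcal{F}$, and the side condition $\mathcal{R}(\mathcal{A}, \Gamma)$ is discharged by {\sc $\mathcal{R}$-EMPTY} together with {\sc $\mathcal{R}_{Q}$-Var}. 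Semantic validity of the body $e_g$ against $\tau$ is exactly the hypothesis ``library functions are valid against their specifications'' stated before the Soundness theorem. I would make that dependence explicit rather than try to derive it.

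The substantive step is extending the context from $\mathcal{F}$ to $\mathcal{F}, \overline{x_i : \tau_i}$. I would prove a weakening lemma by induction on typing derivations: if $\Gamma \vdash e : \sigma$ and $x \notin \mathrm{dom}(\Gamma) \cup FV(e) \cup FV(\sigma)$, then $\Gamma, x : \tau' \vdash e : \sigma$.
\begin{itemize}
\item For the syntax-directed rules ({\sc T-var}, {\sc T-fun}, {\sc T-App}, {\sc T-Let}, {\sc T-If}, and the generalization/instantiation rules), the induction is routine, using the convention that every variable has a unique binding site.
\item For {\sc T-Gen-T}, the freshness premise $\alpha \notin FV(\Gamma)$ survives because the $\tau_i$ mention no type variables bound in $g$'s derivation.
\item For {\sc T-Subtype} and {\sc T-Sub-Base}, I need a weakening lemma for the logical entailment $\Gamma \vDash \phi_1 \Rightarrow \phi_2$. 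The interpretation $\denotation{\Gamma, x:\tau'}$ adds a conjunct $\denotation{\tau'}[x/\nu]$ about a variable fresh for $\phi_1, \phi_2$. So validity is preserved whenever $\denotation{\tau'}$ is satisfiable for some $x$, or trivially if we read the implication with the extra hypothesis on the left.
\item Well-formedness ({\sc WF-Base}) weakens in the same way.
\end{itemize}

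Freshness of the $x_i$ relative to $g$'s derivation comes from the alpha-renaming of all library argument variables described in the implementation section. That renaming guarantees the query parameters $x_i$ never collide with binders inside any $g$ or its annotation. Applying weakening once per query argument then yields $\mathcal{F}, \overline{x_i : \tau_i} \vdash g : \tau$.

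I expect the main obstacle to be the entailment-weakening case inside {\sc T-Sub-Base}. Adding the query hypotheses $\denotation{\tau_i}$ only strengthens the antecedent, so the implication remains valid. However, the paper's $\denotation{\Gamma}$ also depends on the relation $\mathcal{R}(\mathcal{A}, \Gamma)$, and I must check that the enlarged environment is still $\mathcal{R}$-related to some automaton. I would handle this with {\sc $\mathcal{R}$-TERM}/{\sc $\mathcal{R}_{Q}$-Var}, adding each $x_i$ as a terminal state $q_{x_i}$. This is precisely what {\sc wf-var} and {\sc e-var} do when $\mathcal{A}_0$ is built, so consistency follows by construction.
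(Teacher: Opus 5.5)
You are right that this is essentially an assumption, and that is exactly how the paper treats it. The paper gives no proof: the statement is introduced with ``We assume that every library has a correct type annotation'' and is then cited as an axiom, in the \textsc{E-const} case of the denotation-correctness lemma and in the no-reduction completeness lemma. You likewise leave the semantic fact that each body $e_g$ meets its annotation as a hypothesis. Where you differ is that you also derive the judgment as literally written, and that is a useful clarification. Once $\mathcal{F}$ is read as a context binding $g : \tau_g$, the proposition is a single \textsc{T-var} step. So the later proofs that cite it only use this lookup, and the real content of the assumption is semantic, invisible to the typing judgment. Two corrections, though. First, the weakening lemma is superfluous. Query parameters and library names are distinct (unique binding sites plus alpha-renaming), so $(\mathcal{F}, \overline{x_i : \tau_i})(g) = \tau_g$ and \textsc{T-var} applies directly in the extended context. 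Even if you do weaken, the derivation being weakened is a single \textsc{T-var} leaf, so the \textsc{T-Sub-Base} ``main obstacle'' never arises. Your satisfiability caveat is also unnecessary, since adding conjuncts to the antecedent of a valid implication preserves validity unconditionally. Second, your discharge of $\mathcal{R}(\mathcal{A}, \Gamma)$ is misstated. \textsc{$\mathcal{R}$-EMPTY} yields only $\mathcal{R}(\mathcal{A}, \varnothing)$, not $\mathcal{R}$ for your nonempty $\Gamma$. Lifting $\mathcal{R}_{Q}$ facts needs \textsc{$\mathcal{R}$-TERM} or \textsc{$\mathcal{R}$-FINAL}, whose premises are loosely specified: as written, \textsc{$\mathcal{R}$-TERM} forbids transitions \emph{into} $q$, which sits awkwardly with the premise of \textsc{$\mathcal{R}_{Q}$-Var}. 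The cleaner move is to note that the proposition uses the plain judgment $\vdash$ with no automaton subscript, so there is no $\mathcal{R}$ side condition to discharge. If you insist on $\vdash_{\mathcal{A}}$, fold $\mathcal{R}$ into the assumption, as the paper implicitly does.
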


We begin with defining an important Lemma for proving the soundness theorem:

\begin{lemma}[\automaton Denotation Correctness]
\label{lemma:denotation-correctness}

    Assuming, the type annotation in the library and query arguments are correct, $\forall$ \automata $\mathcal{A}_{\S{q}}$, rooted at a node \S{q}, $\forall$ $e$ $\in \denotation{\mathcal{A}_{\S{q}}}$. $\mathcal{F}$, $\mathcal{A}$
    $\vdash$ $e : \varphi$. {\bf Where $\mathcal{F}$, $\mathcal{A}$
    $\vdash$ $e : \varphi$. is same as  $\Gamma \vdash_{\mathcal{A}}$ $e$ : $\varphi$ given  $\Gamma$ = $\mathcal{F}$ and $\mathcal{R} (\mathcal{A}, \Gamma)$}
\end{lemma}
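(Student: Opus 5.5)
We prove Lemma~\ref{lemma:denotation-correctness} by structural induction on the derivation that places $e$ in $\denotation{\mathcal{A}_{\S{q}}}$. Unfolding Fig.~\ref{fig:denotation}, $e \in \denotationNode{q}$ means there is a transition $\delta = f(q_1,\ldots,q_n) \T{\psi} q$ with $e = f\,\overline{t_i}$, each $t_i \in \denotationNode{q_i}$, and $f\,\overline{t_i} \vDash \psi$. Since terms are finite trees, the argument is an induction on the height of $e$, even in the presence of cyclic states. The cycle restriction of Section~\ref{sec:cyclic} ensures that $\psi$ only mentions acyclic positions, so each constraint is a finite formula and its satisfaction is well-defined.

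The typing judgment throughout is $\Gamma \vdash_{\mathcal{A}} e : \tau$, where $\Gamma = \mathcal{F}$ and $\mathcal{R}(\mathcal{A},\Gamma)$. Here $\varphi$ is read as the type recorded at the \S{type} position of $\delta$; for the goal state it is the query type. We proceed by case analysis on which construction rule of Fig.~\ref{fig:typing} (extended in Fig.~\ref{fig:ap-typing}) introduced $\delta$.

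\textbf{Leaf cases.} The cases {\sc e-var} and {\sc e-const} follow from Proposition~\ref{assume:type-correctness} (annotation correctness), together with {\sc T-var} and the rules {\sc $\mathcal{R}_Q$-Var} and {\sc $\mathcal{R}_Q$-Const}. These rules guarantee that the binding $x : \denotationNode{q_\tau}$ is actually in the environment produced by $\mathcal{R}$. The type and predicate transitions ({\sc wf-*}) are discharged by the corresponding well-formedness rules of Fig.~\ref{fig:ap-e-typing}.

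\textbf{Application.} For {\sc e-app}, the inductive hypothesis gives $\Gamma \vdash_{\mathcal{A}} t_f : (x:\tau_1)\to\tau_2$ and $\Gamma \vdash_{\mathcal{A}} t_a : \tau_a$. We then establish a small auxiliary claim: if a term satisfies $\textsc{SubType}(p_1,p_2)$, then $\Gamma \vdash_{\mathcal{A}} \tau_{p_1} <: \tau_{p_2}$. The claim is proved by induction on the equation defining \textsc{SubType}. In the base clause, syntactic equality of \S{type.t} and semantic entailment of \S{type.ref} (interpreted as $\denotation{\Gamma} \wedge \denotation{\theta}.\phi_i \Rightarrow \phi_j$) yield {\sc T-Sub-Base}. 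The arrow clause, with its contravariant argument, yields {\sc T-Sub-Arrow}.

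Applying the claim to the first conjunct of $\psi$ and using {\sc T-Subtype} gives $\Gamma \vdash_{\mathcal{A}} t_a : \tau_1$, so {\sc T-App} yields the type $\tau_2[t_a/x]$. The second conjunct is under the substitution $\theta$, which by Equation~\ref{eq:theta} denotes exactly $[t_a/x]$. The claim applied to it gives $\tau_2[t_a/x] <: \tau_{\S{type}}$, and a final {\sc T-Subtype} step completes the case.

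\textbf{Conditionals and goal.} The {\sc e-if} case is analogous. The conjuncts $q_b\blacktriangleright\S{ref}$ and its negation push $\phi$ and $\neg\phi$ into the environment, matching {\sc T-If}. For the {\sc Q-goal} transition, the claim together with {\sc T-fun} over the query arguments gives $e : \varphi$.

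\textbf{Main obstacle.} The delicate point is connecting the environment used to interpret semantic entailment with the typing $\Gamma$. Equation~\ref{eq:sem} evaluates entailment under some $\Gamma$ with $\mathcal{R}(\Gamma,\mathcal{A})$, so we need monotonicity: anything typed under a $\Gamma$ constructed from a sub-automaton remains typed under the $\Gamma$ constructed from $\mathcal{A}$. We would prove a weakening lemma, namely that $\mathcal{R}$ only adds bindings ({\sc $\mathcal{R}_Q$-GEN}, {\sc $\mathcal{R}$-FINAL}) and that binders are fresh by the unique-binding/ANF convention. With that lemma, standard refinement-type weakening transfers each entailment and each inductive hypothesis into the common environment.
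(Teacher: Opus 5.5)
Your proposal is correct at the paper's level of rigor and follows the same route as its proof. Like the paper, you do a case analysis on the construction rule that introduced the top transition, use annotation correctness for the \textsc{e-var}/\textsc{e-const} leaves, and turn the \textsc{SubType} conjuncts of $\psi$ into subtyping facts that feed \textsc{T-App}, \textsc{T-If} and \textsc{T-Subtype}; your version is in places tighter than the paper's, since the term-height induction stays well founded under cycles, you read $\theta$ as the substitution $[t_a/x]$ that \textsc{T-App} requires, and you orient the final subsumption correctly as $\tau_2[t_a/x] <: \tau_{\mathsf{type}}$ where the paper states it backwards, though your auxiliary claim, like the paper's argument, silently assumes that the two positions given to \textsc{SubType} have matching shapes, because its catch-all clause returns $\mathsf{true}$.
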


\begin{proof}
The proof is using induction on the construction of the automaton, using automata construction rules.

\begin{itemize}
    \item  We first take the base cases for well-formedness rules.
    \begin{enumerate}
        \item Cases of well-formedness of {\sc wf-prim}, {\sc wf-pred}, {\sc wf-base} and {\sc wf-arrow}, are trivially true as these are not expressions and our typing systems assumes any well-formed type or predicate has a higher-order kind.

        \item Case {\sc E-const}:
            \begin{itemize}
                \item Given a constant $c : \tau \in \mathcal{F}$
                \item Assumption Correct Annotation, $\mathcal{F} \vdash c : \tau$
                \item The automaton constructed for $c$ is given by $c (q_{\tau}) \hookrightarrow q_c$
                \item Using $\denotation{.}$ definition. $\denotation{\mathcal{A}_{q_c}}$ = $c : \tau$.
                \item Which holds using assumption.
            \end{itemize}
        \item Case {\sc E-var}: This case is similar to {\sc e-const}.
        \item Case {\sc E-app}:
            \begin{itemize}
                \item Using IH, for each $f \in \denotation{q_{f}}$ (Henceforth, using $q_f$ for $\mathcal{A}_{q_f}$, by the definition of automata rooted at a state), $\mathcal{F}, \mathcal{A} \vdash f : q_f \blacktriangleright \S{type}$ and $a \in \denotation{q_a}$
                $\mathcal{F}, \mathcal{A} \vdash a : q_a \blacktriangleright \S{type}$
                
                \item $\denotation{q_{\S{app}}}$ = $\denotationEdge{\S{app} (q_{\tau}, q_{f}, q_{a})}$.
                \item Let us assume $f \in \denotation{q_f}$, then by IH, $\mathcal{F}, \mathcal{A} \vdash f : \tau_{\S{in}} \rightarrow \tau_{\S{out}}$ for some $\tau_{\S{in}}$ and $\tau_{\S{out}}$.
                \item Similarly $a \in \denotation{q_a}$, $\mathcal{F}, \mathcal{A} \vdash a : \tau_a$.
                \item The constraint $\psi$ in the {\sc E-app} conclusion, by construction (using the {\sc SubType} equation, enforces that the $\tau_a <: \tau_{\S{in}}$ 
                \item Using the above two premise, for each term $\S{app} (f, a) \in \denotation{q_{\S{app}}}$ we have the premise to apply Expression typing rule {\sc T-App} (Figure~\ref{fig:ap-e-typing}), with $\mathcal{F}, \mathcal{A}$ as $\Gamma$. (variable to type mapping construction for $\mathcal{A}$ is straight-forward so skipped).
                \item Using {\sc T-App}, $\S{app} (f, a) \in \denotation{q_{\S{app}}}$ $\mathcal{F}, \mathcal{A} \S{app} (f, a) \vdash \tau_{out}$.
                \item Again, $\psi$ in the {\sc E-app} ensures that $q_{\S{app}} \blacktriangleright \S{type} <: \tau_{out}$.
                \item Finally, using the standard subtyping rule $\mathcal{F}, \mathcal{A} \S{app} (f, a) \vdash \tau_{\S{app}}$, giving the required proof. 
            \end{itemize}
        \item Case for construction of type-level function application {\sc e-inst-t} is similar and uses {\sc T-inst-T} rule in place {\sc T-app} rule.
        \item Case {\sc E-if}: 
            \begin{itemize}
                \item Using IH, for each $b \in \denotation{q_{b}}$ (Henceforth, using $q_f$ for $\mathcal{A}_{q_f}$, by the definition of autoamat rooted at a state), $\mathcal{F}, \mathcal{A} \vdash b : q_b \blacktriangleright \S{type}$ and $e_t \in \denotation{q_t}$
                $\mathcal{F}, \mathcal{A} \vdash e_t : q_t \blacktriangleright \S{type}$ and $e_f \in \denotation{q_f}$
                $\mathcal{F}, \mathcal{A} \vdash e_f : q_f \blacktriangleright \S{type}$
                \item $\denotation{q_{\S{if}}}$ = $\denotationEdge{\S{if} (q_{\tau}, q_{b}, q_{t}, q_{f})}$.
                \item \item Let us assume $b \in \denotation{q_b}$, then by IH, $\mathcal{F}, \mathcal{A} \vdash b : \{ \nu : bool \mid \phi_b\}$ for some $\phi_{b}$.
                \item Similarly $e_t \in \denotation{q_t}$, $\mathcal{F}, \mathcal{A} \vdash e_t : \tau_{t}$ and for $e_f \in \denotation{q_f}$ $\mathcal{F}, \mathcal{A} \vdash e_f : \tau_{f}$
                \item  The constraint $\psi$ in the {\sc E-if} conclusion, by construction (using the {\sc SubType} equation, enforces that the $\mathcal{F}, \mathcal{A}, \phi_b \vdash \tau_{e_t} <: \tau_{\S{e_{if}}}$.
                \item  Also constraint $\psi$ in the {\sc E-if} conclusion, by construction (using the {\sc SubType} equation, enforces that the $\mathcal{F}, \mathcal{A}, \neg \phi_b \vdash \tau_{e_f} <: \tau_{\S{e_{if}}}$.
                \item \item Using the above two premise, for each term $\S{if} (b, e_t, e_f) \in \denotation{q_{\S{if}}}$ we have the premise to apply Expression typing rule {\sc T-If} (Figure~\ref{fig:ap-e-typing}), with $\mathcal{F}, \mathcal{A}$ as $\Gamma$. 
                \item Using {\sc T-if}, $\S{if} (b, e_t, e_f) \in \denotation{q_{\S{if}}}$ $\mathcal{F}, \mathcal{A} \S{if} (b, e_t, e_f) \vdash \tau_{if}$.
            \end{itemize}
        \item Case {\sc e-let}: This case has the similar argument as the {\sc e-app} and {\sc e-if}.
    \end{enumerate}
\end{itemize}
\end{proof}

\begin{lemma}[{\sc Prune} preserves Denotation Correctness]
\label{lemma:prune-denotation-correctness}

    Assuming, the type annotation in the library and query arguments are correct, $\forall$ \automata $\mathcal{A}_{\S{q}}$, rooted at a node \S{q}, $\forall$ $e$ $\in \denotation{\mathcal{A}_{\S{q}}}$. $\mathcal{F}$, $\mathcal{A}$
    $\vdash$ $e : \varphi$. {\bf Where $\mathcal{F}$, $\mathcal{A}$
    $\vdash$ $e : \varphi$. is same as  $\Gamma \vdash_{\mathcal{A}}$ $e$ : $\varphi$ given  $\Gamma$ = $\mathcal{F}$ and $\mathcal{R} (\mathcal{A}, \Gamma)$}
\end{lemma}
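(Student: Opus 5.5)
The plan is to reduce the statement to Lemma~\ref{lemma:denotation-correctness} by showing that {\sc Prune} only ever shrinks denotations. Let $\mathcal{A}$ be the automaton before pruning and $\mathcal{A}'$ the result of applying the {\sc p-trans} judgment $\mathcal{A} \vdash \Delta \rightsquigarrow^{\star} \Delta'$. The central claim is monotonicity: for every state $q$, $\denotationNode{q}_{\mathcal{A}'} \subseteq \denotationNode{q}_{\mathcal{A}}$, and likewise for transitions. Given this, any $e \in \denotation{\mathcal{A}'_q}$ already lies in $\denotation{\mathcal{A}_q}$. Lemma~\ref{lemma:denotation-correctness} then gives $\Gamma \vdash_{\mathcal{A}} e : \varphi$ for $\Gamma = \mathcal{F}$ with $\mathcal{R}(\mathcal{A},\Gamma)$. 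The last step is to transfer this judgment to $\mathcal{A}'$.

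I will prove monotonicity by induction on the length of the pruning derivation, and within a single step by well-founded induction on positions. The well-foundedness comes from the acyclic-constraint restriction, since constraints never refer to cyclic states. One {\sc p-trans} step replaces each transition in $\delta\blacktriangleright p_1$ by its reduced form $\delta_r$, and there are two cases.
\begin{itemize}
\item Under {\sc p-syn-eq}, $\delta_r = \sqcap_{\S{Syntax}}(\delta_1,\delta_2)$. By the standard tree-automata property of syntactic intersection, $\denotationEdge{\delta_r} = \denotationEdge{\delta_1}\cap\denotationEdge{\delta_2} \subseteq \denotationEdge{\delta_1}$.
\item Under {\sc p-sem-ent}, $\delta_r$ is either $\delta_1$ itself or $\delta_\bot$, and $\denotationEdge{\delta_\bot}=\varnothing$.
\end{itemize}
In both cases the denotation at $p_1$ does not grow. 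Every denotation rule in Fig.~\ref{fig:denotation} is built from unions, products over incoming states, and filtering by $\psi$, all of which are monotone. So the inclusion propagates upward to every ancestor state, and normalization, which deletes $\delta_\bot$, only removes terms.

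Next I transfer the typing judgment from $\mathcal{A}$ to $\mathcal{A}'$. The typing rules depend on the automaton only through $\mathcal{R}(\mathcal{A},\Gamma)$ with $\Gamma=\mathcal{F}$. Pruning does not change $\mathcal{F}$, and annotation correctness (Proposition~\ref{assume:type-correctness}) is stated over $\mathcal{F}$. I therefore argue that the environment built by $\mathcal{R}$ for $\mathcal{A}'$ is a sub-environment of the one for $\mathcal{A}$: the $\mathcal{R}_Q$ and $\mathcal{R}_\Delta$ rules only ever add bindings for transitions that survive. I then check that every variable free in $e$ still receives its binding in that sub-environment. Since $e \in \denotation{\mathcal{A}'_q}$, the transitions that produced those bindings are present in $\mathcal{A}'$. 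As an alternative, one can redo the case analysis of Lemma~\ref{lemma:denotation-correctness} directly on $\mathcal{A}'$, noting that each surviving transition still carries its original constraint $\psi$.

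I expect the main obstacle to be the semantic case. One must show that keeping $\delta_1$ when $\denotation{\Gamma}\wedge\denotation{\theta}.\denotation{\phi_1}\implies\denotation{\phi_2}$ holds, and discarding it otherwise, cannot create a term that violates the subtyping obligation in {\sc E-app} or {\sc E-if}. This is a soundness requirement only, since completeness is handled separately in Theorem~\ref{thm:completeness}. Because the reduction never adds symbols and never weakens $\psi$, I would argue that any surviving term already satisfied $\psi$ in $\mathcal{A}$, so the {\sc SubType} premises used in the {\sc T-App}, {\sc T-If} and {\sc T-Subtype} steps of Lemma~\ref{lemma:denotation-correctness} carry over unchanged.
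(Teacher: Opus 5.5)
Your proposal is correct and follows the paper's intended argument. The paper's own proof environment for this lemma is empty; the only trace of its reasoning is in the Soundness proof, which uses the lemma exactly as $\llbracket \mathcal{A}_{r} \rrbracket \subseteq \llbracket \mathcal{A} \rrbracket$ with well-typedness then inherited from the Denotation Correctness lemma, so your case split on \textsc{p-syn-eq}/\textsc{p-sem-ent} and your transfer from $\vdash_{\mathcal{A}}$ to $\vdash_{\mathcal{A}'}$ only add detail it omits (its Soundness proof stops at the judgment relative to the unpruned $\mathcal{A}$).
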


\begin{proof}
    
\end{proof}

\begin{lemma}[{\sc Minimize} preserves Denotation Correctness]
\label{lemma:minimize-denotation-correctness}

    Assuming, the type annotation in the library and query arguments are correct, $\forall$ \automata $\mathcal{A}_{\S{q}}$, rooted at a node \S{q}, $\forall$ $e$ $\in \denotation{\mathcal{A}_{\S{q}}}$. $\mathcal{F}$, $\mathcal{A}$
    $\vdash$ $e : \varphi$. {\bf Where $\mathcal{F}$, $\mathcal{A}$
    $\vdash$ $e : \varphi$. is same as  $\Gamma \vdash_{\mathcal{A}}$ $e$ : $\varphi$ given  $\Gamma$ = $\mathcal{F}$ and $\mathcal{R} (\mathcal{A}, \Gamma)$}
\end{lemma}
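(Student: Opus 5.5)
The plan is to show that every term in the denotation of the minimized automaton already belonged to the denotation of the automaton before minimization, possibly up to replacing one subterm by another whose type is a subtype. Typing is then transferred using Lemma~\ref{lemma:denotation-correctness} (or Lemma~\ref{lemma:prune-denotation-correctness}, since {\sc Minimize} runs after {\sc Prune}) together with the subsumption rule {\sc T-Subtype}. Since {\sc m-lta} is the reflexive--transitive closure of {\sc m-trans}, I would argue by induction on the length of the rewrite sequence $(\mathcal{A},\mathcal{E}) \vdash \Delta \rightsquigarrow^{\star} \Delta'$. The invariant is the statement of the lemma itself: every $e \in \denotation{\mathcal{A}_q}$ satisfies $\mathcal{F},\mathcal{A} \vdash e : \varphi$ at the appropriate type. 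The base case, with zero steps, is the hypothesis inherited from {\sc Prune}.

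For a single {\sc m-trans} step on a pair $(\delta_i,\delta_j)\in\mathcal{E}$, I would first unfold how the pair entered $\mathcal{E}$. It can only have entered through {\sc s-eq}, hence through {\sc s-trans}. That gives $\delta_\bot \notin {\sqcap^{\psi_{<:}}}_{\S{Semantics}}(\delta_i\blacktriangleright\S{type},\delta_j\blacktriangleright\S{type})$, and by Equation~\ref{eq:sem} and the definition of {\sc SubType} (Equation~\ref{eq:sub}) this means $\denotation{\Gamma} \wedge \phi_i \implies \phi_j$ with equal base types. Here $\Gamma$ is any environment with $\mathcal{R}(\mathcal{A},\Gamma)$. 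I would turn this into a derivation of $\Gamma \vdash_{\mathcal{A}} \tau_i <: \tau_j$ via {\sc T-Sub-Base}, or {\sc T-Sub-Arrow} by a small induction on the recursive clauses of {\sc SubType}. Now consider a term $e'$ of the new denotation. Either it uses none of the redirected transitions $\delta_k[q_j \mapsto q_i]$, in which case it was already in the old denotation. Or some subterm $e_i$ now sits at a position where the old automaton had a term from $\denotation{q_j}$. The inductive hypothesis gives $e_i : \tau_i$, and subsumption gives $e_i : \tau_j$. I would then replay the typing derivation for the enclosing transition $\delta_k$, exactly as in the {\sc E-app}/{\sc E-if} cases of Lemma~\ref{lemma:denotation-correctness}. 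The constraint $\psi_k$ is still checked on $e'$, because the denotation in Fig.~\ref{fig:denotation} filters by $f\,\overline{t_i} \vDash \psi$. So the premises of {\sc T-App}/{\sc T-If} are re-established for the new argument. This is a replacement lemma: a term of a subtype can be substituted in any typing context. I would prove it by structural induction on the surrounding term, following the construction rules.

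The removed transition $\delta_j$ only shrinks the denotation, so it cannot break the invariant. I also need that $\mathcal{R}(\mathcal{A}',\Gamma')$ still yields an environment under which the typings hold. Here I would argue that the environment built from $\mathcal{A}'$ by the $\mathcal{R}$ rules is contained in the old one, since minimization only deletes transitions and redirects edges to existing states. Weakening, applied in reverse, is not generally valid, though, so I would instead show that no variable used in $e'$ loses its binding: the relevant bindings come from the subautomata actually traversed by $e'$, and these are all retained.

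The main obstacle is the dependent substitution in {\sc E-app}. The result type is $\tau_{\S{out}}[a/x]$, so replacing an argument of type $\tau_j$ by one of type $\tau_i$ changes the instantiated refinement of the application. A naive subsumption argument then does not directly give the same output type. My first attempt is to rely on the fact that $\psi_k$ contains $\theta.\textsc{SubType}(q_f\blacktriangleright\S{out}, q_{\S{app}}\blacktriangleright\S{type})$ and that this constraint is re-evaluated on the new term, which holds by the filtering in $\denotationEdge{\cdot}$. The typing of the new application then follows directly from its own constraint rather than from the old derivation. If that is not enough, for instance when $\theta$ refers to symbols of the merged state, I would strengthen the invariant so that merged transitions have identical symbol positions referenced by $\theta$. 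That should hold because {\sc m-trans} rewrites only incoming-state occurrences.
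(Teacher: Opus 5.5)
The paper gives no proof of this lemma; its proof environment is empty. The only hint of an intended route is in the Soundness proof, which invokes the lemma together with the claim $\llbracket \mathcal{A}_{\mathsf{r}} \rrbracket \subseteq \llbracket \mathcal{A} \rrbracket$. On that route {\sc Minimize} only deletes terms, so correctness is simply inherited.

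Your route is genuinely different and more careful. You note that {\sc m-trans} keeps every $\delta_k$, removes only $\delta_j$, and adds the copies $\delta_k[q_j \mapsto q_i]$. A subterm from $\llbracket q_i \rrbracket$ can then appear where only $\llbracket q_j \rrbracket$ was admitted, so inclusion cannot be taken for granted. Where inclusion holds, it makes the lemma independent of how constraints encode typing. Your approach survives the new terms, and the idea in your last paragraph is in fact sufficient by itself. A copy keeps its symbol and its constraint $\psi_k$, which is stated over positions rather than particular states. So it imposes exactly the premises $\delta_k$ imposed, now checked on the new subterm. For {\sc e-app} the copy is literally another instance of the rule, since the rule's only requirement on incoming states is $q_f, q_a \in Q$. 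Because $\llbracket\cdot\rrbracket$ filters every node by its own constraint, the case analysis in the proof of Lemma~\ref{lemma:denotation-correctness} can be re-run directly on $\mathcal{A}_{\mathsf{min}}$.

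You should reorganize the proof around that point, and several steps as written would fail.

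\textbf{Drop the replacement lemma.} It is false in general here, because the result type in {\sc T-App} depends on the argument term, and it is not needed. Likewise, the subtyping you extract from {\sc s-trans} plays no role in soundness; it matters only for completeness (Lemma~\ref{lem:minimise-small}).

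\textbf{Change the induction.} Induct on the size of terms in $\llbracket \mathcal{A}_{\mathsf{min}} \rrbracket$, not on rewrite steps with a hypothesis about the old automaton. If the redirection also rewrites a transition below $q_i$, then $\llbracket q_i \rrbracket$ itself changes. For example, take $f(y) \lesssim y$ with $\delta_i = \mathsf{app}(q_\tau, q_f, q_j) \hookrightarrow q_i$; its copy is a self-loop on $q_i$. Then $e_i$ need not lie in the old language, and ``the inductive hypothesis gives $e_i : \tau_i$'' is unavailable.

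\textbf{Drop the fallback invariant.} The requirement that merged transitions share the symbols referenced by $\theta$ is unnecessary. It also fails in the paper's own example, which merges $\mathsf{x}$ and $\mathsf{y}$.

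\textbf{Avoid strengthening for the environment.} An environment built from $\mathcal{A}_{\mathsf{min}}$ by the $\mathcal{R}$ rules need not be contained in one built from $\mathcal{A}$, since the copies receive fresh $\mathcal{R}_{\Delta}$ bindings. Moreover, dropping bindings does not preserve typing in general for refinement environments. A removed refinement can constrain variables that remain, and an entailment check may have used it. Instead, read both the constraint filter and the typing judgement against a single $\Gamma'$ with $\mathcal{R}(\mathcal{A}_{\mathsf{min}}, \Gamma')$, so the induction never leaves $\mathcal{A}_{\mathsf{min}}$.
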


\begin{proof}
    
\end{proof}

\begin{theorem}[Soundness]
Given a type environment $\Gamma$ that relates library functions $f_i = \lambda (\overline{x_{i,j}}). e_{f_i}$ with their refinement types
$f_i : \overline{(x_{i,j} : \tau_{i,j})} \rightarrow \{ \nu :  t_i \mid \phi_{i} \} \in \Gamma$, and a synthesis query  $\varphi = \overline{(y_i : \tau_i)} \rightarrow \{\nu : t \mid \phi \}$, 
if {\sc LTASynthesize} ($\Gamma, \varphi, \S{k}$) =($\mathcal{A}_{\S{min}}$, \S{Terms} = $\mathrm{\{ }$e $\mid$ e $\in$ $\denotation{\mathcal{A}_{\S{min}}}$ $\mathrm{\}}$), then $\forall e \in \S{Terms}, \Gamma \vdash e: \varphi$, where $\Gamma$ is consistent with $\mathcal{A}_{\S{min}}$. 
\end{theorem}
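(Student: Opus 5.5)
The proof goes by case analysis on the two points at which {\sc LTASynthesize} can return a non-$\bot$ value, namely line~3 (returning $\mathcal{A}_0$) and line~11 (returning $\mathcal{A}_{\S{min}}$ after some number of explore--reduce rounds). We maintain an invariant over the recursion of {\sc Enumerate}: every LTA $\mathcal{A}$ passed to or produced inside {\sc Enumerate} satisfies denotation correctness. Concretely, for every state $q$ and every $e \in \denotation{\mathcal{A}_q}$ we require $\Gamma \vdash_{\mathcal{A}} e : q\blacktriangleright\S{type}$ with $\mathcal{R}(\mathcal{A},\Gamma)$. Once the invariant holds for the returned automaton, soundness follows by inspecting the final transition produced by {\sc Q-goal}. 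Every $e \in \denotation{\mathcal{A}_{\S{min}}}$ lies in the denotation of some $q_{\S{goal}} \in Q_f$, so $e$ has the form $\S{goal}(\tau, t)$ with $t \in \denotation{q_{\S{term}_k}}$. The goal constraint $\textsc{SubType}(q_{\S{term}_k}\blacktriangleright\S{type}, q_{\S{goal}}\blacktriangleright\S{type})$ holds for $t$ by the definition of $\denotationEdge{\cdot}$. We then apply {\sc T-Subtype}, and {\sc T-Fun} over the query binders $\overline{y_i:\tau_i}$ (which enter $\Gamma$ through {\sc e-var} and Proposition~\ref{assume:type-correctness}), to obtain $\Gamma \vdash e : \varphi$.

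The invariant is established step by step. (i) $\mathcal{A}_0 = \textsc{WF}(\mathcal{F},\mathcal{A}_\bot)$ and every automaton produced by {\sc Transition} are built only by the rules of Fig.~\ref{fig:typing}, so Lemma~\ref{lemma:denotation-correctness} applies directly. (ii) {\sc Prune} preserves it by Lemma~\ref{lemma:prune-denotation-correctness}. The underlying reason is that pruning only replaces transitions at a position by their syntactic or semantic intersections, or by $\delta_\bot$, so denotations can only shrink. I would make this explicit by showing $\denotationNode{q}$ after pruning is a subset of $\denotationNode{q}$ before, and that the constraints on the surviving transitions are unchanged, so the typing derivations from (i) still apply. (iii) {\sc Minimize} preserves it by Lemma~\ref{lemma:minimize-denotation-correctness}.

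Step (iii) is the crux, because {\sc M-trans} redirects edges from the removed supertype transition $\delta_j$ to the retained $\delta_i$. This can add new terms to denotations of parent states. I would argue as follows. The pair $(\delta_i,\delta_j)\in\mathcal{E}$ was admitted by {\sc S-trans} only if the semantic intersection under $\textsc{SubType}(\delta_i\blacktriangleright\S{type},\delta_j\blacktriangleright\S{type})$ is non-bottom, which by Equation~\ref{eq:sem} means $\Gamma \vdash \tau_i <: \tau_j$. Hence each term $e'$ newly plugged into a parent transition $\delta_k$ has a type that is a subtype of the type of the term it replaces. For each parent constraint, I would show that it is either a subtype constraint with the redirected position on the left, where transitivity of $<:$ (a standard lemma of the refinement calculus) preserves satisfaction, or a syntactic base-type equality, which is preserved because {\sc SubType} forces equal base types. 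Denotation is also defined by filtering, so even if a constraint failed, the offending term would simply be excluded and would never be unsoundly admitted. This observation is what actually secures soundness: whatever term remains in $\denotationEdge{\delta_k}$ satisfies $\delta_k$'s constraint, and the {\sc E-app}/{\sc E-if} cases of Lemma~\ref{lemma:denotation-correctness} can then be replayed.

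Finally, the consistency requirement between $\Gamma$ and $\mathcal{A}_{\S{min}}$ is discharged by taking $\Gamma$ to be the environment constructed through $\mathcal{R}(\mathcal{A}_{\S{min}},\Gamma)$ (Fig.~\ref{fig:environment}) extended with $\mathcal{F}$. Its bindings are read off sub-automata via {\sc $\mathcal{R}_Q$-Var}, {\sc $\mathcal{R}_Q$-Const} and {\sc $\mathcal{R}_\Delta$-GEN}, so each binding is witnessed by a sub-automaton whose denotation contains the bound term, and conversely. The recursion on {\sc Enumerate} is then closed by induction on $k - \mathrm{depth}(\mathcal{A})$.
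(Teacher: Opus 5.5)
Your argument is sound at the paper's level of rigor and shares its skeleton: the split on the returns at lines~3 and~11, the Denotation Correctness Lemma, the {\sc Prune}/{\sc Minimize} preservation lemmas, and a final appeal to the {\sc Q-goal} transition. Where you differ is in how you justify the reduction steps.

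The paper states its two preservation lemmas without proof and argues purely by inclusion. It asserts $\llbracket \mathcal{A}_{\S{r}} \rrbracket \subseteq \llbracket \mathcal{A} \rrbracket$ after both {\sc Prune} and {\sc Minimize}, and lets typing carry over from the unreduced automaton. You correctly note that this holds for {\sc Prune} but not for {\sc Minimize}. {\sc M-trans} keeps each parent $\delta_k$ and adds a redirected copy $\delta_k[q_j \mapsto q_i]$, which can put new terms into parent states. Your per-state invariant needs no inclusion: every transition's denotation re-checks its own constraint, so the {\sc E-app}/{\sc E-if} cases replay over the new children. This is the more robust route, and your subtyping-transitivity remarks then matter only for completeness.

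You also make explicit several things the paper skips. The paper reasons about a single round starting from the automaton at line~6, whereas you carry the invariant across all recursive {\sc Enumerate} calls. You spell out the {\sc T-Subtype}/{\sc T-Fun} step to $\varphi$. You also address the consistency side condition, which the paper's proof never discharges.

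Three small repairs are needed:
\begin{itemize}
\item $k-\mathrm{depth}(\mathcal{A})$ need not decrease, since {\sc Prune} and {\sc Minimize} can lower the depth. Induct instead on the number of rounds in the finite run that produced the result.
\item From the second round on, {\sc Transition} acts on an automaton that already contains intersected and redirected transitions. Step~(i) should therefore be phrased as preservation (rule-built transitions added over children that satisfy the invariant). It is not a direct application of a lemma proved by induction on rule-based construction.
\item State the invariant per term, against the type child of $e$ itself, rather than against the possibly multi-valued $q\blacktriangleright\S{type}$. This makes the replay precise.
\end{itemize}
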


The proof is based on the previous denotation correctness lemma that every term in the language of a \automaton is well typed; and the fact that the {\sc LTASynthesize} generates a LTA, where the final state(s) are target of a transition with a type $\varphi$.

\begin{proof}
    \begin{itemize}
        \item If $\mathcal{A}$ = {\sc LTASynthesize} ($\mathcal{F}, \varphi, \S{k}$), then it must be a return value at either at line 11 or line 3 in the Algorithm. 
        We consider the following two cases:
        \item Case $\mathcal{A} = \mathcal{A}_0$ at Line 3:
        \begin{enumerate}
            \item From Line 3, the initially constructed LTA $\mathcal{A}_{0}$ has a solution.
            \item Using the definition for {\sc NEmpty}, $\exists q_f \in {Q_f}_0$ such that $\denotation{q_f} \neq \varnothing$. 
            \item Now such a state $q_f$ must have been constructed by {\sc WF}, rule using the query $\varphi$, such that $q_f \blacktriangleright \S{type}$ = $\varphi$.
            \item Using the Denotional Correctness Lemma ~\ref{lemma:denotation-correctness}, $\forall e \in \denotation{q_f}$, $\mathcal{F}, \mathcal{A} \vdash e : \varphi$
        \end{enumerate}
        \item Case $\mathcal{A} = \mathcal{A}_{\S{min}}$ at Line 11: 
            This is a more involved case with two main differences from the earlier case; 
            Additional call to {\sc Transition} at line 6. Plus calls to {\sc prune}, {\sc similarity} and {\sc minimize} (lines 7-9).
            \begin{enumerate}
            \item From Line 11, the minimized LTA $\mathcal{A}_{\S{min }}$ has a solution.
            \item Let us consider the automata $\mathcal{A}$ at line 6.  
            \item Using the definition for {\sc NEmpty}, $\exists q_f \in {Q_f}$ such that $\denotation{q_f} \neq \varnothing$. 
            \item Now such a state $q_f$ must have been constructed by {\sc WF}, rule, {\sc Q-goal} for the given query $\varphi$, such that $q_f \blacktriangleright \S{type}$ = $\varphi$.
            \item Now, using the Denotaional Correctness Lemma ~\ref{lemma:denotation-correctness}, $\forall e \in \denotation{q_f}$, $\mathcal{F}, \mathcal{A} \vdash e : \varphi$.
            \item Additionally using Lemma~\ref{lemma:prune-denotation-correctness} if $\mathcal{A}_{\S{r}}$ = {\sc prune} ($\denotation{A}$) then ($\denotation{A}_{\S{r}}$) $\subseteq$ ($\denotation{A}$) and $\denotation{A}_{\S{r}}$ preserve denotation correctness.
            
            \item Similarly, from Lemma~\ref{lemma:minimize-denotation-correctness}, given $\mathcal{A}_{\S{r}}$ =  {\sc minimize} ($\denotation{A}, \mathcal{E}$), for some sound similarity relation $\mathcal{E}$, then  $\mathcal{A}_{\S{r}} \subseteq$ ($\denotation{A}$) and and $\denotation{A}_{\S{r}}$ preserve denotation correctness.
            \item Thus using the soundness of the {\sc Similarity} function,
            \item Additionally from 6 and 7 we have $\denotation{\mathcal{A}_{\S{r}}} \subseteq \denotation{\mathcal{A}}$.         
            \item Finally Using 5 and 9, we have   $\forall e \in \denotation{q_f}$, $\mathcal{F}, \mathcal{A} \vdash e : \varphi$
        \end{enumerate}
    \end{itemize}
\end{proof}

\pagebreak

\pagebreak
\subsection{Completeness}

\begin{theorem}[Completeness]
\label{thm:completeness}
    Given a type environment $\Gamma$ that relates library functions $f_i = \lambda (\overline{x_{i,j}}). e_{f_i}$ with their refinement types
$f_i : \overline{(x_{i,j} : \tau_{i,j})} \rightarrow \{ \nu :  t_i \mid \phi_{i} \} \in \Gamma$, and a synthesis query  $\varphi = \overline{(y_i : \tau_i)} \rightarrow \{\nu : t \mid \phi \}$, 
if {\sc LTASynthesize} ($\Gamma, \varphi, \S{k}$), = $\bot$, then $\nexists$
    a term $e \in \denotation{\mathcal{A}_{\S{complete}}}$ containing fewer than $k+1$ library function calls, such that
    $\Gamma \vdash$ $e$ : $\varphi$ and $\Gamma$ is consistent with $\mathcal{A}_{\S{complete}}$. Where $\mathcal{A}_{\S{complete}}$ is the complete LTA of size $k$, for the given $\Gamma$, generated without any reduction.
\end{theorem}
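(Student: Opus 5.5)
We prove the contrapositive. Suppose there is a term $e \in \denotation{\mathcal{A}_{\S{complete}}}$ with at most $k$ library calls, with $\Gamma \vdash e : \varphi$ and $\Gamma$ consistent with $\mathcal{A}_{\S{complete}}$. We show that {\sc LTASynthesize} ($\Gamma, \varphi, \S{k}$) cannot return $\bot$. Since $\bot$ is returned only at line~13, it suffices to show that at some depth $d \le k$ the check {\sc NEmpty} at line~2 or line~10 succeeds.

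The invariant we maintain is \emph{semantic coverage}. After the $d$-th iteration, the reduced automaton $\mathcal{A}_{\S{min}}^{d}$ stands in the following relation to the unreduced automaton $\mathcal{A}^{d}_{\S{complete}}$ built by applying {\sc WF} and then {\sc Transition} $d$ times. For every state $q$ of $\mathcal{A}^{d}_{\S{complete}}$ and every $e' \in \denotationNode{q}$ with type $\tau'$, there is a state $q'$ of $\mathcal{A}_{\S{min}}^{d}$ and a term $e'' \in \denotationNode{q'}$ of type $\tau''$ with $\Gamma \vdash \tau'' <: \tau'$. Moreover, $e''$ uses no more library calls than $e'$.

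The proof is by induction on $d$, in four steps.

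\begin{enumerate}
\item \textbf{Base case.} Nothing is reduced before line~2, so $\mathcal{A}_0$ coincides with $\mathcal{A}^0_{\S{complete}}$ and the invariant holds trivially.

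\item \textbf{{\sc Transition} preserves coverage.} The rules in Fig.~\ref{fig:typing} are monotone in the available states. If $e'$ is built by {\sc E-app} from $f'$ and $a'$, the induction hypothesis gives representatives $f''$ and $a''$ with more specific types. By contravariance of {\sc T-Sub-Arrow} and transitivity of subtyping, the constraint $\psi$ of {\sc E-app} is still satisfied when $f'$ and $a'$ are replaced by $f''$ and $a''$. Hence $\S{app}(f'', a'')$ lies in the new layer, and its result type is a subtype of the original one. The {\sc E-if} case is analogous, strengthening the context with $\phi_b$ and $\neg\phi_b$.

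\item \textbf{{\sc Prune} preserves coverage.} Rules {\sc p-syn-eq} and {\sc p-sem-ent} remove only transitions that reach $\delta_{\bot}$. By Equation~\ref{eq:sem}, $\delta_{\bot}$ arises exactly when the entailment $\denotation{\Gamma} \wedge \denotation{\theta}.\denotation{\phi_i} \implies \denotation{\phi_j}$ fails. Such a transition therefore contributes no term to any $\denotationEdge{\delta}$, because Fig.~\ref{fig:denotation} already filters by $\psi$. Pruning thus leaves every denotation unchanged, and coverage is unaffected.

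\item \textbf{{\sc Minimize} preserves coverage.} Rule {\sc m-trans} deletes $\delta_j$ only when $(\delta_i,\delta_j) \in \mathcal{E}$. By {\sc s-trans}, this means the type of $\delta_i$ is a subtype of the type of $\delta_j$. Every parent transition that used $q_j$ is rewritten to use $q_i$. Any term that used a subterm from $\delta_j$ is therefore matched by a term using the corresponding subterm from $\delta_i$, which has a more specific type and no more calls. By the argument of step~2, the parent's constraint still holds after this substitution. Closing under {\sc m-lta} ($\rightsquigarrow^{\star}$) gives the step.
\end{enumerate}

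With the invariant in hand, the conclusion follows. Take the given $e$, which has at most $k$ calls, so it lives at some depth $d \le k$. The invariant yields $e''$ in $\mathcal{A}_{\S{min}}^{d}$ whose type is a subtype of the type of $e$. Since $\Gamma \vdash e : \varphi$, transitivity makes the {\sc SubType} constraint of {\sc Q-goal} hold for $e''$. Therefore $\denotationNode{q_{\S{goal}}} \neq \varnothing$ and {\sc NEmpty} succeeds at or before iteration $d$, contradicting the $\bot$ result.

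The main obstacle is step~4. {\sc s-trans} compares only the refinements of the two types, through ${\sqcap}_{\S{Semantics}}$. We must check that this really implies $\Gamma \vdash \tau_i <: \tau_j$ in the sense of Fig.~\ref{fig:ap-e-typing}, including for arrow types, where {\sc SubType} recurses contravariantly. We also need a substitution lemma: replacing a subterm by one of smaller type preserves every transition constraint, including the $\theta$-substituted entailment in {\sc E-app}. The latter is delicate because $\theta$ substitutes the actual argument into the result refinement. I would first prove a replacement lemma stating that if $\Gamma \vdash a'' : \tau''$, $\tau'' <: \tau'$ and $\Gamma \vdash a' : \tau'$, then the typing derivation of $C[a']$ can be replayed for $C[a'']$ at a subtype. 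If that fails for dependent result types, I would restrict the similarity relation to pairs equivalent up to renaming, as in the $x$/$y$ example, which suffices for the paper's reductions.
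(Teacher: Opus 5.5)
\textbf{The main gap is in your step~2, which step~4 and the final \textsc{Q-goal} argument depend on.}

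Contravariance does handle replacing the function $f'$ by a more specific $f''$. Transitivity handles the first conjunct of the \textsc{E-app} constraint. Replacing the argument is different. If you swap $a'$ for a representative $a''$ with $\Gamma \vdash \tau'' <: \tau'$, the application does not get a subtype of its original type. \textsc{T-App} gives $\tau_2[a''/x]$ versus $\tau_2[a'/x]$, and the $\theta$-substituted conjunct is rechecked with the new argument. Whenever the result refinement mentions the formal, these two types are incomparable.

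This is a failing step, not a missing replacement lemma, and it refutes the statement for the algorithm as specified. Take the library $f : (a : \mathsf{int}) \rightarrow \{\nu : \mathsf{int} \mid \nu = a\}$, the query $(x : \{\nu : \mathsf{int} \mid \nu \geq 0\}) \rightarrow (y : \mathsf{int}) \rightarrow \{\nu : \mathsf{int} \mid \nu = y\}$, and any $k \geq 1$.
\begin{itemize}
\item The one-call term $f\,y$ is a solution.
\item \textsc{s-trans} nevertheless puts $(x,y)$ into $\mathcal{E}$, exactly as in the paper's own $x$/$y$ example.
\item \textsc{m-trans} then deletes $y$ and redirects its uses to $x$.
\item No term built from $f$ and $x$ has a refinement entailing $\nu = y$. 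So \textsc{NEmpty} fails at every depth and the algorithm returns $\bot$.
\end{itemize}

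Your fallback does not repair this. Restricting similarity to types equal up to renaming still fails: give $x$ the type $\mathsf{int}$ as well. It would also exclude the paper's worked example, where the merged types are strictly ordered. And it changes the algorithm rather than proving the theorem. Any fix needs a condition on the terms, not just their types. For example, the removed term must never flow into a position whose refinement is substituted into a result type. Otherwise the statement has to be weakened accordingly.

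\textbf{Your step~3 also claims more than the rules give.} \textsc{p-trans} rewrites the shared transition set, $\Delta[\delta \blacktriangleright p_1/\Delta_r]$. A transition that fails one parent's constraint is therefore removed for every parent. In the paper's own example, the entire transition for $g$ disappears because $x$ cannot be passed to it. Denotations of other states do change, so your unconditional invariant (a representative for every term of every state) does not survive pruning.

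At best you get the paper's weaker ``small-LTA modulo the query'' property. Even that needs an argument you do not have: a transition rejected in one parent context at the current depth cannot occur in any solution with at most $k$ calls. This must include solutions whose arguments are only built at later depths, since the termination argument says pruned transitions are memoized and not rebuilt.

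\textbf{Comparison with the paper.} The paper's proof has the same decomposition without your induction over iterations:
\begin{itemize}
\item completeness of the unreduced construction;
\item a lemma that \textsc{Minimize} yields a small-subset LTA;
\item a one-line lemma that \textsc{Prune} yields a small-LTA modulo the query;
\item a final appeal to subtyping.
\end{itemize}
Its lemmas compare removed and retained transitions only in isolation, so it silently assumes your step~2. Your framing has the merit of showing exactly where the argument breaks. Neither route closes the gap without extra hypotheses on the library or restrictions on the reductions.
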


We begin with the definition of \emph{Small subset} relation between search spaces.
Let us define a search space of all the well-typed terms of size less than $k+1$ as $S_k$. The {\sc LTASynthesize} algorithm without reductions (i.e. {\sc prune} and {\sc minimize}) (where we naturally add all transitions and states based on $L$ and typing rules) is exhaustive over this search space. Let us call the search space of the {\sc LTASynthesize} with {\sc reductions} as ${S_k}_{\S{reducced}}$.

\begin{definition}[Small-subset]
Given a library $\mathcal{F}$ and an automaton $\mathcal{A}$. A search space $S_k$ is a finite set of all possible well-typed expressions $\mid \eip \mid \leq k$ of length upto k.
A small-subset $S_k$' $\subseteq S_k$ is a search space such that if $\exists$ a well-typed $\lambda_{\S{LTA}}$ expression $\eip \in S_k$, then $\exists \eip' \in$ $S_k$' such that if $\mathcal{F}, \mathcal{A} \vdash \eip' : \tau'$, and $\mathcal{F}, \mathcal{A} \vdash \eip : \tau$, then $\tau' <: \tau$. 
\end{definition}

Since a LTA of size $k$ is a representation of a search space of terms upto size $k$, we can generalize this definition to \emph{Small-Subset-LTA}, using the denotation function for the LTA.

\begin{definition}[Small-LTA]
Given a LTA $\mathcal{A}$, 
a LTA $\mathcal{A}$' is a small-LTA for $\mathcal{A}$, (written as $\mathcal{A}$' $\subset_{\S{small}}$ $\mathcal{A}$) if the search space $\denotation{\mathcal{A}'}$,  is a \emph{small-subset} of  the search space $\denotation{\mathcal{A}}$
\end{definition}

\begin{definition}[Small-LTA-modulo-query]
Given a LTA $\mathcal{A}$, 
a LTA $\mathcal{A}$' is a small-LTA-modulo-query for $\mathcal{A}$ and a given query $\varphi$, (written as $\mathcal{A}$' $\subset_{\S{small}}^{\varphi}$ $\mathcal{A}$) if the search space $\denotation{\mathcal{A}'}$,  is a \emph{small-subset} of  the search space $\denotation{\mathcal{A}}$, for any $\mathcal{F}, \mathcal{A} \eip : \varphi$.
\end{definition}

To prove the required goal for completeness, we go in three steps:
\begin{itemize}
    \item First we proof, that if $\exists $ $\eip$, |$\eip$| $\leq k$, such that $ \mathcal{F}, \denotation{\mathcal{A}_{\S{complete}}} \vdash \eip : \varphi$, then {\sc LTASynthesize} without {\sc Prune} and {\sc Minimize} will produce an automaton $\mathcal{A}_{\S{complete}}$, such that $\eip \in \denotation{\mathcal{A}_{\S{complete}}}$.
    \item We prove that the {\sc LTASynthesize} with {\sc Minimize} always produces an automata $\mathcal{A}$ which maintains a \textit{Small-LTA} property with respect to the complete automata.
    \item We prove that {\sc prune} ($\mathcal{A}$) always produces an automata $\mathcal{A}_{\S{r}}$, such that $\mathcal{A}_{\S{r}}$ is a \textit{small-LTA-modulo-query} for $\mathcal{A}$ modulo the query $\varphi$.
    \item Use the sub-typing judgement to prove that if there exists a solution in $\mathcal{A}_{\S{complete}}$, then there must exists a solution in $\mathcal{A}$
\end{itemize}

\begin{lemma}[LTASynthesize-NoReduction-is-Complete]
    \label{lem:synthesize-complete}
    Let us assume that both {\sc Prune} and {\sc Minimize} are identity function, thus having no effect on the transitions. Let us call this variant of {\sc LTASynthesize} as {\sc LTASynthesize (-All)}.  
    $\forall \eip$, |$\eip$| $\leq k$, Iff $\mathcal{F} \vdash \eip : \varphi$, 
    then $\eip \in \denotation{\mathcal{A}_{\S{complete}}}$, 
    where $\mathcal{A}_{\S{complete}}$ = {\sc LTASynthesize (-All)}.  ($\mathcal{F}$, $\varphi$,$ k$)
\end{lemma}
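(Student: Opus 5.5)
The plan is to prove both directions of the biconditional by structural induction, using the construction judgments of Fig.~\ref{fig:typing} on one side and the typing rules of Fig.~\ref{fig:ap-e-typing-expressions} on the other. The ``if $\eip \in \denotation{\mathcal{A}_{\S{complete}}}$ then $\mathcal{F} \vdash \eip : \varphi$'' direction follows almost immediately from Lemma~\ref{lemma:denotation-correctness}. With \textsc{Prune} and \textsc{Minimize} set to the identity, $\mathcal{A}_{\S{complete}}$ is built only by \textsc{WF} and \textsc{Transition}. Every term accepted at $q_{\S{goal}}$ is therefore well typed under $\mathcal{F}$ together with the environment $\Gamma$ with $\mathcal{R}(\mathcal{A}_{\S{complete}}, \Gamma)$. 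The {\sc Q-goal} constraint $\textsc{SubType}(q_{\S{term}_k}\blacktriangleright\S{type}, q_{\S{goal}}\blacktriangleright\S{type})$ then gives the final subtyping step, and {\sc T-Subtype} yields $\varphi$.

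The substantive direction is: if $\mathcal{F} \vdash \eip : \varphi$ with $|\eip| \leq k$, then $\eip \in \denotation{\mathcal{A}_{\S{complete}}}$. I would first strengthen the claim to an invariant over the recursion of \textsc{Enumerate}. After $d$ iterations, for every A-normal term $\eip'$ with at most $d$ library calls and every type $\tau'$ such that $\mathcal{F} \vdash \eip' : \tau'$ via a syntax-directed derivation, there should be a state $q$ with $\eip' \in \denotationNode{q}$. Moreover, $q\blacktriangleright\S{type}$ should contain a transition denoting a type $\tau''$ with $\tau'' <: \tau'$.

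The proof goes by induction on $d$, and inside that by induction on the typing derivation.
\begin{itemize}
\item Variables and constants are handled by {\sc E-var} and {\sc E-const} in $\mathcal{A}_0$, since \textsc{WF} is run over the whole of $\mathcal{F}$ and the query arguments.
\item For an application $\eip_1\,\eip_2$, the induction hypothesis supplies states $q_f$ and $q_a$. \textsc{Transition} applies {\sc E-app} to every pair of existing states, so the transition $\S{app}(q_\tau,q_f,q_a)$ exists. It remains to check that the term satisfies $\psi$. The first conjunct is discharged by the premise $\tau_2 <: \tau_1$ of {\sc T-App}, obtained by inverting {\sc T-Subtype}, together with the definition of \textsc{SubType} in Equation~\ref{eq:sub}. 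The second conjunct uses the substituted result type $\tau_2[\eip_2/\S{x}]$, which {\sc E-infer} places at $q_\tau$.
\item Conditionals are handled by {\sc E-if}, with the guard refinement added as a hypothesis.
\item Type and refinement instantiation follow the same pattern, using the extended rules.
\end{itemize}
Finally, at depth $k$ the {\sc Q-goal} transition accepts $\eip$ because $\tau'' <: \tau' <: \varphi$ and subtyping is transitive.

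The main obstacle I expect is matching the semantic constraint check $t \vDash \psi$ against the declarative subtyping premise. I have to show that the interpretation $\denotation{\Gamma}\wedge\denotation{\theta}.\denotation{\phi_i}\implies\denotation{\phi_j}$ used in LTA acceptance coincides with $\Gamma \vDash \phi_1 \Rightarrow \phi_2$ in {\sc T-Sub-Base}. This requires that the environment built by $\mathcal{R}$ contain the bindings of the derivation's $\Gamma$, which I would argue from the unique-binder and ANF assumptions and {\sc $\mathcal{R}_{\Delta}$-GEN}. It also requires that {\sc E-infer} produce a refinement at least as strong as the one used in the derivation. For this second point I would first try to argue that \textsc{Solve} returns the strongest (principal) solution, so that any derivation's type is a supertype of the inferred one. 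Failing that, I would weaken the invariant to ``some inferred type is a subtype'' as stated above, which is all the goal check needs.
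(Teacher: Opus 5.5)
\paragraph{How this relates to the paper.} Your strategy rests on the same fact the paper uses: with \textsc{Prune} and \textsc{Minimize} the identity, the construction rules mirror the typing rules and nothing is removed. The paper uses this fact in a single non-inductive step. For $e' = f(e_1,\ldots,e_n)$ it simply asserts that a transition with $q\blacktriangleright\S{type} = \varphi$ exists. It never recurses into the $e_i$, never checks $\psi$, and never treats the converse.

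\paragraph{The gap: the application step.} Your nested induction, and your use of Lemma~\ref{lemma:denotation-correctness} for the converse, spell out what that assertion must mean. Doing so exposes a real gap in the application case, and your fallback does not close it.

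The conjunct $\theta.\textsc{SubType}(q_f\blacktriangleright\S{out}, q_{\S{app}}\blacktriangleright\S{type})$ of the \textsc{E-app} constraint forces the type $\tau_{\S{app}}$ chosen at $q_\tau$ to be a supertype of the function's output type under $\theta$. Your invariant, however, needs $\tau_{\S{app}} <: \tau_2[\eip_2/\S{x}]$. Whether any type in that window is present at $q_\tau$ depends entirely on what \S{Solve} returns in \textsc{E-infer}, and the paper never specifies this.

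``Some inferred type is a subtype'' is not a weakening that avoids the problem. It is the invariant you already stated, and proving it still requires \S{Solve} to return a refinement at least as strong as the one in the derivation. Suppose \S{Solve} may return any valid but weaker solution, such as a qualifier-based liquid solution or even $\kappa := \S{true}$. Take $f : (x:\S{int}) \rightarrow \{\nu:\S{int} \mid \nu = x+1\}$ and the query $(y:\S{int}) \rightarrow \{\nu:\S{int} \mid \nu = y+1\}$. Then the term $f\ y$ reaches $q_{\S{app}}$ only with a type too weak for the \textsc{Q-goal} check, and the forward direction fails.

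You need an explicit hypothesis. For example, assume \S{Solve} returns, up to equivalence under $\Gamma$, the strongest solution: the function's output refinement with the actual argument substituted for the formal. This is exactly what the paper's claim $q\blacktriangleright\S{type}=\varphi$ silently assumes.

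\paragraph{Smaller repairs.}
\begin{itemize}
\item Index the invariant by application depth (or by $|\eip|$, as in the statement), not by library calls. \textsc{E-app} consumes one argument at a time, so $\S{splitAt}\ y\ z$ already needs two \S{app} layers.
\item Your case list omits \textsc{T-Let}. In A-normal form every nested call goes through a \textbf{let}, yet the paper has no \textsc{E-let} rule and \textsc{E-app} accepts an arbitrary $q_a$. Either drop the A-normal restriction and let the induction hypothesis supply $q_a$ for a compound argument directly, or relate \textbf{let}-derivations to nested \S{app} transitions via the fresh binders of {\sc $\mathcal{R}_{\Delta}$-GEN}.
\item Your case list also omits \textsc{T-Fun}. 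You need it to connect the body checked by \textsc{Q-goal} against the result type to the judgment $\mathcal{F} \vdash \eip : \varphi$.
\item The \textsc{E-if} constraint must be read with implications. As written, $(b \wedge \ldots) \wedge (\neg b \wedge \ldots)$ is unsatisfiable, so no conditional would ever be accepted.
\item Take $\mathcal{A}_{\S{complete}}$ to be the full depth-$k$ automaton, as the completeness theorem describes it. The algorithm as written returns at the first depth where \textsc{NEmpty} succeeds, or returns $\bot$. Its literal output therefore need not contain larger solutions, and your closing ``at depth $k$'' step would not apply to it.
\end{itemize}
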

\begin{proof}
  The proof follows is by  contradiction:
  \begin{itemize}
      \item Let assume that $\exists e'$, |$e'$| $\leq k$ and $\mathcal{F} \vdash e' : \varphi$.
      \item This would mean that that there exists a typing derivation in \calculus for $e'$, this holds by the soundness of the underlying type system for \calculus (Assumption~\ref{assume:type-correctness}).
      \item Given that the transition addition rules exactly mimic the typing judgments in \calculus and in  {\sc LTASynthesize (-All)} no transition is ever removed from $\mathcal{A}_{\S{complete}}$.
      \item Thus wlog, if we assume that term $e' = f (e_1, \ldots e_n)$ then, there must be a corresponding transition rule which adds a transition $f (q_1, \ldots q_n) \T{\psi} q \in \Delta_{\S{complete}}$, with $q \blacktriangleright \S{type}$ = $\varphi$.
      \item Thus using the definition for $\denotation{\cdot}$, $e' \in \denotation{\mathcal{A}_{\S{complete}}}$.
      \item Contradicts our initial assumption, thus $\forall \eip$, |$\eip$| $\leq k$, Iff $\mathcal{F} \vdash \eip : \Psi$, 
    then $\eip \in \denotation{\mathcal{A}_{\S{complete}}}$.
  \end{itemize}

\end{proof}

\begin{lemma}[Minimize-produces-small-subset-LTA]
\label{lem:minimise-small}
The {\sc Minimize} ($\mathcal{A}, \mathcal{E}$) routine always produces an automaton $\mathcal{A}_{\S{min}}$
such that $\mathcal{A}_{\S{min}}$ $\subset_{\S{small}}$ $\mathcal{A}$.
\end{lemma}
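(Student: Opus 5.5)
The proof proceeds by induction on the length of the rewrite sequence $(\mathcal{A}, \mathcal{E}) \vdash \Delta \rightsquigarrow^{\star} \Delta'$ that underlies rule {\sc m-lta}. It suffices to show two things. First, a single application of {\sc m-trans} yields a small-LTA. Second, the relation $\subset_{\S{small}}$ is reflexive and transitive. Reflexivity covers the zero-step case, since we may take $\eip' = \eip$ and use reflexivity of subtyping. Transitivity follows from transitivity of $<:$: {\sc T-Sub-Base} composes implications, and {\sc T-Sub-Arrow} is handled by the usual contravariant/covariant induction. Since {\sc m-lta} leaves $Q$, $\mathcal{F}$ and $Q_f$ unchanged and only replaces $\Delta$, the lemma then follows.

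For the single-step case, fix $(\delta_i,\delta_j) \in \mathcal{E}$ with targets $q_i, q_j$. The new transition set is $(\Delta \cup \Delta') \setminus \{\delta_j\}$, where $\Delta'$ redirects every transition that used $q_j$ as an incoming state so that it uses $q_i$ instead. Membership in $\mathcal{E}$ comes only from {\sc s-eq}, hence from {\sc s-trans}. The latter gives $\delta_{\bot} \notin \sqcap^{\psi_{<:}}_{\S{Semantics}}(\delta_i\blacktriangleright\S{type}, \delta_j\blacktriangleright\S{type})$. Unfolding Equation~\ref{eq:sem} and Equation~\ref{eq:sub}, this means the base types coincide and $\denotation{\Gamma} \wedge \denotation{\phi_i} \implies \denotation{\phi_j}$ holds for the consistent environment with $\mathcal{R}(\Gamma,\mathcal{A})$. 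By {\sc T-Sub-Base} (or {\sc T-Sub-Arrow}, inductively on {\sc SubType}), every term $t_i \in \denotationEdge{\delta_i}$ has a type that is a subtype of the type of any $t_j \in \denotationEdge{\delta_j}$.

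Now take any $\eip \in \denotation{\mathcal{A}}$. If the run of $\eip$ never uses $\delta_j$, then that run is still present, so $\eip' = \eip$ works. Otherwise, $\eip$ contains one or more subterms $t_j$ accepted via $\delta_j$. We build $\eip'$ by replacing each such subterm with some $t_i \in \denotationEdge{\delta_i}$. This $t_i$ exists because the semantic intersection is non-bottom, which requires $\delta_i$ to carry a real symbol. The parent transitions were redirected to $q_i$ by $\Delta'$, so $\eip'$ has a run in the new automaton. We then argue $\tau' <: \tau$ by induction on the context surrounding the replaced subterm. At an {\sc app} node, the argument constraint $\textsc{SubType}(q_a\blacktriangleright\S{type}, q_f\blacktriangleright\S{in})$ remains satisfied by transitivity, because the new argument type is smaller. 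The result type $\tau_2[\eip_2/x]$ changes only through the substitution, so we appeal to monotonicity of refinements in the substituted argument. At an {\sc if} node, each branch constraint is preserved by transitivity, and {\sc T-Subtype} closes the case. In the case where $\delta_j$ itself is the function position, contravariance of {\sc T-Sub-Arrow} gives the needed direction.

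The main obstacle is this congruence step, showing that replacing a subterm by a subtype-related one yields a subtype-related whole term. It is straightforward for {\sc if} and for argument positions under covariant use. It is delicate for the dependent substitution in {\sc T-App}: in general $\tau_2[t_i/x]$ need not be a subtype of $\tau_2[t_j/x]$ unless the refinements mention $x$ only through its type. I would first try to discharge this using ANF and the unique binding variables of the $\mathcal{R}$ construction. The substituted entity is then the fresh binder $v_\delta$, whose assumed type in $\Gamma$ is strengthened, so the substitution step reduces to a standard narrowing lemma: typing is preserved when the type of a context binding is replaced by a subtype. If narrowing is insufficient, I would restrict the claim to the subtype relation modulo the query, as in the \emph{small-LTA-modulo-query} definition.
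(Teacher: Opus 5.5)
The step you flag as delicate is where the proof breaks, and it cannot be repaired from the rules as given. In \textsc{T-App} the result type is $\tau_2[e_2/x]$, where $e_2$ is the actual argument. In ANF arguments are values, so when $\delta_j$ is a variable transition the variable itself is substituted into refinements. Take $f : (a:\mathsf{int}) \rightarrow \{\nu:\mathsf{int} \mid \nu = a+1\}$, $x : \{\nu:\mathsf{int} \mid \nu \geq 0\}$ and $y : \mathsf{int}$; this is exactly the $x$/$y$ merge of the paper's appendix example.
\begin{itemize}
\item \textsc{s-trans} gives $\delta_x \lesssim \delta_y$, so \textsc{m-trans} deletes $\delta_y$ and redirects the application's argument edge to $q_x$.
\item The term $f\,y : \{\nu:\mathsf{int} \mid \nu = y+1\}$ is accepted by $\mathcal{A}$, but $f\,x : \{\nu = x+1\}$ is not a subtype of it.
\item No transition targets $q_y$ any more, so no term of $\mathcal{A}_{\mathsf{min}}$ mentions $y$, and nothing there has a type below $\{\nu = y+1\}$.
\item The redirected transition typically even fails its own constraint $\theta.\textsc{SubType}(q_f\blacktriangleright\mathsf{out}, q_{\mathsf{app}}\blacktriangleright\mathsf{type})$ once $\theta = [x/a]$, so your claim that $e'$ has a run in the new automaton fails too.
\end{itemize}
Narrowing does not help. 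It changes the type of a fixed binder while keeping the term, whereas here the term itself changes from $y$ to $x$. For let-bound intermediate results, $\mathcal{R}$ gives $\delta_i$ and $\delta_j$ distinct fresh binders $v_{\delta_i} \neq v_{\delta_j}$, so there is no shared binder to narrow there either. Your modulo-query fallback also fails: with query $(x:\{\nu:\mathsf{int} \mid \nu \geq 0\}) \rightarrow (y:\mathsf{int}) \rightarrow \{\nu:\mathsf{int} \mid \nu = y+1\}$, $f\,y$ is a solution and $\mathcal{A}_{\mathsf{min}}$ has none. The paper's proof does not supply this step. It only considers $e \in \llbracket \delta_e \rrbracket$ or $e \in \llbracket q_e \rrbracket$ for the removed $\delta_e$ and concludes from the definition of $\lesssim$. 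That is only the root case of your argument; terms that use the removed transition below the root are silently skipped.

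Your context induction therefore needs an extra hypothesis that makes similarity a congruence. One option is to check $\delta_i \lesssim \delta_j$ under every substitution $\theta$ induced by a parent transition of $q_j$, re-validating each redirected constraint in $\Delta'$ and keeping $\delta_j$ when one fails. Another is to restrict $\mathcal{E}$ to transitions whose terms never occur inside refinements. The \textsc{app} case would then use that hypothesis in place of the monotonicity of $\tau_2[\cdot/x]$, which is false.

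Two smaller points also need fixing. First, a non-bottom $\sqcap^{\psi}_{\mathsf{Semantics}}$ only compares the type sub-automata $\delta_i\blacktriangleright\mathsf{type}$ and $\delta_j\blacktriangleright\mathsf{type}$. It does not give $\llbracket \delta_i \rrbracket \neq \varnothing$, so the existence of your replacement $t_i$ needs a separate non-emptiness assumption. Second, the small-subset definition requires $\llbracket \mathcal{A}_{\mathsf{min}} \rrbracket \subseteq \llbracket \mathcal{A} \rrbracket$. You never check this, and it is not automatic, since $\Delta'$ adds new transitions.

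Your reduction to single \textsc{m-trans} steps via reflexivity and transitivity of $\subset_{\mathsf{small}}$ is fine, and it is a cleaner skeleton than the paper's contradiction argument.
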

\begin{proof}
We prove this using contradiction, let us assume that it is not the case that $\mathcal{A}_{\S{min}}$ $\subset_{\S{small}}$ $\mathcal{A}$. 
\begin{itemize}
    \item By definition of $\subset_{\S{small}}$, $\exists \eip \in \denotation{\mathcal{A}}$, with $\mathcal{F}, \mathcal{A} \vdash \eip : \tau$, such that $\nexists \eip' \in \denotation{\mathcal{A}_{\S{r}}}$, such that $\mathcal{F}, \mathcal{A} \vdash \eip' : \tau'$ and $\tau' <: \tau'$.
    \item If $\eip \in \denotation{\mathcal{A}_{\S{r}}}$, it is a trivial case as the assumption is trivially false, proving the lemma.
    \item Let us say that $\eip \notin \denotation{\mathcal{A}_{\S{r}}}$, this could have happened only when {\sc s-trans} would remove some transition $\delta_e$, such that $\eip \in \denotationEdge{\delta_e}$ or $\eip \in \denotationNode{q_e}$, where $q_e$ is a target state of $\delta_e$.

    \item In both cases however, {\sc similarity} rules will ensure that $\exists \delta_e'$ such that 
    $\delta_e' \lesssim \delta_e$.

    \item Using the definition of $\lesssim$, we must have that $\forall \eip \in \denotationEdge{\delta_e}$ $\exists \eip'$ such that $\mathcal{F}, \mathcal{A} \vdash \eip : \tau$, then $\mathcal{F}, \mathcal{A} \vdash \eip' : \tau'$ and $\tau' : \tau.$

    \item This contradicts our assumption hence, $\mathcal{A}_{\S{min}}$ $\subset_{\S{small}}$ $\mathcal{A}$.
\end{itemize}
\end{proof}

\begin{lemma}[Prune-produces-small-subset-LTA]
\label{lem:prune-small}
Given a query $\varphi$, the {\sc Prune} ($\mathcal{A}$) routine  always produces an automaton $\mathcal{A}_{\S{r}}$
such that $\mathcal{A}_{\S{r}}$ $\subset_{\S{small}}^{\varphi}$ $\mathcal{A}$.
\end{lemma}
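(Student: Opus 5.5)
Following the pattern of Lemma~\ref{lem:minimise-small}, I would argue by contradiction on the definition of $\subset_{\S{small}}^{\varphi}$, combined with a structural case analysis on the pruning rules. Suppose $\mathcal{A}_{\S{r}} = \textsc{Prune}(\mathcal{A})$ is not a small-LTA-modulo-query for $\mathcal{A}$. Then there is some $\eip \in \denotation{\mathcal{A}}$ with $\mathcal{F}, \mathcal{A} \vdash \eip : \varphi$ such that no $\eip' \in \denotation{\mathcal{A}_{\S{r}}}$ has a type $\tau'$ with $\tau' <: \varphi$.

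If $\eip \in \denotation{\mathcal{A}_{\S{r}}}$, taking $\eip' = \eip$ gives a contradiction at once, by reflexivity of subtyping. So the real content is to show that every such $\eip$ survives pruning. I would prove the following invariant by induction on the number of {\sc p-trans} steps in the reduction $\mathcal{A} \vdash \Delta \rightsquigarrow^{\star} \Delta_{\S{r}}$: a single application of {\sc p-trans} removes from $\denotation{\delta}$ only terms $f\,\overline{t_i}$ that violate some atomic conjunct $\psi_j$ of the constraint of $\delta$. Since $\denotationEdge{\delta}$ already filters by $\psi$ (Fig.~\ref{fig:denotation}), such terms were never in the denotation, and pruning is therefore denotation-preserving, not merely ``small''.

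The induction step splits on the atomic rule used. For {\sc p-syn-eq}, I would use the standard fact about $\sqcap_{\S{Syntax}}$ from~\cite{tata,ecta}: the intersection of the transitions at $p_1$ and $p_2$ accepts exactly the subterms that are syntactically equal at both positions. Replacing $\delta \blacktriangleright p_1$ by $\Delta_r$ thus keeps every term that satisfies $p_1 = p_2$. For {\sc p-sem-ent}, I would unfold Equation~\ref{eq:sem}. A transition $\delta_1$ at $p_1$ is replaced by $\delta_\bot$ only if no $\delta_2$ at $p_2$ satisfies $\denotation{\Gamma} \wedge \denotation{\theta}.\denotation{\phi_1} \implies \denotation{\phi_2}$, and in that case no term routed through $\delta_1$ can satisfy $\theta.p_1 \vDash p_2$. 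Such terms are also exactly those that fail the {\sc T-Sub-Base} premise in a would-be typing derivation, consistent with Lemma~\ref{lemma:denotation-correctness}. Normalization of $\delta_\bot$ then removes only empty-denotation transitions, using $\denotationEdge{\delta_\bot} = \varnothing$.

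Given the invariant, $\eip \in \denotation{\mathcal{A}_{\S{r}}}$, which contradicts the assumption; hence $\mathcal{A}_{\S{r}} \subset_{\S{small}}^{\varphi} \mathcal{A}$.

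I expect the main obstacle to be the {\sc p-sem-ent} case. Equation~\ref{eq:sem} keeps $\delta_1$ if it is compatible with \emph{some} $\delta_2$, but the environment $\Gamma$ with $\mathcal{R}(\Gamma,\mathcal{A})$ and the substitution $\denotation{\theta}$ (Equation~\ref{eq:theta}) range over sets of symbols. One must check that the $\Gamma$ and $\theta$ chosen during pruning are at least as permissive as those in the typing derivation of $\eip$. Otherwise a valid entailment could be wrongly rejected. I would handle this by fixing the specific choice of function and argument transitions along the run of $\eip$, and showing that the corresponding single substitution is one of the disjuncts considered in $\denotation{\theta}$. The acyclicity restriction on constraints (Section~\ref{sec:cyclic}) guarantees that these positions resolve to finitely many transitions, so this choice is well defined.
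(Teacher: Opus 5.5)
Your invariant is correct for the transition $\delta$ whose conjunct $\psi_j$ is being enforced. If $\delta_1 \in \delta\blacktriangleright p_1$ has no compatible partner at $p_2$, no term of $\llbracket\delta\rrbracket$ uses $\delta_1$ at $p_1$, and the $\sqcap_{\S{Syntax}}$ case is the usual ECTA argument. The gap is the step from this to ``pruning is denotation-preserving'' for $\mathcal{A}$.

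The conclusion of \textsc{p-trans} is $\Delta[\delta\blacktriangleright p_1/\Delta_r]$. That rewrites the \emph{global} transition set, not just $\delta$'s children. Meanwhile the construction shares sub-automata heavily: \textsc{wf-prim} and \textsc{wf-pred} create one state per primitive type and per formula, and variable and argument states are reused by every \S{app}, \S{if} and \S{goal} transition. A transition rejected in $\delta$'s context may be needed in another context. Once it is replaced by $\delta_\bot$ and normalized away, its target state is empty \emph{everywhere}, so terms leave $\llbracket\mathcal{A}\rrbracket$ without violating any constraint.

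Two concrete instances:
\begin{itemize}
\item In the appendix library, take an \S{app} transition whose argument state contains \S{y}. The shared leaf for \S{true} fails both preconditions there, so \textsc{p-sem-ent} sends it to $\delta_\bot$. That empties every other unrefined type as well, e.g.\ an unrefined query result \S{int}, and with it the candidate \textsf{f x}.
\item Enforcing the syntactic conjunct $\S{i.type.t} = \S{j.type.t}$ of the \S{goal} transition deletes the shared base-type transition for $[a]$ as soon as $q_{\S{term}_k}$ holds some list-typed term. In the motivating example this removes the type of \S{z}, hence every candidate using \S{z}, including \textsf{splitAt y (drop x z)}.
\end{itemize}
So your induction would have to maintain the invariant for \emph{every} transition that reaches $\delta\blacktriangleright p_1$, and under the rule as written that fails.

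To close the gap, you must first fix (or prove) a \emph{local} reading of \textsc{p-trans}. The transitions in $\Delta_r$ should get fresh target states referenced only from $\delta$ at $p_1$, as in ECTA's constraint enforcement, so other users of $\delta\blacktriangleright p_1$ are untouched. Under that reading your argument goes through and yields the stronger, denotation-preserving statement. Your proposal is the paper's idea made explicit: its proof is only the one-line claim that \textsc{Prune} ``only reduces terms which could not be part of the solution'', which takes exactly this step for granted.

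A smaller point concerns $\theta$. Your fix reads $\llbracket\theta\rrbracket$ as a disjunction of single substitutions. Equation~\ref{eq:theta} instead defines it as the list of all actual/formal symbol pairs, used in one implication check. State your reading as an explicit assumption; otherwise the pruning check need not agree with the term-level check $f\,\overline{t_i} \vDash \psi$ used in the denotation.
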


\begin{proof}
This is by definition of {\sc Prune}, which only reduces terms which could not be part of the solution.  
\end{proof}

Finally, restating the completeness with a proof piggybacking on the above proven Lemmas.
\begin{theorem}[Completeness]
\label{thm:completeness}
    Given a type environment $\Gamma$ that relates library functions $f_i = \lambda (\overline{x_{i,j}}). e_{f_i}$ with their refinement types
$f_i : \overline{(x_{i,j} : \tau_{i,j})} \rightarrow \{ \nu :  t_i \mid \phi_{i} \} \in \Gamma$, and a synthesis query  $\varphi = \overline{(y_i : \tau_i)} \rightarrow \{\nu : t \mid \phi \}$, 
if {\sc LTASynthesize} ($\Gamma, \varphi, \S{k}$), = $\bot$, then $\nexists$
    a term $e \in \denotation{\mathcal{A}_{\S{complete}}}$ containing fewer than $k+1$ library function calls, such that
    $\Gamma \vdash$ $e$ : $\varphi$ and $\Gamma$ is consistent with $\mathcal{A}_{\S{complete}}$. Where $\mathcal{A}_{\S{complete}}$ is the complete LTA of size $k$, for the given $\Gamma$, generated without any reduction.
\end{theorem}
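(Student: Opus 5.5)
We argue by contraposition: assume some term $e \in \denotation{\mathcal{A}_{\S{complete}}}$ with fewer than $k+1$ library calls satisfies $\Gamma \vdash e : \varphi$, with $\Gamma$ consistent with $\mathcal{A}_{\S{complete}}$, and show that {\sc LTASynthesize} $(\Gamma,\varphi,k)$ cannot return $\bot$. The argument chains the three lemmas stated just above, as in the four-step outline preceding them.

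First, Lemma~\ref{lem:synthesize-complete} places $e$ in the language of the unreduced run. Concretely, $e \in \denotationNode{q_{\S{goal}}}$ for the final state introduced by {\sc Q-goal} in $\mathcal{A}_{\S{complete}}$. This is because $e$'s top-level type is a subtype of $\varphi$'s result, so the {\sc SubType} constraint on the \S{goal} transition holds.

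Second, we relate the reduced run to the unreduced one by induction on the recursion depth $d \le k$ of {\sc Enumerate}. Let $\mathcal{A}^{(d)}$ be the automaton after the $d$-th explore--prune--minimize round, and $\mathcal{A}^{(d)}_{\S{complete}}$ the corresponding round without reductions. The invariant is that $\mathcal{A}^{(d)}$ is a small-LTA-modulo-query for $\mathcal{A}^{(d)}_{\S{complete}}$. In words: for every well-typed term $e$ of depth $\le d$ that can occur as a subterm of a $\varphi$-solution, some $e' \in \denotation{\mathcal{A}^{(d)}}$ has a type $\tau' <: \tau$.
\begin{itemize}
\item The base case is $\mathcal{A}_0$, which is identical in both runs.
\item For the step, {\sc Transition} applied to $\mathcal{A}^{(d)}$ adds every transition that the typing rules would add over the surviving states. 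Lemma~\ref{lem:prune-small} and Lemma~\ref{lem:minimise-small} then each preserve the small-subset property. Since $\subset_{\S{small}}$ composes transitively (subtyping is transitive), the invariant carries to round $d+1$.
\end{itemize}

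The main obstacle is the exploration step. We must show that replacing a subterm $e_i$ by a representative $e_i'$ with $\tau_i' <: \tau_i$ inside a larger term $f(e_1,\dots,e_n)$ still yields a transition in the reduced automaton, with type below the original's. For an \S{app} node, {\sc E-app}'s first conjunct needs $\tau_a' <: \tau_{\S{in}}$, which follows from $\tau_a' <: \tau_a <: \tau_{\S{in}}$. The second conjunct needs the output refinement under the substitution $[e_a'/x]$ to be a subtype of the original under $[e_a/x]$. That step requires a substitution/narrowing lemma for the refinement type system, which I would state and prove separately. If the function itself is replaced by a subtype, {\sc T-Sub-Arrow} contravariance supplies the needed argument-side relation. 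For \S{if}, the guard's refinement may strengthen, and I would use that strengthening the path condition preserves the branch subtyping obligations. The argument is analogous for type and refinement instantiation.

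Finally, with the invariant at round $k$, the solution $e$ yields some $e' \in \denotation{\mathcal{A}^{(k)}}$ whose type $\tau'$ satisfies $\tau' <: \tau <: \varphi$. By {\sc T-Subtype}, the \S{goal} transition's {\sc SubType} constraint holds for $e'$, so $\denotationNode{q_{\S{goal}}} \neq \varnothing$ in $\mathcal{A}_{\S{min}}$. {\sc NEmpty} therefore succeeds at line~10 (or at line~2) before the depth bound triggers $\bot$, which contradicts the assumption. Consistency of $\Gamma$ with the reduced automaton transfers along the way, because the $\mathcal{R}$ relation only loses bindings of removed transitions, and the representatives retain their own.
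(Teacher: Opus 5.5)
Your skeleton matches the paper's. Its proof of this theorem is one sentence saying the result follows from Lemmas~\ref{lem:synthesize-complete}, \ref{lem:minimise-small} and~\ref{lem:prune-small}. You go further, and you are right to: those lemmas describe a single application of \textsc{Prune}/\textsc{Minimize}, and they only chain across rounds if \textsc{Transition} is monotone for the small-subset relation. The paper never states or proves that.

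\textbf{The gap.} The substitution/narrowing lemma you defer is false, so your inductive step fails. By {\sc T-App} the result type is $\tau_2[e_2/x]$, which depends on the argument \emph{term}, not only its type. Hence $\tau_a' <: \tau_a$ gives no relation between $\tau_2[e_a'/x]$ and $\tau_2[e_a/x]$.

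The paper's own appendix example exhibits this. There $\S{x} : \{\nu:\S{int} \mid \nu \geq 0\}$ and $\S{y} : \S{int}$ are judged similar, and {\sc M-trans} deletes the transition for $\S{y}$. Add $f : (a:\S{int}) \rightarrow \{\nu:\S{int} \mid \nu = a+1\}$ and the goal refinement $\{\nu:\S{int} \mid \nu = \S{y}+1\}$.
\begin{itemize}
\item $f\,\S{y}$ is a one-call solution in $\mathcal{A}_{\S{complete}}$.
\item After minimization the only replacement is $f\,\S{x} : \{\nu:\S{int} \mid \nu = \S{x}+1\}$, which is not a subtype of $\{\nu:\S{int} \mid \nu = \S{y}+1\}$.
\item No other term in the reduced automaton mentions $\S{y}$, so none has a type below the goal.
\end{itemize}
Your invariant therefore holds at round~$0$ and fails at round~$1$. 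As the rules are written, the search ends in $\bot$ even though a solution exists.

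\textbf{The \S{if} case.} If the guard's refinement $\phi$ is replaced by a stronger $\phi'$, the then-branch is checked under $\phi'$, which is easier. The else-branch, however, is checked under $\neg\phi'$, which is \emph{weaker} than $\neg\phi$. So the {\sc T-If} premise $\Gamma, \neg\phi' \vdash e_f : \tau$ need not hold.

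The missing piece is therefore not a routine metatheoretic lemma. As long as \textsc{Similarity} compares only declared types, a representative with a smaller type cannot in general be substituted into a dependent context. Both your argument and the paper's need two things:
\begin{itemize}
\item A stronger similarity relation. For example, compare selfified (singleton-strengthened) types, so that distinct variables and terms that can occur in refinements are never identified. Alternatively, forbid merging terms that appear in substituted argument positions or as guards.
\item A proof that \textsc{Transition} is monotone for that restricted relation.
\end{itemize}

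Separately, your closing remark that consistency transfers to the reduced automaton is not needed for the contradiction. It is also false under the paper's biconditional definition of consistency: a deleted variable stays in $\Gamma$ but no longer lies in any sub-automaton.
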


\begin{proof}
The proof for follows directly from the Lemma~\ref{lem:synthesize-complete}, Lemma~\ref{lem:minimise-small} and Lemma~\ref{lem:prune-small}.
\end{proof}

\subsection{Informal Argument for Termination}
{\sc LTASynthesize} (Algorithm 1) is guaranteed to terminate in a finite number of steps for a given max-depth k, and a starting minimized automaton of size m. 

The {\sc Transition} routine potentially increments the depth to m' >= m. Although {\sc Prune} and {\sc Minimize} can collapse states and transitions, and hence reduce the depth to a value less than m, we still guarantee termination by remembering pruned transitions.

{\sc Transition}, {\sc Prune}, and {\sc Minimize} are deterministic implementations of the rules in Figures 7 and 8 in the main paper. 
The implementation of these functions memoizes previously pruned transitions to avoid infinite cycles among these three procedures. Thus, together: using a) the memoization, b) given that there are finitely many library functions, and c) a finite maximum size k bounds the algorithm, we are guaranteed that Algorithm always terminates.

\pagebreak
\subsection{Evaluation}

\subsection{Benchmark Construction Example}
Consider the second benchmark \texttt{RevAppend} in Figure~\ref{tab:results-rq1}
The original Hoogle query is \texttt{[a] -> [a]}. This is represented in Hoogle+ as follows:

\begin{lstlisting}
stack exec -- hplus --disable-filter=False --json='{"query": "[a] -> [a]",  
 "inExamples": [],  
 "inArgNames": ["z"]}'
\end{lstlisting}

One natural refinement of this is to reverse the original list and append it to the original.
We can capture this using a refinement of the original query, using the I/O examples as follows in Hoogle+:

\begin{lstlisting}
stack exec -- hplus --disable-filter=False --json='{"query": "[a] -> [a]",  
 "inExamples": [{ "inputs": ["[1,2,3,4]"],  
                "output": "[1,2,3,4,4,3,2,1]"},  
                { "inputs": ["[1,3,5,6]"],  
                "output": "[1,3,5,6,6,5,3,1]"},  
                { "inputs": ["[\"abcd\"]"],  
                "output": "[\"abcddcba\"]"}],
 "inArgNames": ["z"]}'
\end{lstlisting}

Finally, we also create an analogous Hegel query, capturing the similar refinement using Hegel's specification language. We do this as follows using the \texttt{mem} and \texttt{ord} method predicates/Qualifiers.

\begin{lstlisting}
revApp : (z : [int]) -> 
    {v : [int] | \(u : int), (w : int). mem (u, v) = true => mem (u , z) /\
                        len (v) == len (z) +  len (z) /\
                        ord (u, w, z) = true => 
                        (ord (u, w, v) = true /\ ord (w, u, v) = true)
                        };
\end{lstlisting}

\subsection{Motivational Example}

Following we present the details of our main motivation example, without the library.

Following is our main Hegel/Synquid Query:

\begin{lstlisting}
goal : (x:int) 
-> (y : int) 
-> (xs : [a]) 
-> { v : ([a], [a]) | 
len (fst (v)) <= x 
/\ (len (snd (v)) $<=$ len (xs) - y \/ len (snd (v) = 0)) 
/\ \(u : a). mem (fst (v). u) = true => mem (xs, u) 
/\ \(u : a). mem (snd (v), u) = true => mem (xs, u)};
\end{lstlisting}

Following is an equivalent, refined query in Hoogle+ tool:

\begin{lstlisting}
 stack exec -- hplus --disable-filter=False --json='{"query": "Int -> Int -> [a] -> ([a], [a])", 
 "inExamples": [{ "inputs": ["1", "2", "[49, 62, 82, 54, 76]"], 
                "output": "([49],[82,54,76])"},
                { "inputs": ["2", "3", "[49, 62, 82, 54, 76]"], 
                "output": "([49,62],[54,76])"},
                { "inputs": ["3", "3", "[49, 62, 82, 54, 76]"], 
                "output": "([49, 62, 82],[54, 76])"}],
 "inArgNames": ["x", "y", "z"]}'
\end{lstlisting}

\subsection{Complete Results Table for RQ1}

{\scriptsize
\begin{figure}[H]
\begin{tabular}{| l  | l | l V{2} l | l | l | l |}
\hline
\textbf{Name} & \textbf{Original and Refined Queries Description} & \#$\wedge$/$\vee$ & \multicolumn{3}{|c|}{\textbf{Refined Time(s)/\#}} & \#S \\
\hline  
    &          &    & He & H+ & Sn &   \\
\hline
Nth1   &  Swap i and j indexes in l   &   4    & 5.1 &  39.4 & 21.3 & 4  \\  
Nth2   &  Swap i and j in reverse l & 3   & 7.6 &   &  22.7 & 5  \\  
Nth3   &  Increment values at, i, j, and swap  & 4  & 9.1 &   &  & 5   \\  
\hline
RevApp1	&  Reverse l and append to itself  & 3     &   5.1 &  49.5 &    & 3  \\  
RevApp2	&  Append l to itself and reverse   & 3 & 7.3 &  &   & 4   \\  
RevApp3  & Reverse l, append and reverse  & 4 & 5.4 &  &   &4   \\  
\hline
RevZip1  & Reverse f and, zip it with s   & 3 & 6.1 & 43.2 & 22.6 &4  \\  
RevZip2  & Reverse s and zip f with it  &  4 & 6.9 &   & 28.1 & 5   \\  
RevZip3  & Zip f and s, and reverse   &  4  & 8.5 &  &   &6   \\  
\hline
SplitAt	& Split l at index n; drop m element from second proj   & 4  & 6.8 &  & 32.1 & 5  \\  
SplitAt2 & Split l at index n; drop m element from first proj & 5  & 7.3 &  & 24.3 & 4   \\  
SplitAt3  & Split l at index n; drop (n-m) element from second & 4  & 7.8 &  &  & 5   \\  
\hline
Nth\_Incr1   & Return (n+1)th element from l  &  3  & 5.2 & 35.4 & 36.8 & 3  \\  
Nth\_Incr2	 & Return (n-1)th element from l &  4  & 7.4 & 56.3 &   & 5   \\  
Nth\_Incr3   & Return nth element from l and increment it. &  4  & 7.8 &  & 39.5 &  5   \\  
\hline
CEdge1     &  True if src and tgt both are in l& 4  & 6.4 &  & 21.5 & 4  \\ 
CEdge2    &   True if src or tgt is in l        & 4    &     5.2 &   &  23.7 & 3   \\  
CEdge3    &   True if src is, but tgt not in l& 5 & 8.6 &  &   &  5   \\  
\hline
AppendN1      & Append first n elements to l&  4& 6.0 & 47.7 &   & 4 \\  
AppendN2      & Append (n-1) elements to l& 4 & 10.4 & &    & $7^{\star}$   \\  
AppendN3      & Append last n elements to l&     4 & 7.5  & & 25.8 & 5  \\  
\hline
SplitStr1 &Split s at character c& 4 & 6.5 & &   & 5  \\  
SplitStr2 & Split s at character c and swap pair& 4  & 5.2 &  & 36.2 & 5   \\  
SplitStr3 &  Split s at c, and append c in the second proj & 6  & 7.1 & &   & 5   \\  
\hline
LookRange1 &Lookup k in the values&  4 &  8.2 &  & 43.6 & $6^{\star}$  \\  
LookRange2	& Lookup k in values, else return last value &  6 &   9.1 & &   & $7^{\star}$   \\  
LookRange3    & Lookup k in values, else return first value & 4  & 8.6 & &    & $7^{\star}$    \\  
\hline
$\S{Map1}^{\textnormal{\textdagger}}$ &Map f on l and increment each element & 5 &  7.5 &  &   & 6  \\ 
$\S{Map2}^{\textnormal{\textdagger}}$ &Map f on l and decrement each element & 6 & 5.2 & 45.1 & 34.1 & 4   \\  
$\S{Map3}^{\textnormal{\textdagger}}$ &Map f on l, increment each element, and reverse & 5 & 5.0 & 33.5 & 29.8 &  4   \\  
\hline
$\S{MapDouble1}^{\textnormal{\textdagger}}$ &Map f followed by g & 4 & 10.8 &  &   &5  \\  
$\S{MapDouble2}^{\textnormal{\textdagger}}$	& Map g followed by f  & 4  & 8.9 & &   &4  \\  
$\S{MapDouble3}^{\textnormal{\textdagger}}$ & Create composition of (f.g) and apply on l & 6  & 8.4 &  &  42.8 & 4   \\  
\hline
$\S{ApplyNAdd1}^{\textnormal{\textdagger}}$ &Apply f on m, n times and add m& 6 &7.9 &  &  & 5  \\  
$\S{ApplyNAdd2}^{\textnormal{\textdagger}}$	& Apply f on m, n+1 times and add m & 6 & 6.9 &63.1 &  34.2 & 5   \\  
$\S{ApplyNAdd3}^{\textnormal{\textdagger}}$ &Apply f on m, n+1 times and add m+1 & 6 & 9.1 & &  & 6   \\  
\hline
$\S{ApplyNInv1}^{\textnormal{\textdagger}}$ & Apply g and h on l, and filter using f& 5 & 9.5& &  & 5  \\  
$\S{ApplyNInv2}^{\textnormal{\textdagger}}$ &Filter l using f, and apply g and h&  5 & 7.6& & 32.4 & 5   \\  
$\S{ApplyNInv3}^{\textnormal{\textdagger}}$ & Apply g and h on l, filter using l, apply g and h. & 6 & 8.1 & &  55.6 & 5  \\  
\hline
$\S{ApplyList1}^{\textnormal{\textdagger}}$ &Apply head of fl on s& 4  & 12.3&  &   &$8^{\star}$ \\  
$\S{ApplyList2}^{\textnormal{\textdagger}}$ & Apply last element of fl on s& 4  & 10.5 & &    &$8^{\star}$  \\  
$\S{ApplyList3}^{\textnormal{\textdagger}}$ &Apply all $\S{f} \in \S{fl}$ in sequence.  & 5 & 7.9 &  & 42.6  & $6^{\star}$  \\  
\hline
\end{tabular}
\caption{Results for experiments with Refined Hoogle+ and ECTA benchmarks.}
\label{tab:results-rq1}
\end{figure}
}

\pagebreak

\subsection{Complete Results for RQ2}

{\scriptsize
\begin{figure}[H]
\begin{tabular}{| l | p{3cm} | >{\columncolor[gray]{0.8}}l | l | l | l | l | r | r | r | r | r |}
\hline
\textbf{Name} & Desc. & \#$\wedge$/$\vee$ &  \multicolumn{4}{|c|}{\textbf{Results Hegel}} & \multicolumn{4}{|c|}{\textbf{LTA \& SMT Stats}} & T(Sn) \\
\hline
 &  &  & T(He)  & \#C & \#B  & \#R & \#SMT & SMT(s) & $|Q|$ & $|Q|_{\min}$ &  \\
\hline
NLInsert     & Add a newsletter and user           & 6 & 22.7 & 16 & 2 & 31 & 110 & 11.12 & 779  & 212 & 126.2 \\
NLRemove     & Remove a newsletter and user        & 4 & 39.9 & 20 & 4 & 35 & 189 & 19.34 & 1201 & 372 & \_     \\
NLR\_Remove  & Read articles list and remove       & 5 & 42.8 & 19 & 4 & 21 & 213 & 18.20 & 1331 & 381 & \_     \\
NLInv        & Remove with uniqueness invariant     & 8 & 52.5 & 25 & 4 & 36 & 154 & 24.19 & 1398 & 435 & \_     \\
FWInsert     & Insert a normal device               & 4 & 31.2 & 15 & 2 & 33 & 166 & 12.34 & 945  & 298 & 124.6 \\
FWMkCentral  & Insert a central device              & 4 & 65.2 & 33 & 4 & 53 & 259 & 27.13 & 1611 & 401 &        \\
FWInsConn    & Insert a device connected to all     & 5 & 36.8 & 14 & 2 & 59 & 218 & 15.12 & 806  & 261 &        \\
FWInvert     & Invert the connections               & 4 & 33.9 & 14 & 4 & 47 & 197 & 15.70 & 1176 & 323 &        \\
FWInvertDel  & Delete, and invert connections       & 6 & 47.3 & 17 & 4 & 38 & 184 & 22.90 & 1352 & 421 &        \\
\hline
\end{tabular}

\caption{Results for tailored specification-guided synthesis benchmarks, The \#C and \#B gives the total number of function calls and branches in the synthesized solution. \#R gives the number of transitions \name{} merged during the Similarity reduction and Irrelevant code pruning phases.}
\label{tab:results-rq2}
 
\end{figure}
}

\subsection{Example Synthesis output for RQ2}
Given the query {\sf NLRRemove}, the challenge is to
synthesize a solution that maintains a specific contract associated
with each library function; these include the requirement that a) the
user must be unsubscribed before removal, b) if the user has not opted
for \textit{promotions}, the email for the user must be cleared, etc.
Figure~\ref{fig:synthesized} shows the synthesized program generated
by \name{} for this query.  Note that the solution includes a total of
19 function calls and exhibits complex control flows (4 branches).

 \begin{figure}[H]
  \vspace{-.15in}
\begin{minted}[fontsize = \scriptsize, escapeinside=&&,linenos]{ocaml}
(*nLRRemove : (n : nl) -> (u : user) -> 
  (d : {v: [nlrecord] | mem (v , n , u)}) -> 
  {v : (  f : article * s : [nlrecord]) |  
  mem (f, articles (s)) 
  &$\wedge$&  &$\neg$& nlmem (s, n, u) 
  &$\wedge$& (promotions (s, u) => email (s, u))
  }*)
fun n u d ->
    let x = read (d, n, u) in 
    let x0 = fst (x) in 
    let d0 = snd (x) in
    let d1 = confirmU (d0, n, u) in
    let x1 = promotions (d1, n, u) in 
    if (not (x1)) then 
        let subscribed = subscribed (d1, n, u) in 
        if (length (subscribed) > 0) then 
            let d2 = clear_email (d1, n, u) in 
            let d3 = unsubscribe (d2, n, u) in 
            let d4 = remove (d3, n, u) in 
            (x0, d4)
        else 
            let d5 = unsubscribe (d1, n, u) in 
            let d6 = remove (d5, n, u) in 
            (x0, d6)
            
    else 
        let d7 = unsubscribe (d1, n, u) in 
        let d8 = remove (d7, n, u) in 
        (x0, d8)
\end{minted}
\caption{Synthesized Program for NLR\_Remove}
\label{fig:synthesized}
\vspace*{-.5in}
\end{figure}
\pagebreak

\subsection{RQ3: Impact of irrelevant code pruning and similarity reduction}
Because RQ3 cuts across both set of benchmarks, we perform several
ablation experiments over the queries described in the
Figures~\ref{tab:results-rq1} and ~\ref{tab:results-rq2}.  We create
three variants of \name{}, \textit{viz.} (i) \textsf{Hegel(-P)}, a
LTA-based synthesis implementation without the irrelevant code
reduction (i.e. comment out the {\sc Prune} call at line 7 in
{\sc LTASynthesize} Algorithm, but retaining \textit{similarity
  reduction}; (ii) \textsf{Hegel(-S)}, a variant of \name{} with
support for pruning but \textit{without} similarity reduction (i.e.,
lines 8 and 9 in Algorithm are commented out);
and, (iii) {\sf Hegel(-All)}, a baseline variant that constructs the
LTA without performing any reduction (i.e., removes lines 7-9 in the
algorithm).  We compare these variants in terms of two main metrics,
overall \textit{synthesis times} and the size of the search space in
each case after the reduction, shown by \textit{number of program
  terms enumerated} during search, compared to the base-line ({\sf
  Hegel(-All)}.

\begin{figure}[H]

\centering 
\includegraphics[width=1.0\textwidth]{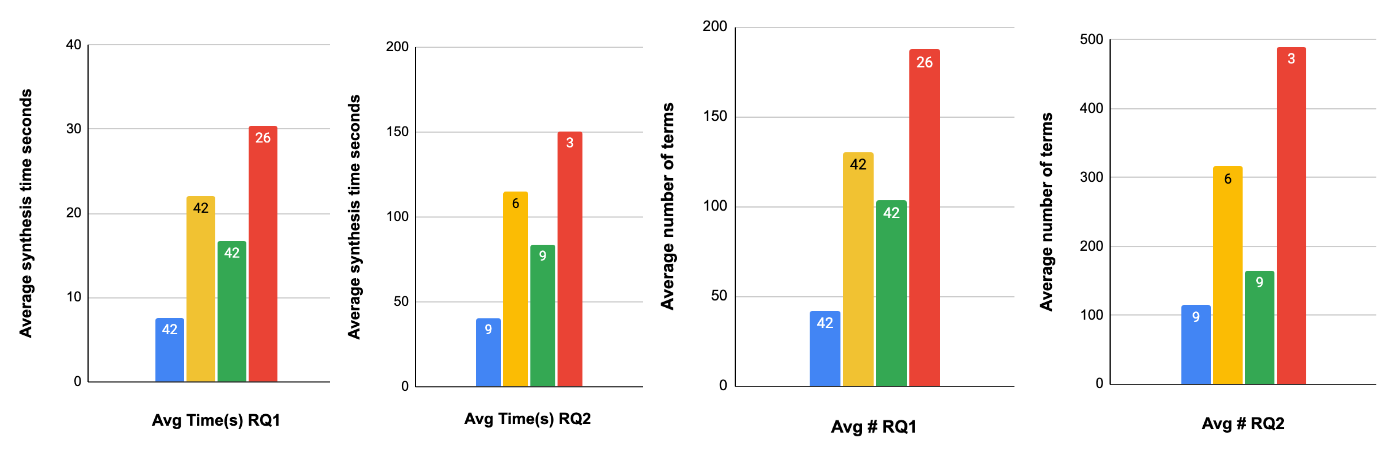}
\caption{Comparison of Hegel and its variants {\color{blue}{\sf
      Hegel}}, {\color{amber}{\sf Hegel(-S)}}, {\color{bargreen}{\sf
      Hegel(-P)}}, {\color{red}{\sf Hegel(-ALL)}}, on average synthesis times
  and the number of candidate terms generated on RQ1 and RQ2. The
  labels on each bar show the number of benchmarks solved by these
  variants out of 42 in the case of RQ1 and 9 in the case of RQ2.}
\label{fig:refined-time-rq1}
\end{figure}

The first two charts in Figure~\ref{fig:refined-time-rq1} show results for overall average
synthesis times across the two sets of benchmark queries described
earlier.  We note that both {\sf Hegel(-S)}, and {\sf Hegel(-P)} can solve
all queries from RQ1, but at a cost which is 2 - 3X greater than
\name{}. {\sf Hegel(-All)} on the other hand fails on almost half of the benchmarks.  
In contrast,
although {\sf Hegel(-P)} was also able to solve the full complement of
queries studied in RQ2, it did so with a considerable larger overhead
compared to \name{}, while here the the irrelevant code pruning ({\sf Hegel(-S)}) alone is insufficient to scale the variant to these challenging benchmarks and it fails to solve 3/9 benchmarks. 
The second pair of charts and show the average number of terms
enumerated by these different variants, showing the reduction of search space by each reduction strategy, with {\sf Hegel(-All)} as the baseline.  Here we see, with the combined reduction strategies, {\sf Hegel} sees the maximum search space reduction, while the other two variants {\sf Hegel(-P)} and {\sf Hegel(-S)} having much larger search spaces, (anywhere from 2-4.5X more) without necessarily solving the same number of queries.

%




\end{document}